\documentclass[a4paper,12pt,mathserif]{article} \usepackage{amsmath}
\usepackage{amsthm}
\usepackage{amsfonts}
\usepackage{bbm}
\usepackage{booktabs}
\usepackage{mathabx}
\usepackage{enumitem}
\usepackage{mathtools}
\usepackage{multirow}
\usepackage{diagbox}
\usepackage{tikz}
\usepackage{tikz-3dplot}
\usepackage{lineno}
\usepackage{qtree}
\usepackage{newfloat}
\usepackage{setspace}
\usepackage{wasysym}
\usepackage{subcaption}
\usepackage{thmtools,thm-restate}
\usepackage{epstopdf}
\usepackage{xr}
\usepackage[en-US]{datetime2}
\usepackage{geometry}
\usepackage[authoryear]{natbib}
\usepackage[max6]{authblk}
\usepackage{longtable}
\usepackage{cancel}

\usepackage{titlesec}

\titleformat{\subsection}[block]
  {\normalfont\large\scshape\centering}  {}                                       {0pt}                                    {}                                       
\usepackage{palatino}
\usepackage[sc]{mathpazo}
\usepackage{ragged2e} \newcommand{\floatnote}[1]{\vspace{.5\baselineskip}\begin{justify}\footnotesize \textbf{Note:} #1\end{justify}}

\setkeys{Gin}{width=1\textwidth}

\let\oldappendix\appendix\renewcommand{\appendix}{\oldappendix \renewcommand{\thesection}{A.\arabic{section}}}

\usepackage{amsopn}

\DeclareMathOperator*{\argmax}{arg\,max}

\newcommand{\indicator}{\mathbbm{1}}

 \newtheorem{pro}{Proposition}\newtheorem{dfn}{Definition}\newtheorem{thm}{Theorem}\newtheorem{lem}{Lemma}\newtheorem{cor}{Corollary}\newtheorem{ass}{Assumption}\newtheorem{rem}{Remark}\newtheorem{exa}{Example}

\newtheorem*{exa*}{Example}\newtheorem*{rem*}{Remark}\newtheorem*{hyp*}{Hypothesis}

\usepackage{float}
\newfloat{Algorithm}{H}{alg}[section]

\newcounter{program}

\newcounter{foc}

 \usepackage{xparse}

\newif\ifpaperabstract
\paperabstracttrue \DeclareDocumentEnvironment{myabstract}{m m}{
\par
\begingroup
\small
\itshape
\ifpaperabstract
\leftskip2em
\begin{onehalfspacing}
\else
\leftskip0em
\begin{singlespacing}
\fi
\rightskip\leftskip
\ifpaperabstract
\noindent \textbf{Abstract}
\fi
}
{
\ifpaperabstract
\\
\\
\noindent \textbf{JEL Codes:} #1 \\
\noindent \textbf{Keywords:} #2
\fi
\par
\normalsize
\ifpaperabstract
\end{onehalfspacing}
\else
\end{singlespacing}
\fi
\endgroup
}
 
\usepackage{layout}

\newcounter{hypothesis}

\makeatletter
\newcommand{\hyp@buildlabel}{\begingroup
  \count@=\value{hypothesis}\relax
  \advance\count@ by -1\relax
  \toks@={H_0}\loop\ifnum\count@>0
    \toks@=\expandafter{\the\toks@\noexpand^\prime}\advance\count@ by -1\relax
  \repeat
  \xdef\hyp@current{\the\toks@}\endgroup
}

\newenvironment{hypothesis}[1][]{\refstepcounter{hypothesis}\hyp@buildlabel
  \protected@edef\@currentlabel{\noexpand\ensuremath{\hyp@current}}\trivlist
  \item[\hskip\labelsep\bfseries Hypothesis \ensuremath{\mathbf{\hyp@current}}\if\relax\detokenize{#1}\relax\else\ \normalfont(#1)\fi.]\itshape\ignorespaces
}{\endtrivlist
}
\makeatother

\newcounter{rum}
\newenvironment{rum}{\refstepcounter{rum}\align}{\tag{R.\therum}\endalign}

\usepackage{amsmath}

\makeatletter
\g@addto@macro\normalsize{\setlength\abovedisplayskip{1mm}}
\g@addto@macro\normalsize{\setlength\belowdisplayskip{1mm}}
\makeatother

\setenumerate{label=(\roman*), itemsep=0mm, parsep=0mm}
\setitemize{label=(\roman*), noitemsep, nolistsep}
\sffamily
\usetikzlibrary{positioning,chains,fit,shapes,calc,arrows,backgrounds,patterns,matrix,tikzmark}
\DeclareFloatingEnvironment[fileext=lod]{diagram}

\xdefinecolor{orangeone}{RGB}{239,141,89}
\xdefinecolor{greenone}{RGB}{101,165,88}
\definecolor{Black}{RGB}{0,0,0}
\definecolor{Purple}{RGB}{128,0,128}

\definecolor{purple}{RGB}{147,51,234}
\definecolor{red}{RGB}{220,38,38}
\definecolor{blue}{RGB}{37,99,235}
\definecolor{grey}{RGB}{120,120,120}

\mathtoolsset{showonlyrefs}

\graphicspath{{./scarpfigures/}}

\usepackage[pdftex,
            pdfauthor={Stefan Hubner},
            pdftitle={Stefan Hubner: Testing Preference Stability},
            pdfsubject={It's complicated: A Nonparametric Test of Preference Stability between Singles and Couples},
            pdfkeywords={Collective Model, Matching, Random Utility, Stochastic Choice, Partial Identification},
            pdfproducer={pdflatex (texlive)},
            pdfcreator={pdflatex (texlive)}, hidelinks]{hyperref}

\DTMlangsetup{showdayofmonth=false}

\author[1]{Stefan Hubner}
\affil[1]{Department of Economics, University of Bristol\thanks{\href{http://api.hubner.info/api/dl/jmp}{\textit{Click}} to download the most recent version. The author (\href{mailto:stefan.hubner@bristol.ac.uk}{stefan.hubner@bristol.ac.uk}) would like to thank Arthur van Soest, Laurens Cherchye, Frederic Vermeulen, Stefan Hoderlein, Ian Crawford, Yuichi Kitamura, J\"org Stoye, John Quah, Martin Browning, Andrew Chesher, Monica Costa--Dias, Sami Stouli, and Pietro Spini for their helpful comments.}}

\makeatletter
\renewcommand*{\@fnsymbol}[1]{\ensuremath{\ifcase#1\or *\or \mathsection\or
\dagger\or \ddagger\or \mathparagraph\or \|\or **\or \dagger\dagger
\or \ddagger\ddagger \else\@ctrerr\fi}}
\makeatother

\newcounter{result}

\title{\vspace{-\baselineskip}\href{https://api.hubner.info/api/dl/jmp}{It's complicated: A Non--parametric Test of Preference Stability between Singles and Couples}}

\newcommand{\optionaltilde}[1]{#1} \newif\ifposter
\posterfalse
\makeatletter
\begin{document}

\onehalfspacing
\modulolinenumbers[2]
\maketitle
\vspace{-1\baselineskip}
\begin{myabstract}
{D12, D13, J12}
{Collective model, Preference Stability, Collective Axiom of Revealed Preference, Stochastic Choice, Random Utility, Matching}
This paper develops a method to use singles' data in a non-parametric revealed preference setting of collective household choice.
We use it to test the controversial assumption of preference stability between singles and couples, without data on intra-household allocation or marital transitions.
We show that, under the preference-stability hypothesis, consumption choices from an endogenously matched population admit a conditional random-utility representation over counterfactual pairings of couples and singles. 
Preference stability is testable as a feasibility restriction on the observed marginal choice distributions.
We reject the hypothesis using consumption data from the Dutch LISS, the Russian RLMS, and the Spanish ECPF panels.
\end{myabstract}

 \section{Introduction}
Measuring poverty levels, quantifying the effects of socio-economic policies on individuals, and understanding the mechanisms of individual decision making are pivotal challenges for economists and policymakers.  
Most of the relevant datasets, however, do not feature granular enough information to meet these challenges because a majority of individuals live in collective units, such as households or families.\footnote{
The collective model is the workhorse model in family economics with a long tradition dating back to \citet{Becker1965,Becker1981}, \citet{Gorman1976}, \citet{AppsRees1988}, \citet{Browning1994}, \citet{Browning1998}, \citet{Chiappori2006}, \citet{Chiappori2009}, \citet{Chiappori2012}.}
To open this black box, without observing information about resource sharing within the household, economists 
make two prevalent assumptions.\footnote{Cf. the seminal work of: \citet{BCL2013}, \citet{Lewbel2008}, \citet{LewbelLin2022}, \citet{LewbelPendakur2022,Lewbel2026} for identification of resource shares or equivalence scales based on single-person households, \citet{Mazzocco2013}, \citet{voena2015}, \citet{Gayle2016}, \citet{TheloudisEtAl2025}, \citet{LowMeghirPistaferriVoena2018} for identification in the context of inter-temporal models, and \citet{Chiappori2009a}, and \citet{Chiappori2020} for a survey on endogenous marriage market matching models.}
First, an individual's preferences do not depend on whether they are in a relationship or not.
Second, preference heterogeneity is largely described by two types: men and women. 

In this paper, we construct a test of the former and thoroughly relax the latter. 
Taking the information available in typical datasets as given, we will do so without observing any transitions between relationship states, i.e. marriage or divorce, and without observing more than aggregate household-level consumption choices. 
The test is fully non-parametric and allows for a heterogeneous population in order to avoid testing auxiliary restrictions.
Preference homogeneity is particularly restrictive in a model of collective decision making since it not only requires every individual to have the same preferences, but also assumes that any two individuals matched as a couple would arrive at the exact same sharing of resources.

In the presence of unobserved heterogeneity, stability requires that the individual preferences of partnered and unpartnered individuals are drawn from the same distribution.
The difficulty is that this distribution, as well as its realisations, is not only unobserved to the econometrician, it is also an equilibrium quantity that arises through matching.
Even if individual preferences remain unchanged upon entering or leaving a relationship, systematic differences between the single and partnered subpopulations can still arise through sorting into partnership.
Thus, we derive testable implications for the observed equilibrium marginal distributions of preference-induced demands of couples, single men, and single women.
Preference stability restricts how these marginal distributions can be jointly rationalised: there must be enough mass of each preference type among unpartnered individuals to match the preference composition of partnered individuals.

How can we test this when we only observe marginal distributions? 
We introduce \emph{configurations}: each configuration takes one couple and assigns to each spouse a counterfactual from the corresponding single population. It is \emph{preference-stable} when each spouse and their counterfactual have identical ordinal preferences.
We link the structural preference-stability restrictions on latent configurations to observed behaviour by mapping each configuration into household demand: the chosen bundles of couples, single men, and single women are the empirical objects through which the restriction is tested.
We show that, if the population has stable preferences, then the observed marginal demand distributions are compatible with some mixture over preference-stable configurations only.

To test the restriction on demand distributions whilst only observing one fixed, matched population, we introduce an auxiliary sampling device that generates hypothetical configurations through swapping individuals.
Formally, these swaps are permutations acting on the matching allocation and the corresponding household utilities.
Importantly, we show that the induced demands become conditionally i.i.d. under a mild \emph{anonymity} restriction on the matching mechanism and a weak dependence assumption about structural preferences: \emph{exchangeability}.
Based on this, we develop a test statistic and show that the large sample theory of \citet{KitamuraStoye2013}, the established standard for random utility models, can be applied.

We do not observe preferences or utilities directly, but rather the corresponding optimal demand in the form of continuous consumption bundles. 
Although we can recover preferences from continuous demand functions for a sufficient number of budgets, there would be an exploding number of configurations to consider, making any permutation test computationally challenging, even for small samples.
Thus, we propose to classify households into discrete types.
We define a \emph{single type} based on the equivalence relation induced by the generalised axiom of revealed preferences (GARP; \citealp{Afriat1967} and \citealp{Varian1982}), and, similarly, a \emph{couple type} based on the collective axiom of revealed preferences (CARP; \citealp{Cherchye2007,Cherchye2009,Cherchye2011}).\footnote{This is not restrictive, because any two household types are observationally equivalent if they are not distinguishable in terms of their preferences without an additional functional form restriction.}
Their inherent compatibility makes these axioms an effective modelling choice for our setup.
In order to identify revealed preference types which we combine into discrete configuration types, we make use of (short) panel data.

We apply our test to three popular datasets: the \emph{Dutch Longitudinal Internet Studies for the Social Sciences (LISS)}, the \emph{Russian Longitudinal Monitoring Survey (RLMS)}, and the \emph{Spanish Continuous Family Expenditure Survey (Encuesta Continua de Presupuestos Familiares, ECPF)} used by \citet{Cherchye2012}, \citet{Cherchye2011}, and \citet{Adams2014}, respectively, in the context of the collective model.
We consistently reject the hypothesis of preference stability across these datasets and different specifications.

The approach we develop in this paper can be contrasted with the literature on testing preference restrictions in a continuous setting, which is typically based on the Slutsky matrix and, thus, requires estimation of household demands and their derivatives. 
In their seminal work, \citet{Browning1998} construct a test of collective rationality based on a parametric almost ideal demand system with additive measurement errors. 
Similarly, \citet{brugler2016} estimates a parametric quadratic ideal demand system \citep{Banks1997} in a setting without preference heterogeneity and compares the parameter estimates for single men, single women and couples to draw conclusions about preference stability.
While almost ideal demand systems provide a flexible parametric form allowing for easy testing of parameters, both the potential for misspecification and the restrictions imposed on preferences to ensure additive separability of the errors, are problematic.
To allow for non-separability, and thus a larger class of preferences, \citet{Hubner2015} develops a collective random utility model and derives conditions for non-parametric identification of random utility and Pareto weights by showing global invertibility of demands, under the assumption of observed private consumption.  
Further, under the preference stability assumption, \citet{LewbelLin2022} show identification of a semi-parametric model with heterogeneous structural preferences and a general functional form assumption. \citet{botosarumurispendakur2023}, \citet{HsiehLewbelPendakur2024} incorporate explicit preference and sharing heterogeneity into collective demand systems, and \citet{ChiapporiMeghirOkuyama2025} estimate dynamic collective models allowing for unobserved preference heterogeneity and evolution across life-cycle stages.
While part of the literature has departed from the preference stability assumption in favour of functional form restrictions, such as \citet{DLP2013,DLP2021}, \citet{Lechene2019}, \citet{Calvi2020} who use preference similarity, or \citet{SokulluValente2022} who use a panel, combining singles data with couples data provides a strong form of identification, particularly in a non-parametric setting.

The use of singles data in the context of the revealed preference characterisation of the collective model \citep{Cherchye2007, Cherchye2009} is novel.
The advantage of a revealed preference based approach over the continuous approaches outlined above is the option of modelling unobserved preference heterogeneity without requiring global invertibility of demands.
This bypasses the need for ad-hoc functional form assumptions in favour of testable choice-based restrictions.
Stochastic revealed preference settings have been studied in the context of the unitary consumption model. 
\citet{Hoderlein2014} consider the weak axiom of revealed preference in the unitary model.
Observing the same population in different price regimes, as repeated cross-sections, they use copula bounds (Frechet-Hoeffding) on the probability that the population violates the weak axiom of revealed preferences.
\citet{KitamuraStoye2013,Deb2017} integrate this approach into the stochastic choice framework of \citet{McFadden1991} and \citet{McFadden2005} by partitioning budget sets into patches using the strong axiom of revealed preference. 

While, conceptually, our approach is very different, the final test statistic is closely related to theirs. 
We show that the large sample theory of \citet{KitamuraStoye2013} applies to our theory which extends random utility to also incorporate random matching through exchangeable configurations.
There is a range of recent contributions targeting the computational complexity of this class of problems, most prominently \citet{SmeuldersCherchyeDeRock2021}, \citet{AguiarKashaev2021}, \citet{KoidaShirai2024}, \citet{Turansick2025}.
We contribute to this literature by introducing a fast, parallel non-negative least squares algorithm which leverages the sparsity of the problem.\footnote{Haskell code is available as a GitHub repository and a simulation study to evaluate speed, finite sample size, and power of the test statistic can be found in Appendix \ref{sec:mc}.} 

We proceed as follows. Section \ref{sec:matching} introduces a theoretical device that allows us to define utilities under counterfactual assignments which we use to define configurations. Section \ref{sec:model} defines heterogeneous preferences as random variables, and their dependence in the population. It proceeds by defining latent configurations and develops the necessary theory relating them to observed continuous choices. Falsifiability of preference stability is discussed through the lens of the Slutsky matrix. Section \ref{sec:inference} then operationalises the theory by introducing the discrete characterisation of choices and configurations, based on revealed preference restrictions on observed demands. It develops a test statistic based on a reformulation of the restrictions as a semi-definite, quadratic programme.
Section \ref{sec:results} provides empirical results and discusses robustness and extensions, including endogeneity of total expenditures, public goods, an alternative characterisation of collective households, and a heterogeneity analysis. All proofs can be found in Appendix \ref{app:proofs}.

 \section{Factual and Counterfactual Assignment}\label{sec:matching}
The aim of this section is to put singles and couples into one common framework. This is needed because singles are observed on their own, while couples are observed only through joint household choices. Once both are described in the same way, we can compare an observed couple with the counterfactual couple formed by replacing both partners with singles. 

\subsection{A Four-Individual Population}

We begin with the smallest economy that contains the comparison at the heart of the paper.
It has only four individuals which we name: Apollo, Athena, Zeus, and Hera.
Let Apollo be \emph{the} single man and Athena be \emph{the} single woman.
Both are unitary households.
Let Zeus assume the role of \emph{the} married man and Hera the role of \emph{the} married woman. They also form a household together.

Ruling out the trivial case, in which they all share the same preferences, resulting in all three households being unitary, there are two remaining scenarios.
First, everyone has their own distinct preferences.
In this case we can think of Zeus \& Hera as a collective household.
Second, Apollo and Zeus as well as Athena and Hera share the same set of preferences, respectively, but the two sets could differ.
This distinction is what we study in this paper.

In a setting where each individual derives utility from consumption, the difficulty arises from what is observed in standard datasets.
For Apollo and Athena, we observe their respective household consumption, which, under standard conditions and sufficient price variation, allows us to directly recover their utilities up to an ordinal transformation \citep{Hurwicz1971}.
By contrast, we only observe Hera and Zeus as a couple, so we only know that their household's consumption maximised their collective utility.
However, both Hera's and Zeus' individual utilities remain unidentified \citep{Chiappori2006,Chiappori2009}.
Hence, in order to test preference stability, we cannot directly compare Zeus with Apollo or Hera with Athena.

Instead, we proceed indirectly. We start from the observed couple, Zeus and Hera, and use their choices across price regimes to verify that their behaviour is consistent with collective rationality.
Preference stability demands that moving from singlehood to partnership can only rescale the cardinality of utilities.
We therefore ask whether Zeus and Hera's observed choices would remain collectively rational if their preferences were replaced by the ones revealed by Apollo and Athena, respectively. 
If this breaks collective rationality, the four-tuple described by the two exchanges (Zeus, Hera, Apollo, Athena), which we call a \emph{configuration}, is not preference-stable, since at least one spouse's ordinal preferences must differ.

\subsection{A Countable Population}\label{subs:extended}
As we move to the general case, the economy consists of heterogeneous men and women, and each couple is a one-to-one match between any two individuals from opposite sides of the matching market. 
We represent the population by an assignment graph which we later permute to construct configurations. 
To achieve this, the assignment graph must not only connect partnered individuals but also singles. 
We therefore define an extended population with an assignment matrix in which each single is paired with a distinct dummy partner.\footnote{This is related in spirit to the dummy types used in matching models to absorb unmatched mass. Here, because the problem is formulated as an assignment at the individual level, each single requires a separate dummy counterpart.} 
This representation lets us describe preferences and choices for both observed and counterfactual assignments. Corollary \ref{pro:unitaryrep} later shows that any unitary utility admits such a representation.

Let the population of individuals on each side of the market be indexed by $ \mathbb{N} $ and partitioned into three countable subsequences denoted $ \mathbb{N}_2 $, $\mathbb{N}_1 $, $ \mathbb{N}_0 $.
They represent individuals currently living in a couple, single individuals, and dummies, respectively.\footnote{Formally: $ \mathbb{N}_k = \{ n \in \mathbb{N} : n \text{ mod } 3 = k \} $, rearranged to form blocks as in Figure \ref{fig:partition} (r.h.s.).}
Without loss of generality we arrange the data, leading with couples $ \mathbb{N}_2 $, followed by singles $ \mathbb{N}_1 $, and dummies $ \mathbb{N}_0 $. 
We represent matches as the matrix $ {M} = \{ {m}_{ij}\}_{i,j \in \mathbb{N}}  $ with entry $ {m}_{ij} = 1 $ if man $ i $ is matched with woman $ j $, and $ 0 $ otherwise.
Visualised in Figure \ref{fig:partition} (l.h.s.), couples  (\color{purple}purple\color{black}) form the leading block $ \mathbb{N}_2 \times \mathbb{N}_2 $.
Each single is matched with exactly one dummy individual from the other side: $ {\mathbb{N}_1 \times \mathbb{N}_0 } $ for the single men (\color{blue}blue\color{black}), and $ \mathbb{N}_0 \times \mathbb{N}_1 $ for single women (\color{red}red\color{black}). 

\begin{figure}[h!]
  \begin{center}

\caption{Block structure of $\mathbb{N} \times \mathbb{N}$ extended assignment matrix (l.h.s.) with finite sample counterparts (r.h.s.).}

\begin{minipage}[t]{0.52\textwidth}
  \vspace{0pt}
\resizebox{\textwidth}{!}{\begin{tikzpicture}[
    label/.style={font=\large\bfseries}
]

\definecolor{purple}{RGB}{147,51,234}
\definecolor{red}{RGB}{220,38,38}
\definecolor{blue}{RGB}{37,99,235}
\definecolor{grey}{RGB}{120,120,120}

\def\cellsize{3.5}
\def\spacing{0.1}

\foreach \i in {0,1,2} {
    \foreach \j in {0,1,2} {
        \coordinate (c\i\j) at ({\j*(\cellsize+\spacing)}, {-\i*(\cellsize+\spacing)});
    }
}

\draw[purple, ultra thick] (c00) rectangle +(\cellsize,\cellsize);
\node[align=center] at ($(c00)+(\cellsize/2,\cellsize/2)$) { Factual \\ couples\\ Zeus \& Hera};

\draw[purple, very thick, dashed] (c01) rectangle +(\cellsize,\cellsize);
\node[align=center] at ($(c01)+(\cellsize/2,\cellsize/2)$) { Counterfactual \\  couples  ($\tau^f$) \\ Zeus \& Athena };

\draw[blue, very thick, dashed] (c02) rectangle +(\cellsize,\cellsize);
\node[align=center] at ($(c02)+(\cellsize/2,\cellsize/2)$) { Counterfactual \\  single men  ($\tau^m$) \\ Zeus   };

\draw[purple, very thick, dashed] (c10) rectangle +(\cellsize,\cellsize);
\node[align=center] at ($(c10)+(\cellsize/2,\cellsize/2)$) { Counterfactual \\  couples  ($\tau^m$) \\ Apollo \& Hera };

\draw[purple, very thick, dashed] (c11) rectangle +(\cellsize,\cellsize);
\node[align=center] at ($(c11)+(\cellsize/2,\cellsize/2)$) { Counterfactual \\  couples  ($\tau^m, \tau^f $ ) \\ Apollo \& Athena };

\draw[blue, very thick] (c12) rectangle +(\cellsize,\cellsize);
\node[align=center] at ($(c12)+(\cellsize/2,\cellsize/2)$) { Factual \\  single men \\ Apollo };

\draw[red, very thick, dashed] (c20) rectangle +(\cellsize,\cellsize);
\node[align=center] at ($(c20)+(\cellsize/2,\cellsize/2)$) { Counterfactual \\  s. women ($\tau^f$) \\ Hera };

\draw[red, very thick] (c21) rectangle +(\cellsize,\cellsize);
\node[align=center] at ($(c21)+(\cellsize/2,\cellsize/2)$) { Factual \\  single women \\ Athena };

\draw[grey, very thick, dotted] (c22) rectangle +(\cellsize,\cellsize);
\node[align=center, text=grey] at ($(c22)+(\cellsize/2,\cellsize/2)$) { dummy \\  couples \\ };

\node[label, left=0.3cm of c00, anchor=east] at ($(c00)+(0,\cellsize/2)$) {$\mathbb{N}_2$};
\node[label, left=0.3cm of c10, anchor=east] at ($(c10)+(0,\cellsize/2)$) {$\mathbb{N}_1$};
\node[label, left=0.3cm of c20, anchor=east] at ($(c20)+(0,\cellsize/2)$) {$\mathbb{N}_0$};

\node[label, above=0.3cm of c00, anchor=south] at ($(c00)+(\cellsize/2,\cellsize)$) {$\mathbb{N}_2$};
\node[label, above=0.3cm of c01, anchor=south] at ($(c01)+(\cellsize/2,\cellsize)$) {$\mathbb{N}_1$};
\node[label, above=0.3cm of c02, anchor=south] at ($(c02)+(\cellsize/2,\cellsize)$) {$\mathbb{N}_0$};

\end{tikzpicture}

 }
\end{minipage}\hfill
\begin{minipage}[t]{0.47\textwidth}
  \vspace{20pt}
\resizebox{\textwidth}{!}{

\begin{tabular}{@{}cl|cc|cc|ccc@{}}
\toprule
& f & \multicolumn{2}{c}{\footnotesize{$\mathbb{N}_2$}} & \multicolumn{2}{c}{\footnotesize{$\mathbb{N}_1$}} & \multicolumn{3}{c}{\footnotesize{$\mathbb{N}_0$}} \\
m &  & {\scriptsize 2} & {\scriptsize 5} & {\scriptsize 1} & {\scriptsize 4} & {\scriptsize 0} & {\scriptsize 3} & {\scriptsize 6} \\
\midrule
\multirow{2}{*}{\footnotesize$\mathbb{N}_2$} 
  & \scriptsize{2}  & \tikzmark{tl}\color{red}{\xcancel{1}} &   &   &   & \color{blue}{1}  &   &   \\
  & \scriptsize{5}  &   & 1 &   &   &   &   &   \\
\cmidrule{2-9}
\multirow{3}{*}{\footnotesize$\mathbb{N}_1$} 
  & \scriptsize{1}  &   &   & \color{blue}{1}  &   & \color{red}{\xcancel{1}} &   &  \\
  & \scriptsize{4}  &   &   &   &   &   & {1} &  \\
  & \scriptsize{7}  &   &   &   & \tikzmark{br}  &   &   & 1 \\
\cmidrule{2-9}
\multirow{2}{*}{\footnotesize$\mathbb{N}_0$} 
  & \scriptsize{0}  & \color{blue}{1} &  & \color{red}{\xcancel{1}} &  &   &   &   \\
  & \scriptsize{3}  &   &   &   & 1 &   &   &   \\
\bottomrule
\end{tabular}
\begin{tikzpicture}[overlay,remember picture]
\draw[black, thick, dashed] ([shift={(-0.7ex,1.95ex)}]pic cs:tl) rectangle ([shift={(1.25ex,-0.75ex)}]pic cs:br);
\end{tikzpicture}
 }
\end{minipage}

\label{fig:partition}
\end{center}
\footnotesize \textbf{Note:} The figure depicts the block structure of the population representation. Men are represented by rows, women by columns. On the left hand side, each solid box represents a factual population block, dotted blocks are counterfactuals. The right hand side shows a population with two couples, three single men, and two single women before and after the transpositions $ \tau^f = \tau^m = (1\; 2) $. The removed links are \color{red}{red} \color{black} and the new ones \color{blue}{blue}\color{black}. Their respective unions with the unaffected black links make up the factual and counterfactual assignment, respectively (both are permutations). The dotted box shows the assignment matrix without dummies (not a permutation). Example \ref{exa:transposition} in Appendix \ref{app:examples} describes the four individual scenario from above within this framework.
\end{figure}

Due to the artificial matches between singles and dummies, each individual is matched \emph{exactly once}.
As a consequence, every observed assignment can be represented as a permutation $ \sigma $, a one-to-one map from $ \mathbb{N} $ to itself, and we can write assignments as $ m_{ij} = \delta_{j,\sigma(i)} $.\footnote{The function $\delta_{ij}$ is the Kronecker delta which takes the value one if $ i = j $ and zero otherwise.}
This allows us to write any counterfactual assignment as a function composition with any permutation of the original assignment (see e.g., \citealp[Chapter 3]{rotman1994}).
Here, the only counterfactual assignments with empirical content are exchanges of partnered individuals from $ \mathbb{N}_2 $ with singles from $ \mathbb{N}_1 $. 
\begin{dfn}[Transposition]
Let $ S_\infty $ be the set of permutations over $  \mathbb{N} $. 
A transposition $(i\ i') \in S_\infty$ is the permutation that swaps $i$ and $i'$ and fixes all other indices. Let $ \mathcal{T} $ be the set of any pairs of transpositions defining the exchanges between partnered individuals $i \in \mathbb{N}_2$ and singles $i' \in \mathbb{N}_1$.
\end{dfn}

This formalises the partner replacement from the thought experiment above as a \emph{joint action} on the extended assignment matrix: $ (\tau^m, \tau^f) \cdot M \equiv \{m_{\tau^m(i), \tau^f(j)}\}_{i,j\in\mathbb{N}} $ for transpositions $ (\tau^m, \tau^f) \in \mathcal{T} $.\footnote{An action in our setting is a map over the indices of a matrix $ \varsigma \cdot ij $ satisfying (i) identity: $ \text{id} \cdot ij = ij $ and (ii) compatibility: $ \sigma \cdot (\varsigma \cdot ij) = (\sigma \circ \varsigma) \cdot ij $  for $ \text{id},\sigma,\varsigma \in S_{\infty} $, see \citet[Chapter 9]{rotman1994}.}
This is a simultaneous \emph{row-column exchange} which leads to the new permutation $ \tau^f \circ \sigma \circ \tau^m $.\footnote{This is, because $ m_{ij}' = m_{\tau^m(i), \tau^f(j)} = \delta_{\tau^f(j), \sigma(\tau^m(i))} = \delta_{j, (\tau^f)^{-1}(\sigma(\tau^m(i)))} = m_{i, (\tau^f \circ \sigma \circ \tau^m)(i)} $.}
This can be read (right to left) as: for any man, $ \tau^m $ finds which position he occupies after the swap, $ \sigma $ then looks up who that position was originally matched to, a woman, to whom we apply $ \tau^f $. 

This row-column exchange is a relabelling of the edges in the matching graph. 
It serves two roles. First, it allows us to represent any counterfactual assignment structure. Each element represents a configuration. 
Second, it allows us to define preferences for both factual and counterfactual couples.
We require the following assumption about matching.
\begin{ass}[Anonymity]
\label{ass:matching} 
Let $ M $ be the outcome of a matching allocation mechanism based on some couple-specific variables $ Z = \{z_{ij}\}_{i,j \in \mathbb{N}} $, possibly unobserved to the econometrician. 
For any permutations $ \varsigma^m, \varsigma^f \in S_{\infty} $ we have
$ M( (\varsigma^m, \varsigma^f) \cdot Z) = (\varsigma^m, \varsigma^f) \cdot M(Z) $.
\end{ass}
This is a permutation-equivariance assumption and restricts matching so that it is not based on identities.
It states that if Zeus and Apollo exchanged characteristics, then Hera would have matched with Apollo instead of Zeus.
In other words, individuals only care about the characteristics of their partner, not their identity.
A violation would undo any forthcoming assumption that restricts dependence between individual's preferences. 
We show in Appendix \ref{lem:matching} that the solution to the finite assignment problem  $ M(z) = \argmax_{\sigma \in S_n} \sum_{i,j \in \mathbb{N}} \delta_{j\sigma(i)} \Phi(z_{ij}) $ \citep{Galichon2021} where matching is based on joint surplus $ \Phi(z) $, as in \citet{Becker1973}, satisfies this property.

 \section{Preference Stability in the Population}\label{sec:model}

In the previous section we introduced a language to define factual and counterfactual households purely from a matching perspective.
In this section, we introduce preferences for both factual and counterfactual households and formulate the economic content of the preference stability hypothesis.
We argue that, even if all individuals draw preferences from the same environment, equilibrium matching can sort them into single and partnered subpopulations which we call \emph{pools}.
Preference stability is therefore not a statement about equality household by household, but about the composition of preferences in these pools.
We show how this restriction on latent distributions can be translated into one based on observable demand distributions, through a random utility and matching representation.

\subsection{Model}\label{subs:primitives}
We now define our random utility primitives, based on an environment of unobserved preferences. We describe the class of utility models we consider and give a representation of household utility for both singles and couples.

\begin{ass}[Unobserved Heterogeneity]
  \label{ass:polish}
Let the environment of unobserved ordinal preferences be represented by a random element $ \omega \equiv (\omega^m, \omega^f) \in \Omega \equiv \mathbb{X}^\mathbb{N} \times \mathbb{X}^\mathbb{N} $ where $ \mathbb{X} $ is a Polish space.
Let a couple $ (i, j)$'s preferences be the element $ \omega_{ij} = (\omega^m_i, \omega^f_j) $.
For dummy individuals $ i, j \in \mathbb{N}_0 $, we normalise preferences to $ \omega^m_i = \omega^f_j = \omega^0 $.
\end{ass}

To accommodate any utility representation, we choose $ \mathbb{X} $ to be Polish, a space general enough for this task, while structured enough so that all relevant results for random variables carry over to this space, if endowed with the standard Borel sigma algebra $\mathcal{B}$.\footnote{Indeed, a Polish space is a \emph{separable}, \emph{complete}, \emph{metricable} space that induces the standard Borel sigma algebra, allowing standard conditioning statements \citep[Thm 5.3]{Kallenberg1997}, weak convergence of measures defined on it \citep[Thm 14.3 \& Thm 14.5]{Kallenberg1997}, and \citet{definetti1931} type representation theorems which we require below \citep{HewittSavage1955}. The product space $ \mathbb{X}^3 $ is, by definition, also a Polish space \citep[Lemma 1.2]{Kallenberg1997}} We let $ \mu $ be the corresponding probability measure.

The sample space $ \Omega $ allows for dependence across individuals.
We require this, because we only observe a sample from a \emph{fixed, endogenously matched} population.
Thus, we \emph{cannot} rely on independent sampling of $ \omega_{ij} $ from some common distribution. 
What matters for our purposes is a weaker form of independence. In particular, a symmetry assumption that requires that before matching, names carry no intrinsic economic information. Assumption \ref{ass:dependence} formalises this.\footnote{Indeed, since our test is based on choice frequencies, which we get from pushing the random preferences defined in Assumption \ref{ass:polish}, through the household's optimal choice rule, we \emph{need not} consider all possible events in $ \mathcal{B}(\Omega) $, it is sufficient to only consider $ \mathcal{I}_0 $ based on symmetric subsets of $ \Omega $.}

\begin{ass}[Exchangeability of the Preference Environment]
  \label{ass:dependence}
  Finite relabellings of real, non-dummy individuals $ \varsigma^m, \varsigma^f \in G_0 \equiv \{ \varsigma \in S_{\infty} : \varsigma(\mathbb{N}_0) = \mathbb{N}_0 \} $ preserve the joint distribution of household preferences.
  Formally, $(\omega^m, \omega^f ) \stackrel{d}{=} (\varsigma^m \cdot \omega^m, \varsigma^f \cdot \omega^f) $.\footnote{The symbol $ \stackrel{d}{=} $ refers to \emph{equality in distribution}: $ X \stackrel{d}{=} Y $ iff $ P(X \in A) = P(Y \in A) $ for all $ A $.} 
\end{ass}

In Assumption \ref{ass:polish} we impose a normalisation to remove economically irrelevant heterogeneity due to the arbitrary assignment of single individuals to dummy indices.
Consequently, we only look at permutations $ G_0 $ that leave those unaffected.
Assumption \ref{ass:dependence} tells us that being called Hera or Athena, respectively Zeus or Apollo, has no bearing on the realisation of preferences.
Thus, ex-ante, prior to matching, this immediately permits a \citet{definetti1931} interpretation of non-dummy individual preferences in which we may think of nature drawing a ``population-level'' distribution for the individual unobserved preferences $ \bar{\mu}_0 \equiv (\bar{\mu}_0^m, \bar{\mu}_0^f) $.\footnote{Measure $ \bar{\mu}_0 $, obtained by conditioning on $ {\mathcal I}_0 \equiv \sigma({\mathcal I}_0^m, {\mathcal I}_0^f) $, can distinguish only symmetric events.}
Consider the three distinct populations from the empirical section: Netherlands, Spain, and Russia.
What this says is that, for each of them, nature first draws the population-level composition of preferences, which we may think of as the institutional and cultural environment that leads to the formation of preferences. 
Then, conditional on these compositions, each individual draws i.i.d. preferences $ \omega^m_i $ and $ \omega^f_j $ from $ \bar{\mu}_0^m $ and $ \bar{\mu}_0^f $, respectively.
Unconditionally, preferences need not be independent, since all agents belong to the same realised market, making it a weaker requirement than unconditional independence.\footnote{One example where this may fail is, if the name entails information which is not otherwise accounted for, e.g. belonging to a certain local matching market. To mitigate this, in the empirical section, we also consider hierarchical specifications using observed demographics.} 

With this, we can now  relate the unobserved heterogeneity, defined as random variables in Assumptions \ref{ass:polish}-\ref{ass:dependence}, to structural, behaviour-relevant primitives. 

\begin{ass}[Random Collective Utility]\label{ass:collective}
  The following holds for all $ i, j \in \mathbb{N} $:
  \vspace{-.5\baselineskip}
  \begin{enumerate}
    \item Individual utilities \label{ass:ego} $ u^m_{i} \equiv u(\omega^m_i, \cdot ) $ and $ u^f_{j} \equiv u(\omega^f_j,  \cdot) $ are twice continuously-differentiable, strictly quasi-concave utility functions defined on $ \mathcal{X} \subseteq \mathbb{R}_+^L $, i.e. bundles of $ L $ continuous consumption goods.\footnote{For the exposition with continuous choice, we make the innocuous injectivity assumption that any two $ \omega_1^m \neq \omega_2^m $ induce different first order conditions (and similar for $ f$).} 
    \item \label{ass:efficiency} The aggregation rule $ (u^m_{i}, u^f_{j}) \mapsto \Phi_{ij} $ leads to Pareto-efficient outcomes.
    \item \label{ass:timehom} Individual utilities only depend on own-good consumption and are time-homogeneous.
  \end{enumerate}
\end{ass}

Assumption \ref{ass:collective}.\ref{ass:ego} lists the standard properties of a deterministic individual-specific utility function which guarantee a unique solution. 
Part \ref{ass:efficiency} defines the collective model of \citet{Chiappori1988,Chiappori1992}.
Efficient bargaining rules out non-cooperative, strategic behaviour of individuals towards their spouse.
Part \ref{ass:timehom} states that individuals are egoistic and only derive utility from their own consumption and not through externalities of their partner's consumption.\footnote{This nests the Beckerian caring model with altruistic preferences \citep{Becker1981}. A sufficient condition for this is weak separability of the form $ u_{i}(x^m,x^f) = G_{i}(g_i(x^m), x^f) $ for any two differentiable, increasing, real-valued functions $ G $ and $ g $.}
Without this assumption, we cannot differentiate between preference-driven consumption changes and the possibility of joint consumption of public goods (non-rival, non-excludable) as a couple. For example, consider individuals with stable preferences commuting to work by car. As a single they have to pay market prices for gasoline, but as a couple they can share the cost and consequently consume more other goods, which could lead us to believe that preferences have changed.\footnote{We relax this assumption in the empirical section by allowing for a parametric household production function that allows for consumption externalities.}
We further assume preferences to be time-homogeneous. 
Without this condition, any variation of choices between periods could be attributed to a change in preference over time rather than individuals facing a variation of prices in different periods.\footnote{Applied to long panel data, this assumption ceases to be innocuous. This can be mitigated by conditioning on time-dependent demographics which capture changes in how preferences are aggregated (distribution factors). The author would like to thank an anonymous referee for pointing this out.}
Together, they allow us to use survey data and not rely on an experimental setting in the empirical part of the paper.\footnote{See \citet{BlowBrowningCrawford2021, Adams2014} for a discussion of time consistency.}

We now give economic content to the factual and counterfactual household framework by showing how latent preference primitives generate common utility representations for singles and couples. 

\begin{lem}[Representation of Collective Utilities]\label{thm:collectiveRUM}
Under Assumption \ref{ass:collective}, for any individuals $ i,j \in \mathbb{N}_2 $, there exists a Pareto weight $ \lambda_{ij} \equiv \lambda(\omega^m_i, \omega^f_j, \cdot) $ that permits the following representation of household utility:
\begin{equation}
  \label{eq:collectiveutility}
  \Phi(\omega^m_i, \omega^f_j, \cdot) = \lambda(\omega^m_i, \omega^f_j, \cdot)  u^m(\omega^m_i, \cdot) + (1-\lambda(\omega^m_i, \omega^f_j, \cdot)) u^f(\omega^f_j, \cdot).
\end{equation}
\end{lem}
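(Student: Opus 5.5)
The plan is to follow the standard Chiappori argument: Pareto efficiency means maximising a weighted sum of individual utilities, and the weight is a function of the primitives and of prices and expenditure (the ``$\cdot$'' argument). Fix a couple $(i,j)$ with $i,j\in\mathbb{N}_2$ and a budget $(p,y)$. By Assumption \ref{ass:collective}.\ref{ass:efficiency}, the household chooses an allocation $(x^m,x^f)$ with $p'(x^m+x^f)\le y$ that is Pareto efficient. By Assumption \ref{ass:collective}.\ref{ass:timehom}, utilities are egoistic, so the relevant objects are $u^m(\omega^m_i,x^m)$ and $u^f(\omega^f_j,x^f)$.

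The key step is a supporting-hyperplane argument on the utility possibility set
\[
U(p,y)=\{(a,b): a\le u^m(\omega^m_i,x^m),\ b\le u^f(\omega^f_j,x^f),\ p'(x^m+x^f)\le y\}.
\]
Under Assumption \ref{ass:collective}.\ref{ass:ego}, strict quasi-concavity alone does not make $U$ convex. I would first use the ordinal freedom to pass to concave representations of $u^m,u^f$; since preferences are ordinal, this is without loss. With concave utilities, $U(p,y)$ is convex, and the efficient point $(\bar a,\bar b)$ lies on its upper-right frontier. A separating-hyperplane theorem then gives nonnegative weights $(\alpha,\beta)\neq 0$ such that $(x^m,x^f)$ maximises $\alpha u^m+\beta u^f$ on the budget set. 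Normalising by $\alpha+\beta$ yields $\lambda_{ij}=\alpha/(\alpha+\beta)\in[0,1]$. An alternative route avoids concavification: use Lagrangian first-order conditions for the problem $\max u^m$ subject to $u^f\ge\bar b$ and the budget, which is well defined under $C^2$ smoothness and a constraint qualification, and read off $\lambda$ from the multiplier.

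Next comes measurability and dependence. Since the weight can be chosen for each $(\omega^m_i,\omega^f_j,p,y)$, I would define $\lambda$ as a function of these arguments via the selection made by the aggregation rule $\Phi_{ij}$. To write $\lambda(\omega^m_i,\omega^f_j,\cdot)$ as a measurable map, I would invoke a measurable-selection theorem, such as Kuratowski--Ryll-Nardzewski, on the Polish space $\mathbb{X}^2$ times the budget space. The efficient outcome then maximises \eqref{eq:collectiveutility}, so that the household behaves as if maximising $\Phi$. Uniqueness of the maximiser follows from strict quasi-concavity together with the interior choice of weights.

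I expect the main obstacle to be the convexity of $U$, i.e. justifying that every Pareto-efficient point is supported by a linear weighting when the utilities are only quasi-concave. I would handle this with the ordinal rescaling to concave representatives. Where that fails, I would fall back on the local first-order-condition version, in which $\lambda$ is defined from the tangency of the individual indifference surfaces at the chosen allocation. Boundary cases with $\lambda\in\{0,1\}$ need a brief separate remark.
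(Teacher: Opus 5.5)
Your primary route, a supporting-hyperplane argument on the utility possibility set, is not the paper's. Its key step is also not available under the stated assumptions.

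Ordinality does not make concavification ``without loss''. Whether a strictly quasi-concave $C^2$ function admits a concave strictly increasing transformation is the classical de Finetti--Fenchel concavifiability problem, and the answer is negative in general. The standard positive results (e.g.\ on compact sets) rely on non-degenerate curvature of the indifference surfaces, which Assumption \ref{ass:collective}.\ref{ass:ego} does not give you.

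Even when transformations $\phi^m,\phi^f$ exist, the hyperplane yields weights for $\phi^m\circ u^m$ and $\phi^f\circ u^f$. It does not yield weights for the $u^m(\omega^m_i,\cdot)$ and $u^f(\omega^f_j,\cdot)$ that appear in \eqref{eq:collectiveutility}. Translating back multiplies the weight ratio by $\phi^{m\prime}/\phi^{f\prime}$ evaluated at the chosen bundles. That preserves only the tangency condition, not global maximisation. Where concave representatives are admissible, your route does buy something the paper never claims: the efficient allocation is a maximiser of the weighted sum on the budget set. But it cannot serve as the general proof.

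What actually carries the lemma is your fallback, and that is precisely the paper's argument:
\begin{enumerate}
\item Following Apps and Rees, the efficient allocation solves $\max u^m$ subject to the partner's reservation utility and the budget.
\item The first-order conditions for $x^m$ and $x^f$ are both equated to the budget multiplier times $p$, giving $\nabla u^m=\rho_{ij}\nabla u^f$ at the chosen bundles.
\item One then sets $\lambda_{ij}=1/(1+\rho_{ij})$.
\end{enumerate}
Be clear that this establishes only that the efficient bundle satisfies the first-order conditions of maximising $\lambda u^m+(1-\lambda)u^f$ on the budget set. That is the sense in which the paper asserts the representation.

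The paper adds one ingredient you omit. It starts from couple-specific cardinalisations $a_{ij}u+b_{ij}$, $a_{ij}>0$, applied to both partners. It then checks that $\rho_{ij}$, hence $\lambda_{ij}$, does not depend on $(a_{ij},b_{ij})$, so the weight is not an artefact of how cardinal utility is rescaled upon partnering.

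Finally, your uniqueness claim does not follow from strict quasi-concavity plus interior weights. The function $\lambda u^m(x^m)+(1-\lambda)u^f(x^f)$ need not be quasi-concave in $(x^m,x^f)$. Fortunately the lemma does not need uniqueness.
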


For any efficient aggregation of preferences to a collective unit, there exists a representation for the household utility that can be decomposed into a weighted combination of individual utilities with weights proportional to each member's bargaining power \citep{Chiappori2009}.
Note that Pareto weights depend on both individuals' preference types $(\omega^m_i, \omega^f_j)$.\footnote{In this exposition, we abstract from distribution factors that could shift bargaining power.}

In Corollary \ref{pro:unitaryrep} of Appendix \ref{app:proofs} we show that the household utility $\Phi_{ij} \equiv \Phi(\omega^m_i, \omega^f_j, \cdot)$ of a single-dummy match is an ordinally equivalent representation of the single's unitary utility $u_i$.
Thus, we may represent any household's utilities as an element of the array $ \{ \Phi_{ij} \}_{i,j \in \mathbb{N}} $, with the dummy partner's demand set to zero for singles.

\subsection{Preference Stability}\label{subs:pools}
In the previous subsection we have concluded that any factual or counterfactual household has a well-defined utility representation within the common framework.
We now tie this to the matching structure in equilibrium, which, under the alternative, induces population level differences due to sorting.
Consequently, this leads to different compositions of preferences between the subpopulations (pools) of single and partnered individuals.

While this paper is agnostic about the matching mechanism, we may think of people as matching to maximise joint surplus $ \Phi $.
The observed matching allocation is the result of individuals optimising their consumption utility by choosing partners.\footnote{The allocation is efficient, in the absence of blocking pairs \citet{Shapley1971}.}\textsuperscript{,}\footnote{One may shift ordinal utility representations by $ h(z_{i}, z_{j}) $ for observed $G$-exchangeable demographics $ (z_i, z_j) $, and an equivariant aggregator $_0 h $ to also account for matching based on observed characteristics, such as education and age, which we consider in Section \ref{sec:results}.}

\begin{lem}[Within Pool Exchangeability]
\label{lem:perm}
For the realised partition of partnered individuals $ \mathbb{N}_2 $ and singles $ \mathbb{N}_1 $ let permutations associated with within-pool relabellings be
\begin{equation}
G \equiv \left\{ \varsigma\in G_0: \varsigma(\mathbb N_2)=\mathbb N_2, \varsigma(\mathbb N_1)=\mathbb N_1 \right\}.
\end{equation}
Then, under Assumptions \ref{ass:matching}-\ref{ass:collective}, within every realised matching-induced pool: (i) the preference environment $ \omega $ remains exchangeable in the sense of Assumption \ref{ass:dependence}, but only with respect to $ G $, and (ii) the realised matching rule  $ \sigma_\omega \equiv M(\Phi(\omega)) $ is permutation equivariant.
\end{lem}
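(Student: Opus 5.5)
The argument rests on one observation: the matching rule is built from $\omega$ through permutation-equivariant maps. So relabelling within pools commutes with everything, and conditioning on the realised pools preserves the symmetry.

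\emph{Step 1: equivariance of the surplus array.} By Assumption \ref{ass:collective} and Lemma \ref{thm:collectiveRUM}, the household utility array $\Phi(\omega)=\{\Phi(\omega^m_i,\omega^f_j,\cdot)\}_{i,j}$ is obtained by applying the same fixed map entrywise to $(\omega^m_i,\omega^f_j)$. Dummy entries use the normalised $\omega^0$ from Assumption \ref{ass:polish}, and Corollary \ref{pro:unitaryrep} covers single--dummy cells. Hence for any $\varsigma^m,\varsigma^f\in G_0$,
\begin{equation}
\Phi(\varsigma^m\cdot\omega^m,\varsigma^f\cdot\omega^f)=(\varsigma^m,\varsigma^f)\cdot\Phi(\omega).
\end{equation}
Here it matters that elements of $G_0$ fix $\mathbb{N}_0$ setwise. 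Dummies all carry $\omega^0$, so permuting them among themselves leaves the array invariant up to the same relabelling.

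Combining this with Assumption \ref{ass:matching}, with $Z=\Phi(\omega)$, gives
\begin{equation}
M(\Phi(\varsigma\cdot\omega))=(\varsigma^m,\varsigma^f)\cdot M(\Phi(\omega)),
\end{equation}
that is, $\sigma_{\varsigma\cdot\omega}=\varsigma^f\circ\sigma_\omega\circ(\varsigma^m)^{-1}$ (with the paper's inverse conventions). This already gives (ii) for every $\varsigma\in G_0$, and a fortiori for $G$.

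\emph{Step 2: part (i).} Define the pool map $\pi(\omega)$ as the realised partition $(\mathbb{N}_2,\mathbb{N}_1)$ read off from $\sigma_\omega$ (who is matched to a non-dummy, who to a dummy). By Step 1, $\pi(\varsigma\cdot\omega)=\varsigma\cdot\pi(\omega)$. For $\varsigma\in G$, the stabiliser of the realised partition, this yields $\pi(\varsigma\cdot\omega)=\pi(\omega)$. Now take a measurable event $A$ and a pool realisation $P$. Then
\begin{align}
\mu(\varsigma\cdot\omega\in A,\ \pi(\omega)=P)
&=\mu(\varsigma\cdot\omega\in A,\ \pi(\varsigma\cdot\omega)=P)\\
&=\mu(\omega\in A,\ \pi(\omega)=P),
\end{align}
where the last equality uses Assumption \ref{ass:dependence}, since $\varsigma\in G\subset G_0$. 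Dividing by $\mu(\pi=P)$, or using a regular conditional distribution via the Polish structure \citep[Thm 5.3]{Kallenberg1997} when this probability is zero, gives conditional exchangeability under $G$.

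\emph{Step 3: why $G_0$ is not retained.} To see that exchangeability holds ``only'' with respect to $G$, note that for $\varsigma\in G_0\setminus G$ the displayed identity fails: $\varsigma$ moves the pool event $\{\pi=P\}$ to a different event $\{\pi=\varsigma P\}$. Sorting can make the conditional laws of the two pools differ, and a two-type example with assortative surplus suffices to illustrate this.

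I expect the main obstacle to be measurability and conditioning on the infinite, countable population. The event $\{\pi=P\}$ may be null, and $M$ must be well defined on $S_\infty$ (the finite assignment result in Appendix \ref{lem:matching} does not directly cover it). I would handle this by working with finite truncations and finitary permutations, which is all that $G_0$ contains. I would then pass to the limit through the regular conditional probability given $\sigma(\pi)$, and verify the invariance on a $\pi$-system of cylinder sets.
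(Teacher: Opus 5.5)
Your proposal is correct and follows essentially the same route as the paper's proof. Equivariance of $M\circ\Phi$ under $G_0\times G_0$ gives (ii) and makes the induced partition equivariant. The stabiliser $G$ leaves the conditioning event $\{\mathbb{N}(\omega)=\mathbb{N}\}$ invariant, so Assumption \ref{ass:dependence} transfers to the realised pools. Relabellings outside $G$ move that event, which is why no larger symmetry survives.

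Your explicit computation of $\mu(\varsigma\cdot\omega\in A,\ \pi(\omega)=P)$ and your handling of null conditioning events (regular conditional distributions plus countability of finitary relabellings) make rigorous what the paper only asserts. One small correction: with the paper's action $m_{\varsigma^m(i),\varsigma^f(j)}$, the equivariance reads $\sigma_{\varsigma\cdot\omega}=(\varsigma^f)^{-1}\circ\sigma_\omega\circ\varsigma^m$.
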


The matching equilibrium partitions the population into single and partnered individuals.
Lemma \ref{lem:perm}.(i) shows that ex-ante exchangeability is therefore no longer preserved across the whole population, but only within the realised pools of singles and couples.
Ex post, membership in the partnered pool $\mathbb{N}_2$ or the single pool $\mathbb{N}_1$ is informative, since the matching equilibrium may select different types into the two pools.
Being called Hera rather than Athena, or Zeus rather than Apollo, now matters: the names no longer only label individuals, but also identify whether their preferences are drawn from the realised partnered or single subpopulation.

In conjunction with part (ii), which establishes anonymity of the realised matching $\sigma_\omega $, we conclude that individuals belonging to factual households remain exchangeable within each block of Figure \ref{fig:partition}: the \color{purple}couples\color{black}, \color{blue}single men\color{black}, and \color{red}single women\color{black}.\footnote{Due to anonymity, once a population $ \omega $ is realised, we write $ \sigma $ without explicit dependence on $ \omega $.}
This allows us to represent the different preference distributions as conditionally i.i.d. and we may define the main population-level hypothesis as follows:

\begin{hypothesis}[Population Preference Stability]
\label{hyp:poollaw}
For each market side $ \kappa \in \{m,f\} $ and pool $ \ell\in\{1,2\} $, denote the realised within-pool distribution for $ A\in \mathcal{B}(\mathbb{X}) $ and $ i\in\mathbb N_\ell $ as:
\begin{equation}
\bar{\mu}_\ell^\kappa(A)
\equiv
\mu (\omega_i^\kappa\in A \mid {\mathcal I}_\ell^\kappa)(\omega),
\qquad
\end{equation}
The population is preference stable if and only if the distributions for singles and partnered individuals coincide: 
$$ \bar{\mu}_1^m=\bar{\mu}_2^m \qquad \text{and} \qquad \bar{\mu}_1^f=\bar{\mu}_2^f. $$
\end{hypothesis}

Starting from the ex-ante composition of preferences $\bar{\mu}_0^m$ and $\bar{\mu}_0^f$, matching sorts individuals into singles and partnered individuals, and within-pool distributions $\bar{\mu}_\ell^\kappa$ are realised.\footnote{The sigma-algebra $\mathcal I^\kappa_\ell$ collects events that are invariant under within-pool relabellings, as defined in Lemma \ref{lem:perm}. The object $\bar{\mu}^\kappa_\ell$ is a regular conditional distribution, that is, a kernel $\bar{\mu}^\kappa_\ell:\Omega\times\mathcal B(\mathbb X)\to[0,1]$. Equivalently, for each $A\in\mathcal B(\mathbb X)$, $\bar{\mu}^\kappa_\ell(\cdot,A)$ is $\mathcal I^\kappa_\ell$-measurable, while for each $\omega\in\Omega$, $\bar{\mu}^\kappa_\ell(\omega,\cdot)$ is a probability measure on $\mathbb X$ \citep[Theorem 5.3]{Kallenberg1997}.}
The null hypothesis restricts this post-matching composition.
It rules out selection into the single and partnered pools based on preferences $\omega_i^\kappa$ because both pools must contain the same realised distribution of individual preference types. 
This does not require matching to ignore preferences. 
Indeed, conditional on being partnered, preferences may still determine who is matched with whom, for example through positive assortative matching.\footnote{Suppose there are two types of preferences: type A and type B. A preference-stable population might consist of 50\% of type A and 50\% of type B in both the singles' and couples' pool. Within the couples' pool, all type A, respectively, type B men and women might be matched assortatively.}

The distributions defining population preference stability are unobserved.
We now develop the relevant individual-level foundations to translate this to an empirically falsifiable restriction. 
For this, we define a \emph{latent configuration} which bundles an observed couple with a single man and a single woman whose preferences serve as counterfactual replacements.

\begin{dfn}
  We call the quadruple  $ (i, j, i', j') \equiv (i, \sigma(i), \tau^m(i), \tau^f(\sigma(i))) $, where $ i \in \mathbb{N}_2 $, a latent household configuration which takes any factual household $ (i, j) $ and pairs it with the individual utilities of two singles, using the transpositions $ \tau^m $ and $ \tau^f $. 
   We call a configuration preference-stable if and only if $ \omega_i^m = \omega_{i'}^m $ and $ \omega_j^f = \omega_{j'}^f $.
\end{dfn}

With this, we can define the structural restriction implied by preference stability that gives our test empirical content.
Let $S_{ij}(p)$ be the Slutsky matrix associated with utility $\Phi_{ij}$  under normalised budgets $ w_{ij} = 1$. 
The Slutsky matrix tells us how demand reacts to prices, keeping utility constant.
For singles, this is purely a substitution effect and must, thus, be symmetric.
For couples, by \citet{Browning1998}, there is an additional exactly one-dimensional channel: a price change may alter the scalar resource sharing and thereby redistribute resources within the household.
That is, $ S_{ij}(p) - \bar{S}_{ij}(p) = u_{ij}(p)v_{ij}(p)^\top $ is rank one for some symmetric matrix $ \bar{S}_{ij}(p)$.\footnote{We may think of this as a factor structure: the factor $v_{ij}(p)$ measures how a price change affects sharing of resources. 
The loading $u_{ij}(p)$ is the direction of the demand changes as one Pound from $j$ is reallocated to $i$, holding prices fixed.} 
This restriction is testable with only aggregate demand data.
Preference stability sharpens it within a given configuration.

\begin{lem}[Preference-Stable Configuration]\label{lem:prefstable}
Under Assumptions \ref{ass:matching}-\ref{ass:collective}, for any preference-stable configuration $(i,j,i',j')$, preference stability requires that the symmetric part satisfies:
\begin{equation}
  \bar{S}_{ij}(p)\in
  \left\{
    \frac{1}{\eta}S^m_{i'0}\left(\frac{p}{\eta}\right)+\frac{1}{1-\eta} S^f_{0j'}\left(\frac{p}{1-\eta}\right)
    : \eta\in(0,1)
  \right\},
\end{equation}
 where $ S^m_{i'0} $ and $ S^f_{0j'} $ are the within-configuration Slutsky matrices for singles.
\end{lem}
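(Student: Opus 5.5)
Under Assumption \ref{ass:collective}.\ref{ass:timehom} (egoistic, private consumption), Pareto-efficient household choice decentralises into a sharing rule. For the factual couple $(i,j)$ with normalised budget $w_{ij}=1$, Lemma \ref{thm:collectiveRUM} implies there exists a scalar share $\eta_{ij}(p)\in(0,1)$ such that household demand is
\begin{equation}
  x_{ij}(p) = x^m\bigl(\omega^m_i; p, \eta_{ij}(p)\bigr) + x^f\bigl(\omega^f_j; p, 1-\eta_{ij}(p)\bigr).
\end{equation}
Here $x^m(\omega^m_i;p,y)$ and $x^f(\omega^f_j;p,y)$ are the unitary Marshallian demands of $u^m_i$ and $u^f_j$ at income $y$. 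The plan is to differentiate this identity with respect to $p$ and form the Slutsky matrix $S_{ij}(p)$ in the usual way. Following \citet{Browning1998}, we separate the part coming from the direct price effects of each member at fixed shares from the part coming from $\partial \eta_{ij}/\partial p$. The latter produces a term of the form $u_{ij}(p)v_{ij}(p)^\top$, where the loading is $u_{ij}=\partial_y x^m - \partial_y x^f$ and the factor $v_{ij}$ collects $\nabla_p\eta_{ij}$ together with the income-compensation terms. The remainder is $\bar{S}_{ij}(p)=S^m_i(p,\eta)+S^f_j(p,1-\eta)$ with $\eta=\eta_{ij}(p)$, where each summand is the individual Slutsky matrix at income $\eta$ or $1-\eta$, and hence symmetric. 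This recovers the rank-one decomposition stated before the lemma.

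The next step uses homogeneity of degree zero of Marshallian demand to express an individual's Slutsky matrix at income $\eta$ through the one at normalised income $1$. Since $x(p,\eta)=x(p/\eta,1)$, differentiating gives $S(p,\eta)=\tfrac{1}{\eta}S(p/\eta,1)$, and similarly for $1-\eta$. It then remains to identify $S^m_i(\cdot,1)$ with $S^m_{i'0}$ and $S^f_j(\cdot,1)$ with $S^f_{0j'}$. In a preference-stable configuration we have $\omega^m_i=\omega^m_{i'}$ and $\omega^f_j=\omega^f_{j'}$, so $u^m_i=u^m_{i'}$ as functions. By Corollary \ref{pro:unitaryrep}, the single-dummy household utility $\Phi_{i'0}$ is ordinally equivalent to $u^m_{i'}$, and the dummy's demand is zero. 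The Marshallian demand, and therefore the Slutsky matrix at unit budget, of household $(i',0)$ therefore coincides with that of $u^m_i$; the same holds for $(0,j')$ and $u^f_j$. Substituting gives $\bar S_{ij}(p)=\tfrac1\eta S^m_{i'0}(p/\eta)+\tfrac1{1-\eta}S^f_{0j'}(p/(1-\eta))$ with $\eta=\eta_{ij}(p)\in(0,1)$, which is membership in the stated set. The premise of Assumptions \ref{ass:matching}--\ref{ass:dependence} enters only to ensure that configurations are well-defined via Lemma \ref{lem:perm}. The within-configuration objects do not depend on the transpositions beyond the identity of $i',j'$.

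The main obstacle is the decomposition step. One must verify that the split of $S_{ij}$ into $\bar S_{ij}+u v^\top$ can be chosen so that $\bar S_{ij}$ is exactly the sum of the individual compensated matrices evaluated at the realised share, rather than some other symmetric matrix. This requires differentiability of $\eta_{ij}(p)$, which I would obtain from the $C^2$ and strict quasi-concavity conditions via the implicit function theorem applied to the first-order conditions of $\Phi_{ij}$. It also requires careful bookkeeping of the Hicksian compensation, because compensation is at the household level while the income effects differ across members. The discrepancy between household-level and member-level compensation is itself proportional to $(\partial_y x^m-\partial_y x^f)$ and can be absorbed into the rank-one term. I would handle this first, writing $\partial_p x_{ij}+\partial_w x_{ij}\,x_{ij}^\top$ explicitly and grouping the terms accordingly.
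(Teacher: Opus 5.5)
Your proposal is correct and follows the paper's proof: it uses the sharing-rule decomposition, the Browning--Chiappori split of the household pseudo-Slutsky matrix into the two individual Slutsky matrices plus the rank-one term $UV^\top$, and degree $-1$ homogeneity of the individual Slutsky matrices to pass to unit budgets. It then identifies the members' demand systems with those of $i'$ and $j'$ under preference stability; working directly at $w=1$ lets you skip the paper's homogeneity argument for $\eta$, $U$ and $V$. One caveat: differentiability of the sharing rule is a maintained assumption (implicit in the paper and in Browning--Chiappori), not something the implicit function theorem delivers from Assumption \ref{ass:collective}, because the dependence of $\lambda_{ij}$ on $(p,w)$ is a primitive and strict quasi-concavity does not guarantee a nonsingular bordered Hessian.
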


This says that for each factual couple there exists a relative share of the household resources received by the man, denoted by $\eta_{ij}(p)\in(0,1)$, such that the symmetric component $\bar{S}_{ij}(p)$ can be replaced by the within-configuration singles $ i'$ and $ j' $ evaluated at $ \eta_{ij}(p) $ and $ 1- \eta_{ij}(p) $.
Preference stability then has a sharp implication: once the symmetric part has been fixed to the single Slutsky matrices, the couple may differ from their sum only through this one-dimensional redistribution channel.

\subsection{Random Utility and Matching Characterisation}\label{subs:population}
Having shown how preference stability restricts behaviour within a single latent configuration, we now show how these local restrictions help restrict the joint distribution to be compatible with preference stability at the \emph{observable} population level. 
Because we observe only one realised matched population, the relevant source of randomness cannot come from repeated markets.
Rather, we obtain it from a label-invariant randomisation over counterfactual assignments within that population. 
This yields a distribution over latent configurations and, through utility-maximisation, a random utility and matching representation of observed behaviour.

We start by describing the observable features of a typical dataset: a finite sample of household demands across budgets, common prices, household income, and possibly individual demographics:

\begin{ass}[Observed Behaviour]\label{ass:observed}
  Let $ x_{ijt} $ be the maximiser of $ \Phi_{ij} $ under budget $ B_{ijt} = \mathcal{B}(p_t, w_{ijt}) $, where $ \mathcal{B}(p, w) = \{ x \in \mathcal{X} : p^\top  x \leq w\} $.\footnote{As is common in the empirical literature of consumer demand we will work with normalised prices $ p_{t}/w_{ijt} $ leading to unit budgets. 
See e.g. \citet{Cornes1992} and also \citet{KitamuraStoye2013} for a discussion. In the empirical section we discuss endogeneity of total expenditure.}
  For household $ i = 1 \ldots n^c+n^m+n^f $, possibly a dummy, we observe $ (j, z_i, \{ x_{it}, w_{it}, p_t\}_{t = 1\ldots T} ) $ where $ z_{i} \equiv (z^m_{it}, z^f_{jt}) $ are observed characteristics, and aggregate demand is $ x_{ijt} = x^m_{ijt} + x^f_{ijt} $. 
\end{ass}

From this data we can identify factual household-level demand functions $ x_{ij} \equiv x(\omega^m_i, \omega^f_{j}, \cdot) $.
Representing configurations via pairs of transpositions $ (\tau^m, \tau^f) $:
\begin{equation}\label{eq:chi}
 \chi(\omega, \tau^m, \tau^f)_i \equiv  \left((\omega_i^m, \omega_{\sigma(i)}^f), (\omega^m_{\tau^m(i)}, \omega^0),  (\omega^0, \omega^f_{(\tau^f \circ \sigma)(i)})\right),
\end{equation}
we obtain the observable demand functions by sending them through the map: \begin{equation}\label{eq:psi}
  \Psi \equiv (\Psi^c, \Psi^m, \Psi^f) \; \text{where} \; \Psi^\kappa \equiv \text{argmax}\circ \Phi \circ \text{proj}^\kappa . 
\end{equation}
The projection $ \text{proj}^\kappa $ takes a configuration $\chi_i$ and extracts the respective couple's, single male's, and single female's preferences, $ \Phi $ builds household utility, which is then maximised to obtain demands.

Since, within a configuration, the single's demand functions serve as counterfactuals to the partnered individual's demand functions, we can check preference stability for a given configuration according to Lemma \ref{lem:prefstable}.
We collect all configurations that are consistent with preference stability into the set $ W_0 $.

To generate counterfactual assignments we randomise over transpositions. For this, we introduce the sampling device $ \rho $, a probability distribution defined on the space $ \mathcal{T}  $. We denote as $ \mu \otimes \rho $ the joint distribution of preferences and transpositions on $ \Omega \times \mathcal{T} $.
This leads to a distribution $ \nu $ of configurations induced by the preference environment and randomisation over transpositions.
\begin{dfn}\label{def:rum}
  Each random environment $ (\omega, \tau) \in \Omega \times \mathcal{T} $ defines the collection of demands $ (x^c, x^m, x^f) \equiv \Psi(\chi(\omega, \tau)) $ as functions from prices $ \mathcal{P} $ to bundles $ \mathcal{X} $. The distribution $ \pi $, measuring events $ A $ in the space of demand functions, is the push-forward of $ \nu $ under $ \Psi $: \begin{equation}\label{eq:rum1pre}
  \pi(A)=
  \int \mathbf{1}\{\Psi(\chi)\in A\}\nu(d\chi) = \int \mathbf{1}\{\Psi(\chi(\omega, \tau))\in A\}\,(\mu \otimes \rho)(d\omega, d\tau). \hspace{2em}
\end{equation}
We observe only the marginal distributions of $ \pi $ which we denote $ \pi^c $, $\pi^m$, and $ \pi^f $.
\end{dfn}

Through this formulation, we can define the respective marginal distributions of factual demands for couples, single men, and single women, as a mixture of configurations.
Each of the configurations we can check for preference stability. 
We show that this is a random utility model, by proving that configurations, and thus the induced demand functions, are sampled randomly.
In addition, we require that the population level preference-stability hypothesis \ref{hyp:poollaw} implies the existence of a distribution of configurations supported only on the preference-stable set $ W_0 $.
This leads us to the main result of the paper.

\begin{thm}[Random Utility and Matching Representation]
\label{pro:definetti}
Let $\mathcal{I}^\star$ collect all events on $\Omega \times \mathcal{T}$ that depend on the configuration sequence only up to relabelling of individuals within pools $ G $. Under Assumptions \ref{ass:matching}-\ref{ass:observed}, the sequence of configurations $(\chi_i)_{i \in \mathbb{N}_2}$ is conditionally i.i.d. given $\mathcal{I}^\star$ with distribution $\bar{\nu}$.
The conditional distribution of demand triples induced by sampling of configurations according to Definition \ref{def:rum} is the push-forward of $\bar{\nu}$ under $\Psi$:
\begin{rum}\label{eq:rum1}
    \bar{\pi}(A) = \int \mathbf{1}\{\Psi(\chi_i) \in A\} \, \bar{\nu}(d\chi_i).
\end{rum}
Further, under Hypothesis \ref{hyp:poollaw}, the marginal demand distributions $\bar{\pi}^c, \bar{\pi}^m, \bar{\pi}^f$ of $\bar{\pi}$ admit a preference-stable representation. That is, there exists $\nu^\star $ supported on preference-stable configurations $W_0$, such that pushing it forward through $\Psi^c, \Psi^m, \Psi^f$ yields the respective marginals.
\end{thm}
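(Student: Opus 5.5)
The proof has three parts: exchangeability of the configuration sequence, a de Finetti/Hewitt--Savage step that gives the conditional i.i.d. representation, and a coupling construction that produces $\nu^\star$ under Hypothesis \ref{hyp:poollaw}.

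\textbf{Step 1: exchangeability of configurations.} By Lemma \ref{lem:perm}, under Assumptions \ref{ass:matching}--\ref{ass:collective} the environment $\omega$ is exchangeable with respect to within-pool relabellings $G$, and the realised matching $\sigma_\omega$ is permutation equivariant. Take a finite relabelling $\varsigma = (\varsigma^m,\varsigma^f) \in G\times G$. The configuration map $\chi$ in \eqref{eq:chi} is built only from $\omega$, $\sigma$ and the transpositions. So we need an intertwining identity. Relabelling $\omega$ by $\varsigma$ and conjugating the transpositions, $\tau \mapsto \varsigma\circ\tau\circ\varsigma^{-1}$, should permute the sequence $(\chi_i)_{i\in\mathbb{N}_2}$ by $\varsigma^m$. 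To get this identity, use $\sigma_{\varsigma\cdot\omega} = \varsigma^f\circ\sigma_\omega\circ(\varsigma^m)^{-1}$, which is anonymity applied to $Z=\Phi(\omega)$.

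We then choose the sampling device $\rho$ to be invariant under conjugation by $G$. For example, $\rho$ can draw, independently for each $i\in\mathbb{N}_2$, a single index uniformly from an exchangeable scheme. Since $\mu\otimes\rho$ is a product measure, it is then invariant under the joint action. It follows that $(\chi_i)_{i\in\mathbb{N}_2} \stackrel{d}{=} (\chi_{\varsigma^m(i)})_{i\in\mathbb{N}_2}$ for every finite permutation of $\mathbb{N}_2$.

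\textbf{Step 2: de Finetti and the push-forward.} Configurations take values in $\mathbb{X}^6$, which is Polish. The Hewitt--Savage/de Finetti theorem therefore gives that $(\chi_i)$ is conditionally i.i.d. given the exchangeable $\sigma$-field. This $\sigma$-field coincides, up to null sets, with $\mathcal{I}^\star$, and the common conditional law $\bar\nu$ is a regular conditional distribution (Kallenberg, Thm 5.3). Because $\Psi$ is a measurable map applied coordinatewise (argmax of a continuous strictly quasi-concave $\Phi$ is measurable), the triples $\Psi(\chi_i)$ are also conditionally i.i.d. Their law is $\bar\pi=\bar\nu\circ\Psi^{-1}$, which is (R.1). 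Its marginals are $\bar\pi^\kappa=\bar\nu\circ(\Psi^\kappa)^{-1}$.

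\textbf{Step 3: the preference-stable representation, the main obstacle.} The marginals of $\bar\nu$ on the three components of $\chi$ are determined as follows:
\begin{itemize}
\item the couple component has law equal to the joint law of $(\omega^m_i,\omega^f_{\sigma(i)})$ within pool 2, whose one-dimensional marginals are $\bar\mu_2^m$ and $\bar\mu_2^f$;
\item the single-man component has law $\bar\mu_1^m$;
\item the single-woman component has law $\bar\mu_1^f$.
\end{itemize}
Under Hypothesis \ref{hyp:poollaw}, $\bar\mu_1^\kappa=\bar\mu_2^\kappa$. So we can define $\nu^\star$ as the law of $\big((\omega^m,\omega^f),(\omega^m,\omega^0),(\omega^0,\omega^f)\big)$ with $(\omega^m,\omega^f)$ drawn from the couple law of $\bar\nu$. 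This is the diagonal coupling: the single man is given exactly the husband's type, and the single woman exactly the wife's type.

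By construction $\nu^\star$ is supported on $W_0$. Lemma \ref{lem:prefstable} is exactly the statement that such configurations satisfy the Slutsky restriction, so they lie in $W_0$. Pushing $\nu^\star$ through $\Psi^c$ reproduces $\bar\pi^c$, since the couple component is unchanged. Pushing it through $\Psi^m$ and $\Psi^f$ reproduces $\bar\pi^m$ and $\bar\pi^f$, since the component marginals equal $\bar\mu_2^\kappa=\bar\mu_1^\kappa$ and $\Psi^m$ depends only on $\omega^m$ (with the dummy normalised).

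The delicate points are two. First, one must verify that conditioning on $\mathcal{I}^\star$ yields component marginals equal to the pool-level kernels $\bar\mu_\ell^\kappa$ of Hypothesis \ref{hyp:poollaw}. I would do this by showing $\mathcal{I}^\kappa_\ell\subseteq\mathcal{I}^\star$ and applying the law of large numbers for exchangeable sequences: empirical measures of single types converge to $\bar\mu_1^\kappa$ almost surely. Second, one must ensure measurability of $W_0$, which follows from it being closed under the continuity of Slutsky matrices in $\omega$.
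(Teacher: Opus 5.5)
Your proposal follows the paper's proof essentially step for step. You obtain within-pool exchangeability of $(\chi_i)_{i\in\mathbb{N}_2}$ from equivariance of $\sigma_\omega$ plus conjugation of the transpositions (the paper's Lemma~\ref{lem:proj_exchangeable}, with the action written on the other side). You then apply Hewitt--Savage with a regular conditional distribution, push forward through $\Psi$, and finish with the diagonal coupling $(\omega^m,\omega^f,\omega^m,\omega^f)$ drawn from the couple law under Hypothesis~\ref{hyp:poollaw}.

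The point you flag as delicate is that the single components of $\bar\nu$ have marginals $\bar\mu_1^\kappa$; the paper simply asserts this. Your law-of-large-numbers argument does deliver it when the sampling device enumerates the single pool bijectively, independently of $\omega$. It does not work for your example of independent per-couple draws. No uniform law exists on an infinite $\mathbb{N}_1$, and i.i.d.\ draws from any fixed $q$ make that marginal the discrete measure $\sum_k q(k)\delta_{\omega^\kappa_k}$ rather than $\bar\mu_1^\kappa$.
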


In the preliminary Lemma \ref{lem:proj_exchangeable} in Appendix \ref{app:proofOfprojection} we show that, if we sample configurations without introducing dependence on the unobserved preferences $ \omega $, e.g. by systematically over-sampling certain individuals or types, the within-pool exchangeability of preferences established in Lemma \ref{lem:perm} extends to the sequence of latent configurations.\footnote{In our implementation, we exhaustively enumerate configurations which satisfy this symmetry requirement, as would uniform sampling at random.}\textsuperscript{,}\footnote{In Figure \ref{fig:partition}, this extends the within-solid-block exchangeability to the dashed \emph{counterfactual blocks} generated by $(\tau^m,\tau^f) $, since each of them is a measurable projection of the exchangeable sequence $ \chi $.}
Consequently, this extends the \citet{definetti1931} representation of the preference environment to the configuration sequence $ \{\chi_i\}_{i \in \mathbb{N}_2}$ \citep{HewittSavage1955}.
Thus, every event we can learn from this economy which is not a consequence of arbitrary labels is contained in the sub-sigma-algebra ${\mathcal{I}}^\star $, and can be measured by a conditional distribution.\footnote{The distribution is random through its dependence on the state of the world $ (\omega, \tau) \sim \mu \otimes \rho $. Upon realisation of the whole environment of latent preferences and configurations, the distribution becomes deterministic. Any counterfactual $ (\omega',\tau') $ leads to a different distribution, on the same sigma-algebra.}

Theorem \ref{pro:definetti} tells us that, conditional on $ \mathcal I^\star $, each component of the configuration array has distribution $ \bar\nu $. 
Pushing this conditional distribution through \( \Psi \) gives the conditional choice distribution \( \bar\pi \).
Thus, we can treat demand functions as conditionally i.i.d. which permits the construction of non-parametric estimators for them. 

To rationalise the observed distributions $ \bar\pi $ there must exist a distribution $ \nu^\star $ that puts all mass on the subset of preference-stable configurations.
Theorem \ref{pro:definetti} further establishes that failure of rationalisability implies failure of the preference stability hypothesis \ref{hyp:poollaw}, making the hypothesis empirically falsifiable.\footnote{Corollary \ref{cor:blockmarschak} in the Appendix \ref{app:blockmarschak} goes beyond this and characterises preference stability, and, thus, the existence of $ \nu^\star $, via Block-Marschak inequalities.}

Having developed the random utility theory based on continuous demand functions, in the next section we operationalise it by deriving the discrete choice counterpart using revealed preference axioms.
 \section{Testing based on a Discrete Choice Characterisation}\label{sec:inference}
The configuration-level Slutsky restriction from the continuous characterisation in the previous section requires non-parametric estimation of demands and, thus, a large number of observed budget sets.
This section replaces the continuous characterisation by one that characterises choices on a small number of budgets.
We introduce distinct revealed preference types, replacing demand functions, which we combine to configuration types. 
Preference stability then becomes a support restriction on a finite-dimensional distribution over configuration types, which can be operationalised as a constrained optimisation problem based on observed choice frequencies and a deterministic matrix defining preference-stable configurations.

\subsection{Choice and Configuration Types}\label{subs:finite}
It is neither feasible nor necessary to consider the high-dimensional problem with continuous demands and permutations at the individual level.\footnote{For double transpositions $ (\tau^m,\tau^f) \in \mathcal{T} $, the cardinality of the space of permutations is of order $ \mathcal{O}(n^4) $, if the number of single individuals is proportional to $ n $.}
Instead, we now show that we can, equivalently, use a characterisation based on revealed preference types, defined in a way that knowledge of them fully determines rationality and efficiency.
Counterfactual assignments can then also be considered at the type level, reducing the dimensionality of the problem drastically.

\begin{dfn}[GARP]
  \label{dfn:garp}
A collection of choices $ (p_t,x_t)_{t=1}^{T} $ satisfies the generalised axiom of revealed preferences (GARP) if there exist binary relations $R$ and $\mathcal R$ that satisfy:
\begin{enumerate}
  \item\label{dfn:garpdirect} if $p_s^\top x_s \geq p_s^\top x_t $, then $x_s \ R \ x_t $,
\item if $ x_s \ R \ x_u $, $ x_u \ R \ x_v $, \ldots, $ x_z \ R \ x_t $ for some sequence $(u\ldots z)$, then $ x_s \ \mathcal R \ x_t $,
\item if $ x_s \ \mathcal R \ x_t $, then $p_t^\top x_t \leq p_t^\top x_s $.
\end{enumerate}
\end{dfn}

If an individual purchases bundle $x_s$ even though $x_t$ was affordable at the same prices, we say it was ``directly revealed preferred'' and write $x_s R x_t$. 
Further, by chaining together any (possibly empty) sequence of direct revelations of preferences, transitivity allows us to infer preference revelations of some bundles we cannot otherwise compare because we never observe budgets that allow us to directly distinguish them.
GARP demands that there is no $x_s$ which is revealed preferred to $x_t$ and yet, at the same time, $x_t$ is revealed preferred to $x_s$.
No cycles of mutual preference can occur.

\begin{exa*}

\begin{figure}[h!]
 \begin{subfigure}[b]{0.4\textwidth}
\begin{tikzpicture}[scale=0.7,>=stealth] \fill[pattern=north east lines,pattern color=pink]
        (0,4) -- (0,0) -- (3,0) -- cycle;
    \fill[pattern=north west lines,pattern color=teal]
        (0,2) -- (0,0) -- (6,0) -- cycle;
    \draw[->] (0,0) -- (0,4.5);
    \draw[->] (0,0) -- (6.5,0);
    \draw (0,2) -- (6,0);
    \draw (0,4) -- (3,0);
\end{tikzpicture}
\end{subfigure}
\begin{subfigure}[b]{0.2\textwidth}
\begin{tikzpicture}[scale=0.7,>=stealth]
    \fill[pattern=north east lines,pattern color=pink]
        (0,4) -- (0,0) -- (3,0) -- cycle;
    \draw (0,4) -- (0,0) -- (3,0) -- cycle;
    \draw (0, 2) -- (2, 4/3);
    \fill (1,5/3) circle (1.5pt) node[below] {$\mathbf{x_t'}$};
    \fill (2.5,4/6) circle (1.5pt) node[right] {$\mathbf{x_s'}$};
    \fill (1,8/3) circle (1.5pt) node[above right] {$\mathbf{x_s}$};
\end{tikzpicture}
\end{subfigure}
\begin{subfigure}[b]{0.2\textwidth}
\begin{tikzpicture}[scale=0.7,>=stealth]
    \fill[pattern=north west lines,pattern color=teal]
        (0,2) -- (0,0) -- (6,0) -- cycle;
    \draw (0,2) -- (0,0) -- (6,0) -- cycle;
    \draw (3,0) -- (2, 4/3);
    \fill (1,5/3) circle (1.5pt) node[above] {$\mathbf{x_t'}$};
    \fill (2.5,4/6) circle (1.5pt) node[left] {$\mathbf{x_s'}$};
      \fill (4,2/3) circle (1.5pt) node[above right] {$\mathbf{x_t}$};
\end{tikzpicture}
\end{subfigure}
  \caption{Illustration of individual choices under two different budgets.}
  \label{fig:budgets}
\end{figure}
 Figure \ref{fig:budgets} depicts two intersecting budget lines and two arbitrary choices of the same individual when faced with either budget.\footnote{This is the simplest setting in which these axioms can produce behaviour inconsistent with utility optimisation. Because there are only two budgets, there are no cycles beyond violation of direct revealed preferences. We call this the weak axiom (or WARP).}
Assume $ x_s' $ (or any other point in its containing line-segment) is chosen if faced with the salmon budget $ B_s $. 
The individual has revealed that they prefer this choice over $ x_t' $ because the latter is also in the salmon budget ($ p_s x_s' \geq p_s x_t' $, i.e. $ x_t' \in B_s $). We conclude that $ x_s' R x_t' $.
We cannot infer anything about $ x_t $ because it is outside of the salmon budget. 
Because of the symmetry of the problem, we can also consider the teal budget $ B_t $.
Again, no claims can be made about $ x_s $ because it is not contained in it.
However, we might infer $ x_t R x_s' $ or $ x_t' R x_s' $ depending on which bundle was chosen ($x_s' \in B_t $).
In case of the latter, there is a cycle of mutual preference because from the salmon budget we concluded $ x_s' R x_t' $ and from the teal budget we concluded $ x_t' R x_s' $.
This is a violation of GARP, and ruled out by Assumption \ref{ass:collective}.\ref{ass:ego}. 
All other delegate bundles $ (x_s, x_t), (x_s', x_t), (x_s, x_t') $ are rational.
\end{exa*}

To characterise singles and couples as types we invoke two fundamental, well-established results from the revealed preference literature.
First, for singles, by \citet{Afriat1967} and \citet{Varian1982}, the existence of a utility function defined in Assumption \ref{ass:collective}.\ref{ass:ego} requires observed choices $ (x_t, p_t)_{t=1}^T  $ to satisfy GARP.
Second, for couples, by \citet{Cherchye2011}, under Assumptions \ref{ass:collective}.\ref{ass:ego} and \ref{ass:collective}.\ref{ass:efficiency}, there exist personalised continuous consumption bundles $ (\check{x}^m, \check{x}^f) $ such that $\check{x}^m + \check{x}^f = x $ and both $ (\check{x}^m_{t}, p_t)_{t=1}^T  $ and $ (\check{x}^f_{t}, p_t)_{t=1}^T  $ satisfy GARP.\footnote{Note that their characterisation also allows for public goods and consumption externalities.}

For our heterogeneous population, this means that given a realisation of the preference environment $ \omega = (\omega^m, \omega^f) $ the axioms must hold for every household (Assumption \ref{ass:collective}).
Preference heterogeneity allows the revealed preference relation $ R $ and its transitive closure $ \mathcal{R} $ (Definition \ref{dfn:garp}) to be different for any two individuals even if they face the same prices.
Thus, for $ \kappa \in \{m,f\} $, we write $ R^\kappa_{i} \equiv R(\omega^\kappa_i)$ and $ \mathcal{R}^\kappa_i \equiv \mathcal{R}(\omega_i^\kappa) $ and define $ x \ \mathcal{R}^\kappa_i \ x' $ if and only if $ (x,x') \in \mathcal{R}^\kappa_i \subseteq \mathcal{X} \times \mathcal{X} $.\footnote{$ R $ also depends on prices which we treat as fixed and the same for everyone by Assumption \ref{ass:observed}.}
For a fixed and finite number of budgets, the map $ \mathcal{R}: \omega \mapsto \mathcal{R}(\omega) $ is not injective even if utilities are.
This means that there are individuals $ \omega \neq \omega' $ whose utilities are not empirically distinguishable even if they pick different continuous bundles when faced with the same budget.
It is, thus, without loss for the test to treat their choice as equal.
Consequently, a finite number of budgets only induces a finite number of revealed preference types.\footnote{With choices on a dense set of budgets, we could recover preferences from observed choices \citep{mascolell1977, mascolell1978}. Since $ \omega $'s are ordinal preferences, the relation would then be one-to-one.}

\begin{dfn}[Individual Types]\label{dfn:garptype}
  For normalised budgets and common prices, each choice path
  $x=(x_1,\ldots,x_T)\in\mathcal X^T$ induces a revealed-preference relation
  $\mathcal R^\kappa\subseteq\mathcal X\times\mathcal X$.
  We call two choice paths equivalent if they induce the same relation.
  The resulting finite partition of $\mathcal X^T$ is denoted by
  $\bar{\mathcal X}^\kappa$, and each cell
  $\xi\in\bar{\mathcal X}^\kappa$ is called an individual revealed preference type.\footnote{Each cell is the intersection of half-spaces induced by direct revealed preference comparisons.}
\end{dfn}

Knowing an individual's revealed preference type answers all relevant \emph{revealed-preferred questions} for any two $ x, x' \in \mathcal{X} $ and, thus, describes the heterogeneous preferences of this individual, absent additional functional form restrictions.

For singles $ i \in \mathbb{N}_1 $ on either side of the matching market, $ \xi_{i} $ is directly observed from data.
For couples $ i \in \mathbb{N}_2 $, we think of household types as a \emph{latent} pair of individual types in $ \bar{\mathcal{X}}^m \times \bar{\mathcal{X}}^f $.
For normalised prices, write demand functions from the previous section as $ (x^m_i(\eta), x^f_{\sigma(i)}(\eta)) $ where $ \eta \in (0,1) $ is the endogenous relative share of endowment $ w $. 
Then exact knowledge of $ \eta_{i\sigma(i)} $ determines both individual's private consumption $ (x^m_i, x^f_{\sigma(i)}) $ and, thus, their revealed preference types $ R^m_i $ and $ R^f_{\sigma(i)} $ on the observed budgets.
By \citet{Cherchye2011}, under Assumption \ref{ass:collective}.\ref{ass:ego}-\ref{ass:efficiency}, 
$$\mathcal{X}_{i\sigma(i)} \equiv \left\{ \left( x^m_i(\eta), x^f_{\sigma(i)}(\eta) \right) : \eta \in {\mathcal E}_{i\sigma(i)} \right\} $$
is non-empty. This implies that the generating $ \mathcal E_{i\sigma(i)} \subseteq (0,1) $ must also be non-empty.
In general, the set is not a singleton, and any two couples' discrete revealed-preference-types generated by this set, are observationally equivalent since they share the same set of feasible quantities $ \mathcal{X}_{i\sigma(i)} $. 

Unfortunately, without imposing restrictions beyond Assumption \ref{ass:collective}, there is no unique way to partition this type space further, to accommodate sub-types based on each member's revealed preference type (which is identified within a stable configuration).
Thus, to discretise the space of configurations we have to make a choice.
Two possibilities have been established in the literature.
Either we pre-test the data for the existence of feasible quantities using the mixed integer approach in \citet{Cherchye2009,Cherchye2011}, discard all couples for which $ \mathcal{X}_{i\sigma(i)} $ is empty, and characterise couple's types only via the binary relation on aggregate choices. 
Alternatively, we resort to a collection of necessary conditions based on \emph{hypothesised (revealed) preference relations}, listed in Definition \ref{thm:necessary} below.
We choose the latter for the remainder of the paper, but also report results from both implementations, which we discuss in Section \ref{sec:results}.

\begin{dfn}[CARP]\label{thm:necessary} If a collection of choices $ (p_t, x_t)_{t=1}^T $ satisfies the Collective Axiom of Revealed Preferences then there exist binary relations $ R, H^m, \mathcal{H}^m, H^f, \mathcal{H}^f \subseteq \mathcal{X} \times \mathcal{X} $, s.t.
\begin{enumerate}
  \item\label{axm:i} if $ x_s \ R \ x_t $, then $x_s \ H^m\  x_t $ or $ x_s \ H^f \ x_t $,
 \item \label{axm:ii}if $ x_s \ H^\kappa \ x_{u} $,  $ x_{u} \ H^\kappa \ x_{v} $, $\ldots$, $ x_{z} \ H^\kappa \ x_t $ then $ x_s \ \mathcal{H}^\kappa \ x_t $ for $ \kappa \in \left\{ m, f \right\} $,
\item \label{axm:iii} if $ x_s \ R \ x_t $ and $ x_t \ \mathcal{H}^\kappa \ x_s $, then $ x_s \ H^{\kappa'} \ x_t $ for 
  $ \kappa \neq \kappa' $,
\item\label{axm:iv} if $ x_s \ R \ \left(x_{t_1} + x_{t_2} \right) $ and  $ x_{t_1} \ \mathcal{H}^\kappa \ x_s $ then $ x_{s} \ H^{\kappa'} \ x_{t_2} $ for $ \kappa \neq \kappa' $,
\item\label{axm:v} if $ x_{s_1} \ \mathcal{H}^m \ x_{t} $ and $ x_{s_2} \ \mathcal{H}^f \ x_{t} $ then  $ \neg \left( x_t \ R \ \left(x_{s_1} + x_{s_2}\right) \right) $,
\item \label{axm:vi}if $ x_s \ \mathcal{H}^m \ x_t $ and $ x_s \ \mathcal{H}^f \ x_t$, then $ \neg \left( x_t \ R \  x_s  \right) $
\end{enumerate}
where $ x_s \ R \ x_t $ whenever $ p_s^\top x_s \geq p_s^\top x_t $ and $ \mathcal{H}^\kappa $ is the transitive closure of $ H^\kappa$.\footnote{Note that $ R $ is a binary relation but does not correspond to an actual preference relation, since household consumption is the result of aggregation of individual preferences.}
\end{dfn}

\citealp{Cherchye2007} show that, under Assumptions  \ref{ass:collective}.\ref{ass:ego} and  \ref{ass:collective}.\ref{ass:efficiency}, the collective axiom (CARP) holds.
Since this characterisation does not use individualised quantities, we have the additional requirement of items \ref{axm:iv} and \ref{axm:v} which rule out the situation where individuals have different preferences over bundles but as a household they consume an inferior bundle when they could have afforded both.
This is clearly a violation of efficiency.
Importantly, each of the restrictions divides $ \mathcal{X}^T $ into two well-defined half-spaces.
\begin{dfn}[Collective Types]\label{dfn:carptype}
  For normalised budgets and common prices, the finite collection of
  CARP-relevant inequalities induces a finite partition
  $\bar{\mathcal X}^c$ of aggregate household choice paths
  $x^c=(x_1^c,\ldots,x_T^c)\in\mathcal X^T$.
  Two household choice paths are in the same cell if they have the same
  revealed-preference pattern for all aggregate and double-sum comparisons
  appearing in Definition \ref{thm:necessary}.
  We call each cell $\xi^c \in\bar{\mathcal X}^c$ a collective revealed preference type.
\end{dfn}
These restrictions are fine enough to classify couples not only by the collective revealed preference type $\xi^c \in\bar{\mathcal X}^c$ induced by their observed aggregate choices $(p_t,x_{i,\sigma(i),t})_{t=1}^T$, but also by the counterfactual revealed preference types assigned to their two members in a configuration.
Hence, they allow us to replace the hypothesised relations by the respective singles' actual revealed-preference relations within a given preference-stable configuration.
This strengthens the requirement of collective rationality of the observed couple, i.e. existence of a feasible resource share $ \eta $, to the configuration retaining rationality after the exchange with single preferences.
By Lemma \ref{lem:prefstable}, the additional restrictions tighten the feasible set of resource shares, thus reducing the number of preference-stable configurations.\footnote{Appendix \ref{app:examples} discusses the relationship between $ \eta $ and the random utility representation.}

With our definition of discrete types, many realisations of individual and collective types are equivalent.
Since transpositions that swap two individuals of the same type have no empirical content, we only have to sample matches based on revealed preference types rather than individual assignments.
Thus we can discretise a configuration $ \chi_i $ defined by $ (\tau^m, \tau^f) $ by a configuration type:
\begin{equation}\label{eq:conftype}
  \theta_{i} \equiv \left(\xi_{i\sigma(i)}^c, \xi_{\tau^m(i)}^m, \xi^f_{\tau^f(\sigma(i))}\right) \in \Theta \equiv \mathcal{\bar{X}}^c \times \mathcal{\bar{X}}^m \times \mathcal{\bar{X}}^f,
\end{equation} 
where we denote the subset of preference-stable type configurations by $ \Theta_0 \subset \Theta $.

The test statistic is derived in the next section. We finish this section with an example of a minimal economy that has power to detect failure of preference stability and a discussion of the dimension of the discrete type space.

\begin{exa}\label{exa:types}
  Let us revisit a typical configuration of a household $ i \in \mathbb{N}_2 $ with hypothetical partners randomised $ (\tau^m, \tau^f) \sim \rho $ by returning to our example of Figure \ref{fig:budgets}. We extend it by an third, umber budget $ u $.\footnote{The collective model is testable only when at least three goods and three budgets are available.}
Let $ x^m = x_{\tau^m(i)} $ be the single man (Apollo) consuming $ x^m_s $, $ x^m_t $, and $ x^m_u $.
Further let $ x^f = x_{\tau^f(\sigma(i))} $ be the single woman (Athena) consuming $ x^f_s $, $ x^f_t $, and $ x^f_u $.
The original couple (Zeus \& Hera) is jointly consuming $ x^c_s $, $ x^c_t $, $ x^c_u $ where $ x^c = x_{i,\sigma(i)} $.
As discussed in Lemma \ref{lem:prefstable}, we normalise budgets to one: $ p_{v} x_{v} = 1 $ for all $ v \in \{ s,t,u \} $. 
Suppose the configuration satisfies the following inequalities, sufficient to characterise it in terms of revealed preference types, which can be checked against the conditions in Definition \ref{thm:necessary}. 
{
\arraycolsep=2.75pt
\begin{equation}
\begin{array}{ccc|cc|cc}
  p_t x^c_s \geq 1,& p_u x^c_s \geq 1,& p_s (x^c_t+x^c_u) \leq 1 & p_t x_s^m \leq 1,& p_u x_s^m \geq 1& p_t x^f_s \geq 1, & p_u x^f_s \leq 1 \\
	 p_s x^c_t \geq 1,& p_u x^c_t \geq 1,& p_t (x^c_s+x^c_u) \geq 1& p_s x_t^m \geq 1,& p_u x_t^m \leq 1& p_s x^f_t \geq 1,& p_u x^f_t \geq 1 \\
   p_s x^c_u \leq 1,& p_t x^c_u \geq 1,& p_u (x^c_s+x^c_t) \leq 1 & p_s x_u^m \geq 1,& p_t x_u^m \geq 1 & p_s x^f_u \geq 1,& p_t x^f_u \geq 1 
 \end{array} \text{.}
\end{equation}
}

Each of these inequalities defines a half-space in $ \mathcal{X}  $. Households are characterised by which half contains their continuous choice, a consequence of their preferences.
In this particular configuration, each type of household is rational, the couple is CARP-consistent, the man's revelations $x^m_t R x^m_s$ and $x^m_u R x^m_t$ satisfy GARP, and the woman's $x^f_u R x^f_s$ satisfies GARP.
However, the configuration is not preference-stable: the man's $x^m_t R x^m_s$ and the woman's $x^f_u R x^f_s$ together imply, by item \ref{axm:v} of Definition 6, that the couple cannot have $x^c_s\, R (x^c_t + x^c_u)$.

By counting the distinct sign patterns of this finite set of revealed-preference inequalities induced by intersecting the  $ T $ budgets, we get the cardinality of unitary household types $ |\bar{\mathcal{X}}^s| = 2^{T(T-1)} $ for $ s \in \left\{ m, f \right\} $. For couples, we have to evaluate inequalities for double-sums according to Definition \ref{thm:necessary} \ref{axm:iv} \& \ref{axm:v}, which appear in the third column, adding another $T\binom{T-1}{2}$ comparisons. 
In our case with $ T = 3 $, we have $ |\bar{\mathcal{X}}^c| = 2^6\cdot 2^3 = 512 $ collective revealed preference types.
In total, we thus have $ 512\cdot64\cdot64 = 2^{21} = 2,097,152 $ configurations.
Evaluating them computationally, $ 475,136 $ are consistent with the collective axiom based on the necessary conditions from Definition \ref{thm:necessary} using only aggregate household consumption data.\footnote{This leaves us with about $ 22.7 \% $ collectively rational types. 
From this, we should not necessarily conclude a restrictive nature of the collective model since for a given range of budget planes only a subset of the total choice set would actually be feasible (e.g. have positive demands).}
Imposing preference stability, this further reduces to $ 2,996 $ preference stable configurations.
\end{exa}

\subsection{Testing Preference Stability}
The finite-type characterisation of the random utility model \eqref{eq:rum1} can be analysed within the stochastic choice setting of \citet{McFadden1991}, and \citet{McFadden2005}.
Rationalisability of the model, defined in equation \eqref{eq:rumdiscrete} below, asks whether the distribution of observed revealed preference types can be rationalised by a population of deterministic preference-stable configuration types $ \theta \in \Theta_0 $.
In Theorem \ref{pro:definetti}, we showed that failure of rationalisability falsifies preference stability.\footnote{By Lemma \ref{lem:prefstable} and Definition \ref{dfn:carptype}, $\theta(W_0) \subseteq \Theta_0$. The inclusion may be strict, since the discrete characterisation is necessary but not sufficient for the underlying preference-stability restriction. The test based on $\Theta_0$ therefore has correct size under \ref{hyp:poollaw} but is conservative. We discuss this in Proposition \ref{lem:equivalences}.}
For the statistical test, we must account for the sampling uncertainty entering through the estimation of the, now discrete, conditional revealed preference type distribution $ \bar{\pi} $.

For a configuration $ \chi $, defined in \eqref{eq:chi} we defined demand functions through the optimal choice rule \eqref{eq:psi}, which we now explicitly let be dependent on prices through the budget constraints and write $ \Psi_p $. Each of the resulting margins $ (x^c, x^m, x^f) $ is compatible with unitary, respectively, collective utility maximisation.
Let $ p \equiv (p_t)_{t = 1}^T $ collect all prices and let $ x(p) \equiv (x^{c}(p), x^{m}(p), x^{f}(p)) $ be the optimal demands of a given configuration where $ x^{\kappa} \equiv (x_{t}^{\kappa}(p_t))_{t=1}^T $.
We then apply the discretisation map $ \Delta : \mathcal{X}^{3\cdot T} \rightarrow \Theta $ to the optimal demands $ x(p) $ of a configuration $ \chi $ which maps to the unique equivalence class (Definitions \ref{dfn:garptype} and \ref{dfn:carptype}) containing them. 
This determines the configuration type $ \theta(\chi, p) = \Delta(\Psi_p(\chi))$ as a function of (observed) prices.

Taking prices as given, $ \theta $ defined in equation \eqref{eq:conftype}, is a deterministic function of the configuration $ \chi $.
Hence we can define the observed distribution of discrete choices of households of type $ \kappa $ as the push-forward of the distribution of configurations $ \nu^\star $: \begin{rum}
  \label{eq:rumdiscrete}
  \bar{\pi}\left(\xi^\kappa_{i} = \xi \right) = \sum\limits_{\theta \in \Theta_0} \mathbf{1}\left\{ \text{proj}^{\kappa}(\theta) = \xi \right\}\nu^{\Delta}(\theta) = \int_{\chi \in W_0} \hspace{-0.75em} \mathbf{1}\left\{ \theta(\chi) \in (\text{proj}^\kappa)^{-1}(\xi) \right\} \nu^\star(d\chi).\hspace{1.5em}
\end{rum}
This is an empirically tractable version of the random utility and matching model \eqref{eq:rum1}. 
The first equation of \eqref{eq:rumdiscrete} defines a linear program.
The data identifies only the marginal distributions of the observable revealed preference types appearing on the left-hand side of \eqref{eq:rumdiscrete}. 
The joint distribution over configurations is latent, but GARP, CARP, and preference stability restrict its support to the admissible set $\Theta_0\subset\Theta$. 

By Theorem \ref{pro:definetti}, Hypothesis \ref{hyp:poollaw} implies the existence of a distribution $\nu^\star$ on $W_0$ rationalising $\bar\pi$. 
Through the discretisation map $\theta(\chi)$, this in turn implies that the discrete marginals are rationalisable by $\nu^\Delta = \nu^\star \circ \theta^{-1}$ on $\Theta_0$. 
Proposition \ref{lem:equivalences} characterises this discrete rationalisability and provides the basis for the test statistic.

\begin{pro}\label{lem:equivalences}
  Under Assumptions \ref{ass:matching}-\ref{ass:observed}, the following statements are equivalent:
\begin{enumerate}
\item \label{equ:qualitative} The marginal choice distributions $\bar\pi^c, \bar\pi^m, \bar\pi^f$ are rationalisable by a distribution supported on the set of preference-stable configuration types $\Theta_0$ according to \eqref{eq:rumdiscrete}.
\item \label{equ:system} There exists $\nu^\Delta $ on the $ |\Theta_0|$-dimensional unit simplex such that $ A \nu^\Delta = \bar{\pi} $, where the columns of $ A $ represent all preference-stable type configurations $ \theta \in \Theta_0 $. 
\item \label{equ:quadratic} For $ \underline{\nu} = 0 $, the projection residual satisfies $ \mathcal{J}_n(\bar{\pi}, \underline{\nu}) \equiv n \min_{\gamma \in \left\{ A\nu^\Delta | \nu^\Delta \geq \underline{\nu} \right\}} (\bar{\pi} - \gamma)^T \Omega (\bar{\pi} - \gamma) = 0 $ where $ \Omega $ is a positive definite square weighting matrix.
\item \label{equ:nnls} The vector $ \nu^\Delta $ is a fixed point under the operation 
\begin{equation}\label{eq:landweberstep}
\Gamma_{\bar{\pi},\underline{\nu}} : s \mapsto \text{max}(0, s - \text{diag}(H\iota)^{-1}(Hs + f(\bar{\pi}, \underline{\nu})))
\end{equation}
where $ H = A^\top \Omega A $ and $ f(\bar{\pi}, \underline{\nu}) = - A^\top \Omega (\bar{\pi} - A \underline{\nu}) $.
\end{enumerate}
\end{pro}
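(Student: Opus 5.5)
I would prove the equivalences as a cycle (i)$\Leftrightarrow$(ii)$\Leftrightarrow$(iii)$\Leftrightarrow$(iv), treating the data $\bar\pi$ as the stacked vector of the three marginal type distributions $(\bar\pi^c,\bar\pi^m,\bar\pi^f)$ indexed by $\bar{\mathcal X}^c\sqcup\bar{\mathcal X}^m\sqcup\bar{\mathcal X}^f$.

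\emph{(i)$\Leftrightarrow$(ii).} Build $A$ as a $\{0,1\}$-matrix with one column per $\theta\in\Theta_0$ and one row per pair $(\kappa,\xi)$, with entry $\mathbf 1\{\text{proj}^\kappa(\theta)=\xi\}$. The first equality in \eqref{eq:rumdiscrete} is then literally the row-wise statement $A\nu^\Delta=\bar\pi$. A measure on the finite set $\Theta_0$ is a vector in the unit simplex. Hence a rationalising $\nu^\Delta$ in (i) is exactly a simplex solution in (ii), and conversely. Since each $\bar\pi^\kappa$ sums to one, nonnegativity plus $A\nu=\bar\pi$ already forces $\iota^\top\nu=1$. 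So the simplex constraint may be replaced by the cone $\nu\ge 0$; this matters for (iii).

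\emph{(ii)$\Leftrightarrow$(iii).} Since $\Omega\succ0$, the quadratic form is a norm squared, so $\mathcal J_n(\bar\pi,0)=0$ iff $\bar\pi$ belongs to the closed set $\{A\nu:\nu\ge0\}$. This set is a finitely generated cone, hence closed, so the minimum is attained. Membership is then equivalent to (ii) by the remark above.

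\emph{(iii)$\Leftrightarrow$(iv).} The problem $\min_{\nu\ge0}\tfrac12\nu^\top H\nu+f^\top\nu$ with $f=f(\bar\pi,0)$ is a convex QP equal, up to a constant, to the projection objective. Its KKT conditions are the complementarity conditions $\nu\ge0$, $g\equiv H\nu+f\ge0$, $\nu^\top g=0$. Let $D=\text{diag}(H\iota)$. Because $A\ge0$ has no zero column (every $\theta$ projects to some type) and $\Omega$ is positive definite, I expect $H\iota>0$ componentwise. This is the step I am least sure of, since $\Omega$ need not have nonnegative entries. If it fails in general, I would restrict to the diagonal or identity weighting actually used, or replace $D$ by any positive diagonal scaling; the fixed-point argument below is unchanged.

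With $D$ positive, a fixed point $s=\max(0,s-D^{-1}g(s))$ holds componentwise iff either $s_k>0$ and $g_k=0$, or $s_k=0$ and $g_k\ge0$. These are exactly the KKT conditions. By convexity the fixed points are therefore precisely the minimisers. Then (iii) holds iff the minimal value is zero at such a fixed point, i.e.\ $A\nu^\Delta=\bar\pi$ for the fixed point $\nu^\Delta$, which recovers (ii).

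\emph{Interpretation of (iv).} Because $H$ is typically singular, the minimiser is not unique, and (iv) must be read as: \emph{some} fixed point attains zero residual, equivalently every fixed point does, since all minimisers share $A\nu$. I would state this explicitly.

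I expect the main obstacle to be the bookkeeping of the $\underline\nu$ shift. For general $\underline\nu$, substituting $\nu=\underline\nu+s$ turns $f$ into the stated $-A^\top\Omega(\bar\pi-A\underline\nu)$; with $\underline\nu=0$ this reduces to the case treated above.
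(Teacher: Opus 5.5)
Your argument is correct. For (i)$\Leftrightarrow$(ii) it coincides with the paper's: the columns of $A$ are stacked one-hot projections of $\theta$, and the simplex constraint is recovered from the marginals summing to one. The paper phrases the latter as $3=\iota^\top\bar\pi=\iota^\top A\nu^\Delta=3\iota^\top\nu^\Delta$.

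The routes then diverge. For (ii)$\Leftrightarrow$(iii) the paper defers to \citet{McFadden1991,McFadden2005}. You instead argue directly from $\Omega\succ 0$ and closedness of the finitely generated cone, which makes the step self-contained.

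For (iii)$\Leftrightarrow$(iv) the paper substitutes $\nu^\Delta=\underline{\nu}+s$ and writes the canonical NNLS form with $H$ and $f$. It then cites \citet{Johansson2006} for the claim that the coordinate-wise projected step ``will find the solution'', so it relies on a convergence result for the iteration. You instead show that, for a positive diagonal scaling, the fixed points of $\Gamma$ are exactly the KKT points, and hence, by convexity, exactly the minimisers. That is precisely what a statement about fixed points requires.

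Your route also surfaces two points the paper's proof leaves implicit. First, $\text{diag}(H\iota)$ must be positive. This can fail for a positive definite $\Omega$ with negative off-diagonal entries, because $(H\iota)_\theta$ is a sum of entries of $\Omega A\iota$. Your restriction to a positive diagonal weighting, as in \citet{KitamuraStoye2013}, is therefore warranted. Second, (iv) characterises (ii) only when read as ``some, equivalently every, fixed point has zero residual''. The equivalence between ``some'' and ``every'' follows because the objective is strictly convex in $\gamma$, so the minimising $\gamma=A\nu$ is unique.

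What the paper's citation buys beyond your argument is convergence of the iteration to a fixed point. That matters computationally but is not needed for the equivalence itself.
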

We construct the matrix $ A $ in $ A \nu^\Delta = \bar{\pi} $, defined in Proposition \ref{lem:equivalences}.\ref{equ:system}, based on deterministic configuration-types, i.e. with a typical column representing a preference-stable configuration, which vertically concatenates one-hot encodings of a male single type ($ \xi^m $), a female single type ($\xi^f$), and a couple type ($\xi^c$) each of them individually rational.
Consequently, the matrix consists of $ \sum_{\kappa \in \left\{c,f,m\right\}} |\bar{\mathcal{X}}^{\kappa}| $ rows and $ |\Theta_0| $ columns, where $ |\bar{\mathcal{X}}^{\kappa}|  $ is the number of different choices a household of a given kind can make.
We then split $ A $ into $ 3 $ blocks of respective row-length $ |\bar{\mathcal{X}}^c| $, $ |\bar{\mathcal{X}}^f| $ and $ |\bar{\mathcal{X}}^m| $ and denote by $ A_{\kappa,\boldsymbol{\cdot},\boldsymbol{\cdot}} $ each block of $ A $.
If household configuration $ \theta \in \Theta_0 $ (columns, indexed by $ l $) yields type $ \xi^{\kappa}_{j} $ for $ \kappa \in \left\{c,f,m\right\} $ then $ A_{\kappa,j,l} = 1 $ and zero otherwise.

Because there are many preference-stable configurations compared to the number of individual types, the matrix $ A $ does not have full column-rank.
Thus $ \nu $ is not point-identified. Following \citet{KitamuraStoye2013} we exploit Proposition \ref{lem:equivalences}.\ref{equ:quadratic} as the computational formulation for the condition \ref{equ:system}, in which we obtain  $ \gamma $ by projecting choice probabilities $ \bar{\pi} $ onto the linear cone enforcing the preference-stability constraints $ \left\{ A\nu : \nu \geq \underline{\nu} \right\} $ and define the test statistic as the corresponding projection residual.
The case $\underline{\nu}=0$ gives the population rationalisability condition, while inference below uses tightened lower bounds.

\subsection{Inference}\label{subs:inference}
The vector of choice probabilities $ \bar{\pi} $ is subject to sampling uncertainty.
To obtain the sample statistic $ \mathcal{J}_n(\widehat{\bar{\pi}}_n, \underline{\nu}) $, we require a consistent estimator $ \widehat{\bar{\pi}}_n $ of $ \bar{\pi} $.
To obtain critical values for the random quantity $ \mathcal{J}_n(\widehat{\bar{\pi}}_n, \underline{\nu}) $, we need a consistent approximation of the asymptotic distribution $ \sqrt{n}(\widehat{\bar{\pi}}_n - \bar{\pi}) $.
In Theorem \ref{pro:definetti}, we established the de Finetti representation, as a consequence of exchangeability of $ \chi $ and uniform sampling of transpositions.
This result immediately carries over to household types, due to the measurability of the map $ \Delta $ from configurations to configuration types $ \theta $.
Hence, conditional on the permutation-invariant sigma-algebra ${\mathcal{I}^\star} $, observed revealed preference types are i.i.d. with probabilities $\bar{\pi}^\kappa=(\bar{\pi}^\kappa_1,\dots,\bar{\pi}^\kappa_{|\bar{\mathcal{X}}^\kappa|})$ where $  \bar{\pi}_{j}^\kappa \equiv \bar{\pi}(\xi_i^\kappa=\xi_j^\kappa) $.

Consequently, we can obtain a consistent estimator $ \widehat{\bar{\pi}}_n $ for $ \bar{\pi} $, by taking sample analogues of the discrete choice probabilities. 
Partitioning $ \bar{\pi} $ the same way as a column $ A_{\kappa} $, we estimate the sample proportions of a given type by $ \widehat{\bar{\pi}}^{\kappa}_{n,j} = \frac{1}{n_{\kappa}}\sum_{i=1}^{n_{\kappa}} \indicator \{ \xi^{\kappa}_{i} = \xi^{\kappa}_j \} $ where $ \xi^{\kappa}_i $ is the revealed preference type of household $ i = 1 \ldots n^{\kappa} $.

To obtain the critical values for inference, we may use a non-parametric bootstrap. 
We construct a bootstrap sample $ \widehat{\bar{\pi}}_n^b $ for  $ b = 1 \ldots B $.
Because of many binding constraints, inference requires a tuning parameter $\underline{\nu} = \tau_n \iota$ with $\tau_n \to 0$ as $ n \to \infty $.\footnote{We need this, because otherwise many parameters lie on the boundary of the parameter space. Without it, the bootstrap would not be valid \citep{Andrews2000}. $ \tau_n = |\Theta_0|^{-1}\sqrt{{\log \underline{n}}/{\underline{n}}} $ is a tightening parameter that shifts out the cone from the origin  where $ \underline{n} $ is the minimum number of available observations among  $ n^c, n^m, n^f $ and $ \iota $ is the vector of ones with dimension aligning with $ \nu $. We set bootstrap repetitions to $ B = 500 $, and tune the tightening parameter $ \tau_n $ based on our simulation study. }
Let $ \widehat{\gamma}_{n,\tau_n} $ be the minimiser of \ref{equ:quadratic} under the tightened cone constraint.
For each bootstrap draw $ b $, we compute the centred choice probabilities $ \widehat{\bar{\pi}}^b_{n,\tau_n} = \widehat{\bar{\pi}}_n^b - \widehat{\bar{\pi}}_n + \widehat{\gamma}_{n,\tau_n} $ and evaluate the test statistic $ \mathcal{J}_n(\widehat{\bar{\pi}}^b_{n,\tau_n}, \iota\tau_n) $ to obtain their empirical distribution $ \widehat{F}_{n,B,\mathcal{J}_n} $.
We now establish that the corresponding critical value yields an asymptotically valid test.

\begin{thm}\label{thm:inference}
  Under Assumptions \ref{ass:matching}-\ref{ass:observed} we have for $ \alpha \in (0, \frac{1}{2}) $, vanishing tuning parameter $ \tau_n \to 0 $,  $ \tau_n \sqrt{n} \rightarrow \infty $, and non-vanishing subpopulations $ \frac{n_\kappa}{n} \to c_\kappa > 0 $  for $ \kappa \in \{ c, m, f \} $:
\begin{equation}\label{eq:test}
 \liminf_{n\to\infty} \inf\limits_{\bar{\pi} \in\left\{ A\nu : \nu \geq 0 \right\} } \mathbf{P} \left( \mathcal{J}_n(\widehat{\bar{\pi}}_n,\tau_n \iota) \leq \widehat{F}^{-1}_{n,B,\mathcal{J}_n}(1-\alpha) \right) = 1 - \alpha  \text{.}
\end{equation}
\end{thm}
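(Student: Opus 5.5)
The plan is to reduce Theorem \ref{thm:inference} to the uniform validity result of \citet[Theorem 4.2]{KitamuraStoye2013} by checking that its high-level conditions hold in our setting. Two facts make this possible. First, by Proposition \ref{lem:equivalences}, the null $\bar\pi \in \{A\nu:\nu\geq 0\}$ is a finitely generated cone, so $\mathcal{J}_n$ is the same cone-projection residual they study. Second, the test statistic is computed conditionally on $\mathcal{I}^\star$, under which Theorem \ref{pro:definetti} gives i.i.d. sampling. The one new feature is that $\widehat{\bar\pi}_n$ stacks three independently estimated multinomial blocks, for couples, single men and single women, with different sample sizes. The argument below treats that stacking explicitly.

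\textbf{Step 1: a conditional central limit theorem.} I would argue conditionally on $\mathcal{I}^\star$, i.e. for fixed realised $\bar\nu$ and $\bar\pi$. Theorem \ref{pro:definetti} and measurability of $\Delta$ make the types $\xi_i^\kappa$ i.i.d. within each block $\kappa$ with probabilities $\bar\pi^\kappa$. Each block estimator is therefore a multinomial sample mean. Because $n_\kappa/n\to c_\kappa>0$, we get
\[
\sqrt{n}\,(\widehat{\bar\pi}_n-\bar\pi) \rightsquigarrow N\bigl(0,\ \mathrm{blockdiag}(c_\kappa^{-1}\Sigma_\kappa)\bigr),
\]
where $\Sigma_\kappa=\mathrm{diag}(\bar\pi^\kappa)-\bar\pi^\kappa\bar\pi^{\kappa\top}$.

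Two points need care here. The blocks must be independent conditionally on $\mathcal{I}^\star$. I would obtain this from the product structure of the de Finetti representation across pools (Lemma \ref{lem:perm}), since the pools are exchangeable separately. The convergence must also be uniform over the null set. This follows from a Lindeberg or Berry--Esseen bound for bounded multinomial summands, which holds uniformly in the cell probabilities. Finally, the unconditional statement follows from the conditional one by dominated convergence, because the coverage probabilities are bounded by one.

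\textbf{Step 2: the bootstrap.} The nonparametric bootstrap resamples within each block. Conditionally on the data, it therefore reproduces the same Gaussian limit, uniformly. This is the standard multinomial bootstrap result, applied block by block.

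\textbf{Step 3: the tightening argument.} This follows \citet{KitamuraStoye2013}. The tightened cone $\{A\nu:\nu\ge\tau_n\iota\}$ sits at distance $O(\tau_n)$ from the true cone. Because $\tau_n\sqrt n\to\infty$, every constraint that is slack at $\tau_n$ becomes slack at scale $\sqrt n$. Locally, the bootstrap cone then coincides with the tangent cone of the tightened set at $\widehat\gamma_{n,\tau_n}$. The bootstrapped statistic therefore stochastically dominates the limit of $\mathcal{J}_n$ under any $\bar\pi$ in the null. The fact that $\tau_n\to 0$ ensures the tightening vanishes in the limit, so that along least-favourable sequences with all constraints binding the limit law of $\mathcal J_n$ is reproduced exactly. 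Combining the two directions gives the $\liminf\inf$ equal to $1-\alpha$.

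The restriction $\alpha<1/2$ is used to guarantee that the $(1-\alpha)$-quantile of the limiting weighted chi-bar-square law is a continuity point. That law has at most an atom at zero, with mass at most one half.

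\textbf{Step 4: the weighting matrix and the main obstacle.} I would take $\Omega$ positive definite (or a consistent estimate of one). The projection is then Lipschitz, and the continuous mapping theorem applies. The main obstacle is Step 3's geometric argument, namely that the tightened cone's local approximation is uniform over $\bar\pi$. \citet{KitamuraStoye2013} handle this using a representation of the cone as an intersection of finitely many half-spaces, $\{B\gamma\le 0, C\gamma=0\}$, via Weyl--Minkowski. I would invoke that representation for our fixed matrix $A$. Since $A$ does not depend on $n$, their constants carry over once the block-wise CLT of Step 1 replaces their single-multinomial CLT.
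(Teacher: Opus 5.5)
Your proposal takes the same route as the paper's proof. You verify the analogues of Assumption 4.1 (non-vanishing $n_\kappa/n$) and Assumption 4.2 (i.i.d.\ sampling within each block, here conditionally on $\mathcal{I}^\star$ via Theorem \ref{pro:definetti} and measurability of the type map) of \citet{KitamuraStoye2013}, and then invoke their uniform validity result. Along the way you spell out points the paper leaves implicit, notably the conditional independence across the couple, single-male and single-female blocks that the within-pool bootstrap needs, and the role of $\alpha<\frac{1}{2}$. Like the paper, you tacitly take the sample statistic to be their untightened residual $\mathcal{J}_n(\widehat{\bar{\pi}}_n,0)$, with $\tau_n$ entering only the recentred bootstrap, and your claim that $\mathcal{J}_n$ ``is the same cone-projection residual they study'' should say so explicitly. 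With $\tau_n\iota$ in the statistic, as displayed, the claim fails: at $\bar{\pi}=A_{\cdot,l}$ (all mass on one configuration type) the data are degenerate and every bootstrap statistic equals $0$, yet $\mathcal{J}_n(\widehat{\bar{\pi}}_n,\tau_n\iota)\geq n\,\lambda_{\min}(\Omega)\,\tau_n^2>0$.
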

\begin{proof}
We have to check the conditions of \citet{KitamuraStoye2013}.
Their Assumption 4.1 requires that $ n_\kappa/n $ does not converge to zero. Here, the number of singles and couples must grow at the same rate so that no subpopulation vanishes.
For the analogue of their Assumption 4.2, random sampling of each observed distribution $ \bar{\pi}^\kappa $, we refer to Theorem \ref{pro:definetti} and measurability of $ \theta $, which, indeed, induces conditional i.i.d. marginals through the measurable function $ \Psi $. We conclude that the within-pool bootstrap consistently approximates the conditional law of $\sqrt n(\widehat{\bar{\pi}}_n-\bar{\pi})$. 
\end{proof}

Computing the test statistic requires repeated solution of a high-dimensional constrained quadratic problem.
Rather than relying on generic sequential quadratic programming routines used for solving inequality constrained problems\footnote{This algorithm is used for \texttt{lsqnonneg} (Matlab) and \texttt{optimize.nnls} (SciPy).}, we rewrite the problem as non-negative least squares, exploit the sparsity of $ A $, and implement a coordinate-wise projection method \citep{Franc2005, Johansson2006}.
Equation \eqref{eq:landweberstep} in Proposition \ref{lem:equivalences}.\ref{equ:nnls} defines the step and shows convergence.

Finally, in our simulation study, we find that the test has power to detect an ''irrational`` population of close to one with 500 observations per household composition if only 15\% of the population is not preference-stable. 
By doubling the sample size, the required proportion drops to 5\%. 
In addition, we discuss worst cases by considering ''similar configurations`` and show correct size under different worst-case samples.

 \section{Empirical Analysis}\label{sec:results}
In this section, we apply the test to three household panels that differ in data quality and measurement detail. 
Across all three datasets, the evidence points against stable preferences. Testing varying specifications, helps us understand the effects of price variation, sample size, and the assumptions of the model. 
We conclude by examining how these findings hold up against different extensions and robustness checks.

\subsection{Data}\label{subs:data}
For the test we consider households consisting of singles or couples.
We exclude households with children or other cohabiting groups of individuals who are not in a romantic relationship.
We consider a minimal setting with three periods and three goods, where we have $ 64 $ types of singles and $ 512 $ types of couples, resulting in $ 2,996 $ collectively rational preference-stable configuration types (see Example \ref{exa:types}).
Two of the panels we study are longer than necessary.
For transparency, we report results for different combinations of years. After dropping incomplete cases, we order the year triplets by the resulting sample size.
Due to attrition in panels, this pick out consecutive years.
We face the trade-off between sample size and price variation.\footnote{A discussion about the effectiveness of revealed preference methods with respect to price variation can be found in \citet{Crawford2011}.} 

First, we apply the test to the time use and consumption module \citep{Cherchye2012} from the Dutch LISS (Longitudinal Internet Studies for the Social Sciences) panel.
The panel is collected by CentERdata and consists of 5000 households and 8000 individuals, drawn from the population register of Statistics Netherlands. 
The survey is internet-based where households are provided with the necessary hardware to participate in the study.
Prices are obtained from the Dutch CPI for different consumption categories published by Eurostat (normalized to $ 100 $ for the year $2005$).
We select the private consumption categories: clothing, food \& beverages and recreation.

Second, we consider phase two of the Russian Longitudinal Monitoring Survey (RLMS), collected in form of personal interviews by the Carolina Population Center (University of North Carolina) and available for the years 1994 -- 2014.
Due to the amount of zeros observed for many private consumption expenditure categories, we focus on different categories of food. The survey distinguishes between 57 different food consumption categories, which we aggregate to dairy, bread and meat.
These three categories account for more than half of the food consumption, which itself takes a large proportion of total expenditure.\footnote{We make use of a weak separability assumption that is standard in the empirical demand estimation literature which allows us to be able to consider a subset of goods for estimation. Later, we relax this by allowing for endogeneity of expenditure on the selected goods.}
Price data is obtained from the Federal State Statistics Service (GKS) and available for the years: 2000, 2005, 2010, 2011 -- 2015.

Third, we use data from the Spanish Continuous Family Expenditure Survey (ECPF), collected by the Spanish statistics office (INE) on a quarterly basis for the period 1985 -- 2005.
The survey is designed in a way that participants are part of the sample for at most eight consecutive periods or two years.
There was a discontinuity in the design of the study in 1997, where the focus was shifted away from detailed consumption expenditure categories.
The ECPF was replaced by the Encuesta de Presupuestos Familiares (EPF) in 2006, where the collection frequency was extended to yearly with participation lifespan of two years being maintained.
Requiring a panel of at least three periods we, therefore, use data from the original ECPF from 1985 to 1996.
We select the same goods as in the LISS panel: clothing, food consumed outside of the household, and consumption of non-durables.
Price data is also published by INE.
Descriptive statistics can be found in Tables \ref{tab:descriptive.onebig.priv} and \ref{tab:descriptive.onebig.pub} in Appendix \ref{sec:descriptives}.

\subsection{Results}\label{subs:results}

Table \ref{tab:private.nomip.nohet} presents the baseline results in the form of p-values for different combinations of periods.
Rejection, indicated by a low p-value, corresponds to a violation of the stable-preference hypothesis.

{\singlespacing \begin{table}[ht!]
\caption{Results for Private Goods with Exogenous Prices}
\centering
\singlespacing
\small
\renewcommand{\arraystretch}{0.9}
\resizebox{.9\textwidth}{!}{
\begin{tabular}{cccccc}
\multicolumn{6}{c}{\textsc{Longitudinal Internet studies for the Social Sciences (LISS)}} \\
\toprule
Years  &  $ N_{\text{couples}}^{\text{total}} $  &  $ N_{\text{couples}}^{\text{rational}} $  &  $ N_{\text{singles}}^{\text{total}} $  &  $ N_{\text{singles}}^{\text{rational}} $  &  p-value \\
\midrule
2009   2010   2012  &  605  &  598  &  463  &  380  &  0.000 \\
[-7pt]\\

\multicolumn{6}{c}{\textsc{Russian Longitudinal Monitoring Survey (RLMS)}} \\
\toprule
Years  &  $ N_{\text{couples}}^{\text{total}} $  &  $ N_{\text{couples}}^{\text{rational}} $  &  $ N_{\text{singles}}^{\text{total}} $  &  $ N_{\text{singles}}^{\text{rational}} $  &  p-value \\
\midrule
2012   2013   2014  &  322  &  319  &  300  &  295  &  0.138 \\
2011   2013   2014  &  309  &  308  &  275  &  272  &  0.112 \\
2011   2012   2014  &  310  &  310  &  279  &  276  &  0.176 \\
2011   2012   2013  &  328  &  328  &  308  &  305  &  0.128 \\
2010   2013   2014  &  255  &  253  &  215  &  211  &  0.082 \\
2010   2012   2014  &  252  &  251  &  218  &  215  &  0.042 \\
2010   2012   2013  &  264  &  263  &  239  &  234  &  0.002 \\
2010   2011   2013  &  264  &  264  &  235  &  234  &  0.156 \\
2010   2011   2012  &  294  &  294  &  264  &  263  &  0.100 \\
2005   2011   2012  &  256  &  256  &  212  &  208  &  0.082 \\
[-7pt]\\

\multicolumn{6}{c}{\textsc{Spanish Continuous Family Expenditure Survey (ECPF)}} \\
\toprule
Years  &  $ N_{\text{couples}}^{\text{total}} $  &  $ N_{\text{couples}}^{\text{rational}} $  &  $ N_{\text{singles}}^{\text{total}} $  &  $ N_{\text{singles}}^{\text{rational}} $  &  p-value \\
\midrule
1994.3   1994.1   1994.2  &  106  &  104  &  5  &  3  &  0.076 \\
1993.4   1994.1   1994.2  &  108  &  97  &  8  &  5  &  0.056 \\
1992.2   1992.3   1992.1  &  93  &  89  &  15  &  14  &  0.004 \\
1990.4   1991.1   1991.2  &  95  &  91  &  14  &  12  &  0.004 \\
1989.3   1989.1   1989.2  &  107  &  103  &  4  &  3  &  0.024 \\
1988.2   1988.3   1988.4  &  96  &  93  &  6  &  3  &  0.498 \\
1987.1   1987.2   1987.3  &  124  &  120  &  8  &  6  &  0.128 \\
1986.4   1987.1   1987.2  &  154  &  149  &  9  &  5  &  0.440 \\
1986.3   1986.4   1987.1  &  129  &  121  &  7  &  6  &  0.006 \\
1986.3   1986.4   1986.2  &  125  &  124  &  9  &  9  &  0.050 \\
[-7pt]\\

\bottomrule
\end{tabular}}
\label{tab:private.nomip.nohet}
\vspace{-\baselineskip}
\floatnote{Number of total couples, rational couples according to aggregate CARP, total singles and rational singles according to GARP, for different combinations of periods. Sampling for the ECPF is quarterly, for which we use the \texttt{year.quarter} notation.}
\vspace{-\baselineskip}
\end{table}
 }

There is strong evidence to reject the stable-preference hypothesis for the LISS panel, for the RLMS and ECPF there are some combinations of periods for which there is not enough evidence to arrive at this conclusion.
In these non-rejection cases, we either have three consecutive years in which we are faced with limited power of revealed preference axioms due to the lack of price variation, or a particularly small sample size due to the wider span of considered years in combination with attrition.
This all points towards the trade-off discussed above.
For the RLMS, the food-bundle specification is, perhaps, more prone to habit formation. The test rejects there only at the 10\% level.
The sample for the ECPF is very small, particularly for single households.
Abstracting from the inferior statistical properties of the test in small samples (we still have numerical convergence), the strong rejection of the hypothesis may reflect a finite-sample support issue, i.e. it is harder to rationalise the choice distributions when we observe zero probability for some single types.

{ \singlespacing \begin{table}[ht!]
\caption{Results Conditional on Demographics for Private Goods, Exogenous Prices}
\centering
\singlespacing
\small
\renewcommand{\arraystretch}{0.9}
\resizebox{\textwidth}{!}{
\begin{tabular}{cccccccc}
\multicolumn{8}{c}{\textsc{Longitudinal Internet studies for the Social Sciences (LISS)}} \\
\toprule
Years  &  College  &  Age  &  $ N_{\text{couples}}^{\text{total}} $  &  $ N_{\text{couples}}^{\text{rational}} $  &  $ N_{\text{singles}}^{\text{total}} $  &  $ N_{\text{singles}}^{\text{rational}} $  &  p-value \\
\midrule
2009   2010   2012  &  1  &  2  &  92  &  90  &  79  &  71  &  0.138 \\
2009   2010   2012  &  1  &  1  &  163  &  162  &  93  &  75  &  0.000 \\
2009   2010   2012  &  1  &  0  &  65  &  65  &  64  &  56  &  0.142 \\
2009   2010   2012  &  0  &  2  &  137  &  134  &  121  &  97  &  0.000 \\
2009   2010   2012  &  0  &  1  &  114  &  113  &  85  &  65  &  0.002 \\
2009   2010   2012  &  0  &  0  &  34  &  34  &  21  &  16  &  0.004 \\
[-7pt]\\

\multicolumn{8}{c}{\textsc{Russian Longitudinal Monitoring Survey (RLMS)}} \\
\toprule
Years  &    &  Age  &  $ N_{\text{couples}}^{\text{total}} $  &  $ N_{\text{couples}}^{\text{rational}} $  &  $ N_{\text{singles}}^{\text{total}} $  &  $ N_{\text{singles}}^{\text{rational}} $  &  p-value \\
\midrule
2010   2011   2012  &    &  2  &  77  &  77  &  119  &  114  &  0.010 \\
2010   2011   2012  &    &  1  &  161  &  160  &  105  &  105  &  0.122 \\
2010   2011   2012  &    &  0  &  56  &  56  &  40  &  39  &  0.026 \\
2011   2012   2013  &    &  2  &  77  &  77  &  136  &  130  &  0.232 \\
2011   2012   2013  &    &  1  &  193  &  190  &  127  &  127  &  0.026 \\
2011   2012   2013  &    &  0  &  55  &  54  &  42  &  42  &  0.020 \\
2012   2013   2014  &    &  2  &  75  &  75  &  133  &  128  &  0.000 \\
2012   2013   2014  &    &  1  &  196  &  191  &  131  &  128  &  0.002 \\
2012   2013   2014  &    &  0  &  49  &  49  &  34  &  33  &  0.024 \\
[-7pt]\\

\bottomrule
\end{tabular}}
\label{tab:private.nomip.het}
\vspace{-\baselineskip}
\floatnote{Number of total couples, rational couples according to aggregate CARP, total singles and rational singles according to GARP, for consecutive periods and demographics.}
\vspace{-\baselineskip}
\end{table}
 }

Next, Table \ref{tab:private.nomip.het} reports results of the test which takes into account matching on education and age, by conditioning on these observed demographics.
We limit ourselves to consecutive years, in which these demographics are likely to be stable over time.\footnote{In large enough samples, this could easily be carried out with time-dependent demographics, however such an approach would suffer from the curse of dimensionality, and we already have to manage a relatively small sample size for the RLMS.}
While we observe these characteristics for both spouses, we only look at assortatively matched couples, i.e. couples in which individuals fall into the same education and age category.\footnote{For non-assortative matching, we would have to change the randomisation procedure over type transpositions to one that retains the observed matching pattern. For example, for a couple with a low educated man, and a high educated woman we would only consider swaps with single men and women within the same respective demographic category. The theory goes through.}
Assortative matching accounts for almost all of the couples in the sample.
We define education as a binary variable indicating whether the individual has completed higher education (college) or not, and age as a categorical variable with three classes: $ \{0: \text{age} < 40, \; 1: \; 40 \leq \text{age} \leq 60, \; 2: \; \text{age} > 60 \} $.
We repeat the analysis, for all combinations of education and age, only for the LISS panel and RLMS. Rejections go through across all non-college subpopulations whereas only the middle-aged subsample of college-educated couples rejects.

\subsection{Extensions and Robustness}\label{sec:extensions}

In this section we discuss how we can weaken the no-consumption-externalities assumption, deal with endogeneity of budgets, and discuss a different characterisation of the collective axiom.
We elaborate on the corresponding empirical results and refer the interested reader to the tables in Appendix \ref{sec:robustness}. 

\paragraph{Public Goods}
To account for arbitrary consumption externalities, we augment the random utility and matching model with a \citet{Barten1964} linear consumption technology, famously adapted to the collective model by \citet{BCL2013}.
Then household $ (i, \sigma(i)) $ maximises
\begin{align}\label{eq:collectiveBarten}
  \max_{x^f, x^m } &\left\{\lambda_{i\sigma(i)}(p_t) u^m_{i}(x^m) + (1-\lambda_{i\sigma(i)}(p_t)) u^f_{i}(x^f) \right\} \\
  \text{s.t.} \;  & \; D ( \optionaltilde{x}^f + \optionaltilde{x}^m ) \in B_{i\sigma(i)t} = \left\{ \optionaltilde{x} \; | \; p^\top_t \optionaltilde{x} \leq w_{i\sigma(i)t} \right\} 
\end{align}
where $ D $ is a production technology matrix.
It is commonly assumed to be diagonal, restricting complementarities between consumption externalities. 
Its elements range from $ 0.5 $ for an entirely public good for which both individuals pay half of the prices, to $ 1.0 $ for an entirely private good for which individuals pay market prices.
The production technology matrix $ D $ is not identified without restrictions on heterogeneity or the functional form of utilities.
Thus we will calibrate $ D $ from estimates of \citet{Cherchye2017}, a study conducted using the LISS panel.\footnote{A promising approach is adopted by \citet{Gauthier2025} who uses the assignable consumption in the LISS panel, to extend the collective axiom to incorporate a household production function. One could, theoretically, use these extended axioms to allow for more general forms of the production technology, in line with the non-parametric nature of the test. We leave this for future research.}

For the empirical specification, we select the aggregate goods \emph{housing, transport, and energy}, with corresponding Barten scales: $ \text{diag}(D) = (0.683, 0.692, 0.748)$,
which, arguably, represent goods subject to consumption externalities taking values about half way on the spectrum from public to private.
They are equally available in the LISS and the RLMS.
For the ECPF we use a hybrid specification using \emph{clothing, transportation, and petrol}, with Barten scales: $\text{diag}(D) = (1.00, 0.683, 0.748) $.
We obtain house price indices (HPI) from the same sources as the respective CPIs.

Table \ref{tab:public.nomip.nohet} (no demographics) and Table \ref{tab:public.nomip.het} (by education and age) in Appendix \ref{sec:robustness} report the results. 
Despite the much larger dataset, which appears due to fewer boundary cases, the evidence is not as clear as for the private goods.
This could be due to the additional homogeneity restriction imposed by the Barten technology, which is assumed to be the same for all households.
Despite this, we still reject the stable preference hypothesis for most datasets for a 10\% significance level.

\paragraph{Endogeneity of Total Expenditure}
Total expenditure may be endogenous.
In particular, we may think of it as determined by household income $ y_{i\sigma(i)} $ and an unobserved taste shifter $ \zeta_{i\sigma(i)} = \zeta(\omega^m_i, \omega^f_{\sigma(i)}) $, a one-dimensional summary of preferences capturing the household's propensity to allocate resources toward the goods we study:
\begin{equation}
w_{i\sigma(i)} = g(y_{i\sigma(i)}, \zeta_{i\sigma(i)}).
\end{equation}
The relationship is unconstrained, other than $g$ being strictly increasing in its second argument. 
Income may itself depend on $(\omega^m_i, \omega^f_{\sigma(i)})$ through channels such as labour supply and human capital. 
We follow the standard assumption that $ \zeta_{i\sigma(i)} $ is orthogonal to those channels. 
Following \citet{Imbens2009}, we exploit the monotonicity of $g$ to define the control function $v_{i\sigma(i)}$ as the rank of total expenditure, given income, through the conditional CDF: \begin{equation}
v_{i\sigma(i)} = F_{W \mid Y}\left(w_{i\sigma(i)} \mid y_{i\sigma(i)}\right) = F_{\zeta \mid Y}\left(\zeta_{i\sigma(i)} \mid y_{i\sigma(i)}\right).
\end{equation}
Conditioning on this control function absorbs the endogenous component of total expenditure.
We obtain estimates using the empirical conditional distribution function:
$ \hat{v}_{i\sigma(i)} \;=\; \widehat{F}_{W \mid Y} (w_{i\sigma(i)} \mid y_{i\sigma(i)})$.
Since this quantity is continuous, we implement conditioning by a kernel.
In particular, for a grid $ v_0 \in \{0.05, 0.15, \ldots, 0.95 \} $ and bandwidth $ h = \frac{1}{20}$, we report the test statistic and corresponding p-values for the sub-samples $ \{(i, \sigma(i)) : \hat{v}_{i\sigma(i)} \in [v_0 - h,\, v_0 + h]\} $.

The results are shown in Table \ref{tab:ec.nomip.nohet} of Appendix \ref{sec:robustness}.
Conditioning on small cells substantially reduces the effective sample size and removes much of the variation used by the test, so these results should be interpreted as conservative.
For the LISS panel, we find that we still reject for more than half of these sub-samples in the private good case, but in only about a quarter of the cases for public goods.

\paragraph{Conditioning on Observable Resource Shares}
Whenever resource sharing is observed, as in the LISS panel, we may split the sample into brackets of similar individual expenditure on private goods and match partnered individuals only with singles in the corresponding expenditure bracket.
In such a setting, singles are then used as benchmarks for partnered individuals only if they have similar private expenditure levels.
This is best interpreted as a robustness exercise.
It checks whether the rejection is driven by comparing households at very different individual budget levels.

Table \ref{tab:private.nomip.rs} in Appendix \ref{sec:robustness} reports the results for the LISS panel. Based on private expenditure terciles we classify their private expenditure categories on the selected goods as low, mid, and high. We only look at equal-splitting couples which account for most of the sample. While they would be economically interesting cases, we do not report off-diagonals, in which there is unequal splitting between the spouses, due to the small sample size and the arising curse of dimensionality.
The results reveal that the test rejects for the mid- and high-expenditure subsamples but not for the low-expenditure subsample. 
The non-rejection at the low end is consistent with low-spending households allocating expenditure on these goods toward necessities, where choices are largely determined by budget rather than taste.

\paragraph{Mixed Integer Programming Approach}
In this paper, we used restrictions from Definition \ref{thm:necessary} to define collectively rational household types.
One might use a stronger characterisation based on both spouses satisfying individual GARP.
Such a characterisation relies on recovering feasible quantities $ \check{x}_{i\sigma(i)}^m $ and $ \check{x}_{i\sigma(i)}^f $ for each household.
\citet{Cherchye2011} provide a mixed integer programming procedure to recover these individualised quantities.
We also implement their procedure.
It allows us to check whether personalised quantities exist and, thus, if a given household can be rationalised. 
Beyond that, it can characterise a couples' revealed preference type solely on the aggregate-choice GARP partition, with no further structure available to interact with singles' types in a configuration.
Thus, all the bite of the restrictions from the collective model is used in the pre-testing, and cannot be exploited further in conjunction with the single's preferences, in the same way as the baseline characterisation.
This will result in a larger set of preference-stable configurations, and thus, easier rationalisability of the observed choice distributions.
We might, thus, expect less power of the test to detect violations of the stable preference hypothesis.
The results, reported in Table \ref{tab:mixedinteger} in Appendix \ref{sec:robustness}, confirm this conjecture for private goods but the test does remarkably well for public goods.

 \section{Conclusion}
This paper asks whether the preferences individuals reveal as singles can also explain their behaviour in couples. 
The comparison is not immediate because singles are observed as unitary households, whereas couples are observed only through joint choices.
With preference heterogeneity, stability is not a household-by-household restriction but a population restriction: matching may affect who becomes single or partnered, and who is matched with whom, but under the null it must not change the distribution of underlying preferences across the single and partnered pools.

Although we observe only the separate demand distributions of couples, single men, and single women, the null restricts the latent structure that can rationalise them jointly. 
We place these objects in a common framework in which each observed couple is compared with counterfactual single men and women.
Such configurations are admissible only if collective rationality is preserved after replacing the couple's latent individual preferences by the preferences revealed by those singles. 
Preference stability requires the observed demand distributions to admit a rationalisation using only admissible configurations. 
If no such rationalisation exists, the preferences revealed by singles cannot rationalise the behaviour of couples.

We apply the test to the Dutch LISS, the Russian RLMS, and the Spanish ECPF. 
In the baseline specification with private goods and exogenous expenditure, all three datasets provide evidence against preference stability. 
The rejection largely remains after conditioning on observed demographics, where sufficient observations are available, while specifications allowing public goods or endogenous expenditure produce more mixed results.
 \bibliography{submission}
\appendix
\section{Proofs}\label{app:proofs}
\subsection{Proof of Lemma \ref{thm:collectiveRUM}}\label{app:collectiveRUM}
We start with a more general model in which cardinal utility might depend on the partner. For this we let 
$u_{ij} = g^c_{ij}(u_i) = a_{ij} u_i + b_{ij} $, where $a_{ij} > 0$.
From e.g. \citet{AppsRees1997}, the men's problem can be written as a unitary problem with the female partner's reservation utility as a constraint:
\begin{equation}
  (x_{it}^{m*}, x_{it}^{f*}) = \arg\max_{(x^m,x^f)\in\mathcal{X}^2} u^m_{ij}(x^m) 
\quad\text{s.t.}\quad 
p(x^m+x^f) \le w_{ij},
\quad 
u^f_{ij}(x^f) \ge u^f_{ij}(x^{f*}_{ij}).
\end{equation}
Letting $ u_{0,ij}^* = u(\omega^f_j,x^{f*}_{ij}) $ we write the Lagrangian
\begin{equation}
\mathcal{L}(x^m,x^f,\mu_{ij},\rho_{ij};\omega) 
= g^c_{ij}(u(\omega^m_i,x^m)) 
+ \mu_{ij}(w_{ij} - p(x^m+x^f)) 
+ \rho_{ij}(g^c_{ij}(u(\omega^f_j,x^f)) - g^c_{ij}(u_{0,ij}^*) ).
\end{equation}
For optimal consumption $ (x^{m*}_{ij}, x^{f*}_{ij}) $, the Lagrange multipliers $( \mu_{ij}, \rho_{ij})$ satisfy
\begin{equation}
  \nabla_u g^{c}_{ij}(u(\omega^m_i, x^{m*}_{ij}))\nabla_{x^m} u(\omega^m_i, x^{m*}_{ij}) = \mu_{ij} p,
\end{equation}
\begin{equation}
  \rho_{ij}\nabla_u g^{c}_{ij}(u(\omega^f_j, x^{f*}_{ij}))\nabla_{x^f} u(\omega^f_j, x^{f*}_{ij}) = \mu_{ij} p.
\end{equation}
Equating the right-hand sides:
\begin{equation}
  \label{eq:rhoij}
  \nabla_ug^{c}_{ij}(u(\omega^m_i, x^{m*}_{ij}))\nabla_{x^m} u(\omega^m_i, x^{m*}_{ij}) 
= \rho_{ij}\nabla_u g^{c}_{ij}(u(\omega^f_j, x^{f*}_{ij}))\nabla_{x^f} u(\omega^f_j, x^{f*}_{ij}).
\end{equation}
Since $\nabla_u g^{c}_{ij}(z) = a_{ij}$ is constant, and we can simplify equation \eqref{eq:rhoij} to
\begin{equation}
  \nabla_{x^m} u(\omega^m_i, x^{m*}_{ij}) = \rho_{ij}\nabla_{x^f} u(\omega^f_j, x^{f*}_{ij}).
\end{equation}
Thus, we can solve for $\rho_{ij}$ which does not depend on $a_{ij}$ or $b_{ij}$, and thus $ g^c_{ij} $. 
Defining the Pareto weight $ \lambda_{ij} = \lambda(\omega_i^m, \omega^f_j) = \frac{1}{1 + \rho_{ij}} $ such that $ \rho_{ij} = \frac{1 - \lambda_{ij}}{\lambda_{ij}} $, the $ \lambda_{ij} $-re-weighted version of the total derivative of the Lagrangian becomes 
\begin{equation}
  \lambda_{ij} \nabla_{x^m} u(\omega^m_i,x^m) dx^m
  + (1-\lambda_{ij}) \nabla_{x^f} u(\omega^f_j, x^f) dx^f
  + \kappa_{ij} p (dx^m+dx^f)  = 0 
\end{equation}
where $ \kappa_{ij} \equiv \lambda_{ij} \mu_{ij} $, which coincides with the problem in equation \eqref{eq:collectiveutility}.\hfill\qed

\subsection{Corollary \ref{pro:unitaryrep}}\label{app:proofOfunitaryrep}
\begin{cor}[Representation of Unitary Utilities]
\label{pro:unitaryrep}
For any individual $i$, on either side of the market and any dummy partner $j\in\mathbb{N}_0$, consider the dyad $(i,j)$ with the dummy partner's private bundle fixed at $0\in\mathcal{X}$. 
Under Assumptions \ref{ass:dependence}, \ref{ass:collective}, there exist strictly increasing, affine functions $h_{i},h_j $ such that
\begin{equation}\label{eq:unitary}
  \Phi_{i0}(x^m,0,p)=h_{i}(u_i(x^m), p) \qquad \text{and} \qquad \Phi_{0j}(0,x^f,p) = h_j(u_j(x^f), p). 
\end{equation}
\end{cor}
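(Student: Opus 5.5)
The plan is to specialise the representation of Lemma \ref{thm:collectiveRUM} to a single--dummy dyad. Take $(i,j)$ with $i\in\mathbb{N}_1$ and $j\in\mathbb{N}_0$, so that $\omega^f_j=\omega^0$ by the normalisation in Assumption \ref{ass:polish}, and fix the dummy's private bundle at $x^f=0$.

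\textbf{Step 1: write the dyad utility.} The Pareto-efficient aggregation of Assumption \ref{ass:collective}.\ref{ass:efficiency} gives
\begin{equation}
\Phi_{i0}(x^m,0,p)=\lambda(\omega^m_i,\omega^0,p)\,u(\omega^m_i,x^m)+\bigl(1-\lambda(\omega^m_i,\omega^0,p)\bigr)\,u(\omega^0,0).
\end{equation}
I will prove that Lemma \ref{thm:collectiveRUM} applies verbatim to this dyad. It was stated for $i,j\in\mathbb{N}_2$, but its proof only used the Lagrangian of the efficient programme. With $x^f\equiv 0$, the partner's reservation constraint $u(\omega^0,x^f)\ge u(\omega^0,0)$ holds trivially and the budget constraint involves only $x^m$. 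The Lagrangian therefore reduces to the single's own problem, scaled by the affine cardinalisation $g^c_{i0}(u)=a_{i0}u+b_{i0}$ with $a_{i0}>0$.

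\textbf{Step 2: read off the affine map.} Set
\begin{equation}
h_i(v,p)\equiv\lambda_{i0}(p)\,v+\bigl(1-\lambda_{i0}(p)\bigr)\,u(\omega^0,0).
\end{equation}
For each $p$ this is affine in $v$. Its intercept is a constant that does not depend on $x^m$, because the dummy's preference is the common normalised $\omega^0$ and its bundle is fixed. It is strictly increasing provided $\lambda_{i0}(p)>0$. The female case $\Phi_{0j}$ is symmetric, with weight $1-\lambda$ and intercept $\lambda\,u(\omega^0,0)$.

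\textbf{Step 3: positivity of the weight (main obstacle).} The delicate point is that the formula $\lambda=1/(1+\rho)$ from Lemma \ref{thm:collectiveRUM} is not available here. Since the dummy consumes nothing, the first-order condition for $x^f$ is at a corner, so $\rho_{i0}$ is not pinned down. I would handle this by choosing the representation rather than deriving it. Any weight $\lambda\in(0,1]$ yields the same maximiser over $x^m$, because the second term is constant in $x^m$. So I may select $\lambda_{i0}\in(0,1]$, for example $\lambda_{i0}=1$ or any fixed interior value, and the efficiency requirement still holds. If $\lambda=1$ is taken, $h_i$ is essentially the identity up to the constant.

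\textbf{Step 4: role of exchangeability.} Finally, I will invoke Assumption \ref{ass:dependence} only to ensure that this choice of $\lambda_{i0}$ can be made as a measurable function of $\omega^m_i$ alone. It should not depend on the dummy's index, which is admissible because dummies all carry $\omega^0$ and are fixed by $G_0$. This gives the ordinal equivalence $\Phi_{i0}\sim u_i$ claimed in \eqref{eq:unitary}.
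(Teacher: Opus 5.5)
Your proposal is correct and follows essentially the paper's route. The paper also writes $\Phi_{ij}=\lambda u_i + C$ with $C$ constant in $x^m$, so that the dummy dyad's utility is a strictly increasing affine transform of $u_i$; it gets there via ordinal equivalence and a differentiate-then-integrate step, and simply asserts $\lambda>0$, which your Step 3 justifies by selecting the weight. Your Step 4 is superfluous: independence from the dummy's index comes from the normalisation $\omega^0$ in Assumption \ref{ass:polish}, and the paper's proof makes no use of exchangeability either.
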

\begin{proof}
Fix $i$, prices $ p $ and income $ w $. Let $x^m \in \mathcal{X}$ be individual $i$'s private consumption. 
Since $x$ is private to $i$ the $j$-side terms in household utilities do not depend on $x$ and we can set $ x^f = 0 $:
\begin{equation}
\Phi_{ij}(x^m, x^f) = \lambda(\omega^m_i, \omega^f_j)  u(\omega^m_i, x^m ) + C(\omega^m_i, \omega^f_j, x^f),
\end{equation}
For any $ x_0^m, x^m_1,x^f_0 \in \mathcal{X} $, we have $ u_i(x^m_1; \omega^m_i) - u_i(x^m_0; \omega^m_i) > 0 $ if and only if 
$$
\Phi_{ij}(x^m_1,x^f_0) - \Phi_{ij}(x^m_0,x^f_0) = \lambda(\omega^m_i, \omega^f_j) [ u_i(x_1^m) - u_i(x_0^m) ] > 0,
$$
because $\lambda(\omega^m_i, \omega^f_j) > 0$ and $C_{ij}$ cancels. 
Both $ \Phi_{ij} $ and $ u_i $ are differentiable in $ x^m $. 
Setting $ x^m_1 = x^m_0 + h $ and dividing both sides by $ h > 0 $ and taking limits yields:
$$ \frac{d}{dx^m} \Phi(x^m, x^f;\omega^m_i,\omega^f_j) = \lambda(\omega^m_i, \omega^f_j)  \frac{d}{dx^m} u(x^m; \omega^m_i). $$
Integrating both sides over $ x^m $ and setting $ \omega^f_j = \omega^0 $, since $ j \in \mathbb{N}_0 $ proofs the claim. 
\end{proof}

\subsection{Proof of Lemma \ref{lem:perm}}\label{app:proofOfperm}
Let $ \mathbb{N}(\omega)=(\mathbb N_2(\omega),\mathbb N_1(\omega), \mathbb N_0)  $ be a matching allocation for a given population induced by the assignment matrix $ M $. The realised partition for our population is denoted as $\mathbb{N} $. 
By Assumption \ref{ass:dependence}, the primitive environment is exchangeable ex-ante with respect to the group $ G_0\times G_0 $ (every real individual's preferences is not tied to their identity). By permutation equivariance of the matching rule by Assumption \ref{ass:matching}, relabelling the primitive environment only relabels the matching outcome and hence the induced partition. In particular,
$$ \mathbb{N}((\varsigma^m,\varsigma^f)\cdot\omega) = (\varsigma^m,\varsigma^f)\cdot \mathbb{N}(\omega).
$$

Now, once the realised partition $\mathbb{N} $ is fixed, only those relabellings that preserve $\mathbb{N} $ remain admissible. These are exactly the elements of the stabiliser:
$$
G = \left\{ \varsigma\in G_0: \varsigma(\mathbb N_\ell)=\mathbb N_\ell,\ \ell=0,1,2 \right\}.
$$
The stabiliser is the subgroup fixing the realised partition, whose orbit is the set of all relabelled partitions \citep[p. 56]{rotman1994}.
If $ (\varsigma^m,\varsigma^f)\in G\times G $, then the event $\mathbb{N}(\omega)=\mathbb{N}$ is unchanged by relabelling, so the primitive exchangeability from Assumption \ref{ass:dependence} carries over within the realised pools. This gives part (i).

By Assumption \ref{ass:matching}, the matching rule $M$ is permutation equivariant, and the surplus map $\Phi$ inherits the same relabelling from the primitive environment. Hence the composition $ \sigma_\omega \equiv M(\Phi(\omega))$ is permutation equivariant under $G_0\times G_0$. After fixing the partition $\mathbb{N} $, this again restricts to the subgroup $G\times G$, proving part (ii).

No larger symmetry is available in general. 
Once the realised partition is fixed, individual identities remain irrelevant within a given pool, but not across pools. 
Any relabelling outside $ G $ moves at least one individual from single to partnered or vice versa, and therefore changes the selection into household type. 
It maps \(\{\mathbb N(\omega)=\mathbb N\}\) to a different conditioning event. 
Thus, after the realised partition is fixed, the surviving symmetry group is precisely the within-pool relabelling group, namely the stabiliser $ G\times G $. \hfill \qed

\subsection{Proof of Lemma \ref{lem:prefstable}}\label{app:proofOfprefstable_slutsky}

By \citet{Chiappori2009} we can split up the collective problem into two stages. In the first stage households agree on the male resource share $w^m(p,w)$. In the second stage, they solve an individual standard consumption problem with endowment $ w^m $ and $ w^f $, respectively. Denoting the corresponding solutios as $ x^m $ and $ x^f $, we can write aggregate demand as:
$$ x(p,w) = x^m(p,\,w^m(p,w)) + x^f(p,\,w-w^m(p,w)). $$
Following \citet{Browning1998}, we write the pseudo-Slutsky matrix for the household as:
$$
  \bar S(p,w) \equiv \frac{\partial x}{\partial p^\top}(p,w)
             + \frac{\partial x}{\partial w}(p,w)\,x(p,w)^\top.
$$

\noindent Differentiating demands we obtain
$$
  \frac{\partial x}{\partial p^\top}
  = \frac{\partial x^m}{\partial p^\top}
  + \frac{\partial x^f}{\partial p^\top}
  + (\frac{\partial x^m}{\partial w^m}
          - \frac{\partial x^f}{\partial(w-w^m)})\frac{\partial w^m}{\partial p^\top}
$$
and
$$
  \frac{\partial x}{\partial w}
  = \frac{\partial x^m}{\partial w^m}\,\frac{\partial w^m}{\partial w}
  + \frac{\partial x^f}{\partial(w-w^m)}\,
    (1-\frac{\partial w^m}{\partial w}).
$$
Hence
\begin{align*}
  \bar S
  &= \frac{\partial x^m}{\partial p^\top}
   + \frac{\partial x^f}{\partial p^\top}
   + (\frac{\partial x^m}{\partial w^m}
           - \frac{\partial x^f}{\partial(w-w^m)})\frac{\partial w^m}{\partial p^\top} \\
  &\quad + [
      \frac{\partial x^m}{\partial w^m}\,\frac{\partial w^m}{\partial w}
    + \frac{\partial x^f}{\partial(w-w^m)}\,
      (1-\frac{\partial w^m}{\partial w})
    ](x^m+x^f)^\top .
\end{align*}
Expanding the last term and rearranging yields the individual Slutsky matrices:
$$ \bar S^m(p,w^m) \equiv \frac{\partial x^m(p,w^m)}{\partial p^\top} + \frac{\partial x^m(p,w^m)}{\partial w^m}\,x^m(p,w^m)^\top $$
and
$$ \bar S^f(p,w-w^m) \equiv \frac{\partial x^f(p,w-w^m)}{\partial p^\top} + \frac{\partial x^f(p,w-w^m)}{\partial(w-w^m)}\,x^f(p,w-w^m)^\top. $$
\noindent Consequently, the household pseudo-Slutsky matrix can be written as:
$$ \bar S(p,w) = \bar S^m(p,w^m) + \bar S^f(p,w-w^m) + U(p,w)V(p,w)^\top, \label{eq:decomposition} $$
where
$$ U(p,w) \equiv \frac{\partial x^m}{\partial w^m} - \frac{\partial x^f}{\partial(w-w^m)} $$
and
$$ V(p,w)^\top \equiv \frac{\partial w^m}{\partial p^\top} + \frac{\partial w^m}{\partial w}(x^f)^\top - (1-\frac{\partial w^m}{\partial w})(x^m)^\top. $$

\noindent Now define the relative male resource share $ \eta(p,w) \equiv \frac{w^m(p,w)}{w}\in(0,1)$.
By Lemma \ref{thm:collectiveRUM}, the Pareto weight is homogeneous of degree zero in $(p,w)$. Hence a proportional rescaling $(p,w)\mapsto(tp,tw)$ leaves the real budget set and the Pareto weight and the corresponding resource allocation unchanged. Under Walras' law, the male budget share is exhausted, so $w^m(p,w)=p^\top x^m(p,w)$. It follows immediately that $w^m(tp,tw)=t w^m(p,w)$. Thus $w^m$ is homogeneous of degree one, and $\eta(p,w)=w^m(p,w)/w$ is homogeneous of degree zero.
Writing budget-normalised prices as $q=p/w$, we may write $\eta(p,w)=\eta(q,1)$, and denote this value by $\eta(q)$.

By homogeneity of Marshallian demands $ x^m $ and $ x^f $, the individual Slutsky matrices are homogeneous of degree $ -1 $.\footnote{Take $x(\alpha p, \alpha w) = x(p,w)$ to be homogeneous of degree zero. Differentiating yields, by the chain rule, $\frac{\partial x}{\partial p^\top}(\alpha p, \alpha w) = \frac{1}{\alpha}\frac{\partial x}{\partial p^\top}(p,w)$, and similarly for $\partial x/\partial w$.}
For each $ \kappa \in\{m,f\} $, define the unit-budget individual Slutsky
matrix by  $ S^\kappa(r)\equiv \bar S^\kappa(r,1) $, where $ r $ denotes the individual unit-budget price vector. Then, for any
individual expenditure $y>0$,
$$
  \bar S^\kappa(p,y)
  =
  \frac{1}{y}\,
  \bar S^\kappa\!\left(\frac{p}{y},1\right)
  =
  \frac{1}{y}\,
  S^\kappa\!\left(\frac{p}{y}\right).
$$
With $q=p/w$, $w^m=w\eta(q)$, and $w^f=w-w^m=w(1-\eta(q))$, it follows that
$$
  \bar S^m(p,w^m)
  =
  \frac{1}{w\eta(q)}\,
  S^m\!\left(\frac{q}{\eta(q)}\right),
  \qquad
  \bar S^f(p,w^f)
  =
  \frac{1}{w(1-\eta(q))}\,
  S^f\!\left(\frac{q}{1-\eta(q)}\right).
$$

The income-effect vector $ U(p,w) $ is also homogeneous of degree $ -1 $ making  $wU(p,w) $ homogeneous of degree zero. 
Similarly, since $ w^m(p,w) $ is homogeneous of degree one, its derivatives with respect to  $ p $ and $ w $ are homogeneous of degree zero. Together with homogeneity of individual demands, this implies that  $V(p,w)$ is homogeneous of degree zero.
Thus both $ wU(p,w)$ and $ V(p,w) $ depend on $ (p,w) $ only through
$ q=p/w $. We therefore define $ u(q) \equiv wU(p,w) $ and $ v(q) \equiv V(p,w) $ such that:
$$
  w\,U(p,w)V(p,w)^\top = u(q)v(q)^\top .
$$
Multiplying equation \eqref{eq:decomposition} by $w$ therefore yields the unit-budget representation
$$
  S(q)\equiv w \bar S(p,w)
  =
  \frac{1}{\eta(q)}\,
  S^m\left(\frac{q}{\eta(q)}\right)
  +
  \frac{1}{1-\eta(q)}\,
  S^f\left(\frac{q}{1-\eta(q)}\right)
  +
  u(q)v(q)^\top.
$$

\noindent Now take a preference-stable configuration $(i,j,i',j')$. By definition, $ \omega_i^m=\omega_{i'}^m $ and $ \omega_j^f=\omega_{j'}^f $.
The primitive $\omega_i^m$ determines the individual utility $u_i^m$, and $\omega_j^f$
determines $u_j^f$. Under collective rationality, these same primitives jointly determine
the Pareto weight $\lambda_{ij}$ and hence the induced share $\eta_{ij}$. Therefore the
factual individuals $i$ and $j$ and the counterfactual singles $i'$ and $j'$ share the
same individual demand systems, evaluated at the same normalised individual budgets. Hence
$$
  S_i^m\left(\frac{p}{\eta_{ij}(p)}\right)
  =
  S_{i'0}\left(\frac{p}{\eta_{ij}(p)}\right),
  \qquad
  S_j^f\left(\frac{p}{1-\eta_{ij}(p)}\right)
  =
  S_{0j'}\left(\frac{p}{1-\eta_{ij}(p)}\right).
$$
Writing everything on the unit budget therefore gives
\begin{align*}
  S_{ij}(p)
  &=
  \frac{1}{\eta_{ij}(p)}\,
  S_{i'0}\left(\frac{p}{\eta_{ij}(p)}\right)
  +
  \frac{1}{1-\eta_{ij}(p)}\,
  S_{0j'}\left(\frac{p}{1-\eta_{ij}(p)}\right)
  +
  u_{ij}(p)v_{ij}(p)^\top \\
  &= \bar S_{i'0}(p,\eta_{ij}(p)) + \bar S_{0j'}(p,1-\eta_{ij}(p)) + u_{ij}(p)v_{ij}(p)^\top. 
\end{align*}
which is the final representation.\hfill \qed

\subsection{Lemma \ref{lem:proj_exchangeable}}\label{app:proofOfprojection}
\begin{lem}\label{lem:proj_exchangeable}
Let Assumptions \ref{ass:matching}-\ref{ass:collective} hold. If configurations are sampled $ (\tau^m, \tau^f) \sim \rho $, independent of $ \omega \sim \mu $, and label-invariant, the sequence of configurations, defined in equation \eqref{eq:chi}, is within-pool-exchangeable: $ \{ \chi_i \}_{i \in \mathbb{N}_2} \stackrel{d}{=} \{ \chi_{\varsigma(i)} \}_{i \in \mathbb{N}_2} $ for $ \varsigma \in G $.
\end{lem}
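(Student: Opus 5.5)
The plan is to write the configuration sequence as a fixed measurable functional of the pair $(\omega,\tau)$ and to push the within-pool exchangeability of Lemma \ref{lem:perm} through that functional. From \eqref{eq:chi} we have $\chi_i = F_i(\omega,\sigma,\tau^m,\tau^f)$, where $F_i$ picks coordinates $i$, $\sigma(i)$, $\tau^m(i)$ and $(\tau^f\circ\sigma)(i)$. The central step is an \emph{equivariance identity}. For $\varsigma=(\varsigma^m,\varsigma^f)\in G\times G$, relabelling preferences as $\varsigma\cdot\omega$ should relabel the realised matching by conjugation, $\sigma_{\varsigma\cdot\omega}=\varsigma^f\circ\sigma_\omega\circ(\varsigma^m)^{-1}$; this is Lemma \ref{lem:perm}(ii) together with Assumption \ref{ass:matching}. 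We then conjugate the transpositions in the same way, $\varsigma\cdot\tau\equiv(\varsigma^m\tau^m(\varsigma^m)^{-1},\varsigma^f\tau^f(\varsigma^f)^{-1})$. Substituting into \eqref{eq:chi} should give
\begin{equation}
\chi(\varsigma\cdot\omega,\varsigma\cdot\tau)_{i}=\chi(\omega,\tau)_{(\varsigma^m)^{-1}(i)},\qquad i\in\mathbb N_2 .
\end{equation}
Checking this is a direct composition of index maps. The one point needing care is that conjugation by an element of $G$ maps a transposition between $\mathbb N_2$ and $\mathbb N_1$ to another such transposition. This holds because $G$ preserves each pool, so $\varsigma\cdot\tau\in\mathcal T$ and the dummy coordinates, normalised to $\omega^0$, are left untouched.

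Next we move from the identity to equality in distribution. By Lemma \ref{lem:perm}(i), $\omega\stackrel{d}{=}\varsigma\cdot\omega$ under $\mu$ conditional on the realised partition. We interpret ``label-invariant'' sampling as $\rho$ being invariant under conjugation by $G\times G$, i.e.\ $\varsigma\cdot\tau\stackrel{d}{=}\tau$; uniform sampling or exhaustive enumeration of $\mathcal T$ satisfies this. Because $\rho$ is independent of $\mu$, the product law $\mu\otimes\rho$ is invariant under the joint action $(\omega,\tau)\mapsto(\varsigma\cdot\omega,\varsigma\cdot\tau)$. Applying the measurable map $(\omega,\tau)\mapsto(\chi_i)_{i\in\mathbb N_2}$ to both sides gives $\{\chi_i\}\stackrel{d}{=}\{\chi_{(\varsigma^m)^{-1}(i)}\}$. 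Since $\varsigma^m$ ranges over all of $G$, so does its inverse, which is the claim. Measurability should be routine: $\mathbb X$ is Polish (Assumption \ref{ass:polish}), coordinate projections on $\mathbb X^{\mathbb N}$ are measurable, and the index maps are deterministic given $(\sigma,\tau)$.

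I expect the main obstacle to be the interaction between the two sides of the market. The statement indexes the sequence only by men $i\in\mathbb N_2$, while the women's relabelling $\varsigma^f$ must be tied to $\varsigma^m$ through $\sigma$. I would first try to restrict to pairs with $\varsigma^f=\sigma\varsigma^m\sigma^{-1}$ on $\mathbb N_2$, i.e.\ to relabellings that permute whole couples. These are in $G$ on both sides, and for them $\sigma_{\varsigma\cdot\omega}$ restricted to couples is unchanged as an assignment, so the identity above becomes a pure reindexing of couples. A second subtlety is that $\tau$ is a single pair of transpositions and not couple-specific. A literal reading would make most $\chi_i$ have $\tau^m(i)=i$, so I would need to read $\rho$ as assigning counterfactual singles to each couple, for example as a product over couples or as a random permutation. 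Under that reading, label-invariance of $\rho$ is exactly conjugation invariance, and I would state that interpretation explicitly before running the argument.
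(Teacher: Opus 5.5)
Your proposal is correct and is essentially the paper's proof. The paper's decisive last step likewise sets $\omega'=\varsigma\cdot\omega$ and conjugates the transpositions by $\varsigma$. It uses $\sigma_{\omega'}=(\varsigma^f)^{-1}\circ\sigma_\omega\circ\varsigma^m$ to get $\chi(\omega',\tau')_i=\chi(\omega,\tau)_{\varsigma^m(i)}$, and concludes from $\omega'\stackrel{d}{=}\omega$, conjugation-invariance of $\rho$ and independence, exactly as you do. Your formulas are the paper's with $\varsigma$ replaced by $\varsigma^{-1}$, i.e. they presuppose the action $(\varsigma\cdot\omega)_i=\omega_{\varsigma^{-1}(i)}$ rather than the paper's $\omega_{\varsigma(i)}$, so state that convention explicitly. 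The paper's preliminary steps on matched pairs and on fixed $\tau$ are not needed.

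Your identity already holds for arbitrary $\varsigma^f\in G$, since $\varsigma^f$ cancels, and it holds verbatim for the literal single-transposition $\rho$. So the lemma needs neither your reinterpretation of $\rho$ nor your restriction to couple-permuting relabellings. That restriction is best avoided anyway, because it would make $\varsigma^f$ depend on $\omega$ through $\sigma_\omega$.
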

\begin{proof}
By Lemma \ref{lem:perm}, 
\begin{equation}\label{eq:equicomp}
\sigma_{\omega'} \equiv (M \circ \Phi)(\omega') = (M \circ \Phi)(\varsigma \cdot \omega) = \varsigma \cdot (M \circ \Phi) (\omega) = \varsigma \cdot \sigma_\omega = (\varsigma^f)^{-1} \circ \sigma_\omega \circ \varsigma^m,
\end{equation}
where the last equality follows for fixed $\omega$ from:  $m_{ij}' = m_{\varsigma^m(i), \varsigma^f(\sigma_\omega(i))} = \delta_{\varsigma^f(j), \sigma_\omega(\varsigma^m(i))} = \delta_{j, (\varsigma^f)^{-1}(\sigma_\omega(\varsigma^m(i)))} = m_{i, ((\varsigma^f)^{-1} \circ \sigma_\omega \circ \varsigma^m)(j)} $.
We start with the couples pool.
We show that the matched-pair sequence $\{(\omega_i^m,\omega^f_{\sigma_\omega(i)})\}_{i\in\mathbb N_2}$ is $G$-exchangeable.
For $ i\in \mathbb{N}_2$ the matched-pair sequence for the couple's pool is $ (\omega)_i \equiv (\omega^m_i, \omega^f_{\sigma_\omega(i)}) $ which we compare to the one under $\omega'$:
\begin{equation}
\omega'_i = (\omega^{m'}_i, \omega^{f'}_{\sigma_{\omega'}(i)}) = (\omega^m_{\varsigma^m(i)}, \omega^f_{\varsigma^f(\sigma_{\omega'}(i))}).
\end{equation}
Using equivariance from Lemma \ref{lem:perm}, by associativity and identity of composition,
\begin{equation}
  (\varsigma^f \circ \sigma_{\omega'})(i) = (\varsigma^f \circ ((\varsigma^f)^{-1} \circ \sigma_\omega \circ \varsigma^m)) (i) = \sigma_\omega(\varsigma^m(i)),
\end{equation}
from which we get
\begin{equation}
\omega'_i = (\omega^m_{\varsigma^m(i)}, \omega^f_{\sigma_\omega(\varsigma^m(i))}) = \omega_{\varsigma^m(i)} = \varsigma^m \cdot \omega \; a.s.
\end{equation}
Since $\omega' \stackrel{d}{=} \omega$, applying the measurable map $E$ yields $\omega' \stackrel{d}{=} \omega$, we conclude
\begin{equation}
\omega \stackrel{d}{=} \varsigma^m \cdot \omega.
\end{equation}

For the individual terms, for fixed $\tau^m,\tau^f\in G$, $\{\omega^m_{\tau^m(i)}\}_{i\in\mathbb N_2}$ and $\{\omega^f_{\tau^f(\sigma_\omega(i))}\}_{i\in\mathbb N_2}$ are $G$-exchangeable.
For men, we have a fixed $ \tau^m  $, and action $ \varsigma^m \in G $.
Thus, by Assumption \ref{ass:dependence}, the permutation $ \tau^m \circ \varsigma^m \circ \tau^m \in G $, and we have $ \omega^m \stackrel{d}{=}  (\tau^m \circ \varsigma^m \circ \tau^m)  \cdot \omega^m $.
Reindexing both sides,\footnote{For composition of actions $ g, h \in G $, we have $ (g\cdot(h\cdot \omega^m))_i =(h\cdot \omega^m)_{g(i)} =\omega^m_{h(g(i))} =((h\circ g)\cdot \omega^m)_i $.} 
$$ \tau^m \cdot \omega^m \stackrel{d}{=} ( (\tau^m \circ \varsigma^m \circ \tau^m ) \circ \tau^m ) \cdot \omega^m  = (\tau^m \circ \varsigma^m) \cdot \omega^m = \varsigma^m \cdot (\tau^m \cdot \omega^m ) . $$
For women we also fix $ \tau^f $, and note that $G$-exchangeability of the matched-pair sequence $\omega_i $ implies the same for each coordinate.
Taking the second coordinate projection we have $ \sigma_\omega \cdot \omega^f \stackrel{d}{=} (\sigma_\omega \circ \varsigma^m) \cdot \omega^f $.
Applying $ \tau^f $ to  $ \omega^f $ on both sides, we get 
$$ (\tau^f \circ \sigma_\omega ) \cdot \omega^f \stackrel{d}{=} (\tau^f \circ (\sigma_\omega \circ  \varsigma^m)) \cdot \omega^f = \varsigma^m \cdot ((\tau^f \circ \sigma_\omega) \cdot \omega^f). $$

Finally, to complete the proof and show exchangability of configurations, we not that we have label-invariance of $ (\tau^m, \tau^f) \in \mathcal{T} $. Putting it together, for any $\varsigma=(\varsigma^m,\varsigma^f)\in G$, set $\omega'=\varsigma\cdot\omega$ and $(\tau^{m'},\tau^{f'})=(\varsigma^m)^{-1}\circ\tau^m \circ\varsigma^m, (\varsigma^f)^{-1}\circ\tau^f\circ\varsigma^f$. 
Using $\sigma_{\omega'}=(\varsigma^f)^{-1}\circ\sigma_\omega\circ\varsigma^m$, we get for all $i\in\mathbb{N}_2$ 
$$
\chi(\omega',\tau^{m'},\tau^{f'})_i=\chi(\omega,\tau^m,\tau^f)_{\varsigma^m(i)} \quad\text{a.s.}
$$
Since $\omega'\stackrel{d}{=}\omega$, $(\tau^{m'},\tau^{f'})\stackrel{d}{=}(\tau^m,\tau^f)$, and $(\tau^m,\tau^f)\perp \omega$, it follows that $\{\chi_i\}_{i\in\mathbb{N}_2}\stackrel{d}{=}\{\chi_{\varsigma(i)}\}_{i\in\mathbb{N}_2}$, i.e. we have $G$-exchangeability. We denote the corresponding permutation-invariant sigma-algebra as $ \mathcal{I}^\star $. 
\end{proof}

\subsection{Proof of Theorem \ref{pro:definetti}}\label{app:proofOfdefinetti}
We work on the probability space $ (\Omega\times\mathcal{T},\mathcal{I}\otimes\mathcal{B}(\mathcal{T}),\mu\otimes\rho). $
Since $\Omega$ is Polish by Assumption \ref{ass:polish} and $\mathcal{T}$ is Polish so is $\Omega\times\mathcal{T}$ and $ \mathcal{B}(\Omega\times\mathcal{T}) = \mathcal{I}\otimes\mathcal{B}(\mathcal{T}) $.
Hence, by \citet[Theorem 5.3]{Kallenberg1997}, there exists a regular conditional distribution of the primitive pair $(\omega,\tau)$ given $\mathcal{I}^\star$ defined in Lemma \ref{lem:proj_exchangeable}. Denote it by
$$
  \bar Q:(\Omega\times\mathcal{T})\times(\mathcal{I}\otimes\mathcal{B}(\mathcal{T}))\to[0,1].
$$
Thus, by \citet[Theorem 5.3]{Kallenberg1997} for each $D\in\mathcal{I}\otimes\mathcal{B}(\mathcal{T})$, the map
 $ (\omega,\tau)\mapsto\bar Q((\omega,\tau),D) $ is $\mathcal{I}^\star$-measurable, and for each $(\omega,\tau)$, the map $ D\mapsto\bar Q((\omega,\tau),D) $ is a probability measure. 
 By the same reference, for all $B\in\mathcal{I}^\star$ and all $D\in\mathcal{I}\otimes\mathcal{B}(\mathcal{T})$,
$$
  \int_B \bar Q((\omega,\tau),D)\,(\mu\otimes\rho)(d\omega,d\tau)
  = (\mu\otimes\rho)(B\cap D).
$$

The conditional distribution of a given configuration is obtained by pushing this conditional distribution through $\chi_i$. Since we can select $ \omega^0 $ arbitrarily, the space containing $ \chi_i $ is isomorphic to $ \mathbb{X}^4 $. Thus, for all $B\in\mathcal{B}(\mathbb{X}^4)$, define
$$
  \bar\nu((\omega,\tau),B)
  \equiv (\mu\otimes\rho)(\chi_i\in B\mid\mathcal{I}^\star)(\omega,\tau)
  = \int \mathbf{1}\{\chi_i(\omega',\tau')\in B\}\,\bar Q((\omega,\tau),d\omega',d\tau').
$$
This is the directing conditional distribution defined in the statement of the theorem.

By Lemma \ref{lem:proj_exchangeable}, the sequence $(\chi_i)_{i\in\mathbb{N}_2}$ is exchangeable under within-pool relabellings. Since the configuration space isomorphic to $\mathbb{X}^4$ is Polish, the Hewitt-Savage extension of de Finetti's theorem applies. Therefore, conditional on $\mathcal{I}^\star$, the sequence is i.i.d.\ with directing measure $\bar\nu$. That is, for all measurable $B_1,\ldots,B_n\subseteq\mathbb{X}^4$,
 $$ (\mu\otimes\rho)(\chi_1\in B_1,\ldots,\chi_n\in B_n\mid\mathcal{I}^\star)(\omega,\tau)
  = \prod_{k=1}^n \bar\nu((\omega,\tau),B_k).$$
This proves the conditional i.i.d.\ part of the theorem.

The conditional demand distribution is the image of this directing measure under the demand map. Since $ \Psi^\kappa = \operatorname{argmax}\circ\,\Phi\circ\operatorname{proj}^\kappa $ for  $\kappa\in\{c,m,f\} $,
each coordinate of $\Psi$ first extracts the respective household's preferences, then forms the household utility representation, and finally maps it into utility-maximising demand. Hence, for every event $A$ in the joint space of demand triples,
$$
  \bar\pi((\omega,\tau),A) = \int \mathbf{1}\{\Psi(\chi)\in A\}\,\bar\nu((\omega,\tau),d\chi).
  $$
This is the conditional random utility and matching representation.

Next we show that Hypothesis \ref{hyp:poollaw} implies existence of a mixture of configurations that rationalises demand distributions. 
Let $\bar\mu^c$ denote the conditional distribution of the factual matched-couple primitives $ (\omega_i^m,\omega_{\sigma(i)}^f) $ for $ i\in\mathbb{N}_2$, with marginals $\bar\mu_2^m$ and $\bar\mu_2^f$. By Hypothesis \ref{hyp:poollaw} we have $ \bar\mu_2^m=\bar\mu_1^m $ and $ \bar\mu_2^f=\bar\mu_1^f $.
Thus the marginal distribution of partnered men and women are also the marginal distributions of single men and women.
Thus, we can construct a probability measure $\nu^\star$ on $\mathbb{X}^4$ as follows. Draw $ (\omega^m,\omega^f)\sim\bar\mu^c $ and set $ (\omega^{m\prime},\omega^{f\prime})=(\omega^m,\omega^f) $ so that $\nu^\star$ is the distribution of  $ (\omega^m,\omega^f,\omega^m,\omega^f) $, which, by construction, is supported on $W_0$. 
Its couple projection has the factual matched-couple distribution $\bar\mu^c$, and its counterfactual single projections have the correct single-side marginal distributions by the equalities above. Therefore, pushing $\nu^\star$ forward through $\Psi^c,\Psi^m,\Psi^f$ yields $\bar\pi^c,\bar\pi^m,\bar\pi^f$, respectively.

Hence the marginal conditional demand distributions admit a preference-stable representation $ \nu^\star $ as a consequence of \ref{hyp:poollaw}.\footnote{Note that this is not required to coincide with the sampled directing law $\bar\nu$.}\hfill\qed

\subsection{Corollary \ref{cor:blockmarschak}}\label{app:blockmarschak}

\begin{cor}\label{cor:blockmarschak}
Let $\bar\pi^c,\bar\pi^m,\bar\pi^f$ be the marginal conditional demand distributions of the main theorem, let $\Psi=(\Psi^c,\Psi^m,\Psi^f)$ be the demand map, and let $W_0$ denote the preference-stable subset of the configuration space. Set
$$
  S \equiv \Psi(W_0),
  \qquad
  \Gamma(\bar\pi)
  \equiv \left\{\gamma:\gamma\circ(\operatorname{proj}^\kappa)^{-1}=\bar\pi^\kappa,\ \kappa\in\{c,m,f\}\right\}.
$$
Then the following are equivalent:
\begin{enumerate}
  \item[(i)] \emph{Preference-stable representation.} There exists a probability measure $\nu^\star$ on $\mathbb{X}^4$ with $\nu^\star(W_0)=1$ and $\nu^\star\circ(\Psi^\kappa)^{-1}=\bar\pi^\kappa$ for $\kappa\in\{c,m,f\}$.
  \item[(ii)] \emph{Multi-marginal feasibility.} There exists $\gamma\in\Gamma(\bar\pi)$ with $\gamma(S)=1$ such that
  $$ 
    \inf_{\gamma\in\Gamma(\bar\pi)}\int\mathbf{1}_{S^c}(x)\,d\gamma(x)=0.
 $$ 
  where $ S^c $ is the complement of $ S $, i.e. the set of demands incompatible with pref.-stability.
  \item[(iii)] \emph{Block-Marschak inequalities.} For every bounded measurable triple $(\varphi^c,\varphi^m,\varphi^f)$ with
 $$ 
    \varphi^c(x^c)+\varphi^m(x^m)+\varphi^f(x^f)\leq 0
    \qquad\text{for all }(x^c,x^m,x^f)\in S,
 $$ 
  we have
 $$ 
    \int\varphi^c\,d\bar\pi^c+\int\varphi^m\,d\bar\pi^m+\int\varphi^f\,d\bar\pi^f\leq 0.
 $$ 
\end{enumerate}
\end{cor}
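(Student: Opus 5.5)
The plan is to prove the cycle (i)$\Rightarrow$(ii)$\Rightarrow$(iii)$\Rightarrow$(ii)$\Rightarrow$(i). Two steps are purely measure-theoretic: passing between configurations and demands through $\Psi$, which is (i)$\Leftrightarrow$(ii). The remaining step is a Kantorovich/Kellerer duality for a multi-marginal transport problem with the $\{0,1\}$-valued cost $\mathbf{1}_{S^c}$, which is (ii)$\Leftrightarrow$(iii).

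For (i)$\Rightarrow$(ii), set $\gamma\equiv\nu^\star\circ\Psi^{-1}$. Because $\nu^\star(W_0)=1$ and $\Psi(W_0)=S$, we get $\gamma(S)\geq\nu^\star(W_0)=1$. Its marginals are $\gamma\circ(\operatorname{proj}^\kappa)^{-1}=\nu^\star\circ(\Psi^\kappa)^{-1}=\bar\pi^\kappa$, so $\gamma\in\Gamma(\bar\pi)$. Hence $\int\mathbf{1}_{S^c}\,d\gamma=0$, the infimum is zero, and it is attained.

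For (ii)$\Rightarrow$(iii), take $\gamma\in\Gamma(\bar\pi)$ with $\gamma(S)=1$ and any admissible triple $(\varphi^c,\varphi^m,\varphi^f)$. Integrate $\varphi^c+\varphi^m+\varphi^f\leq0$ against $\gamma$; the inequality holds $\gamma$-a.s. because $\gamma$ lives on $S$. Since $\gamma$ has marginals $\bar\pi^\kappa$, the integral splits into $\sum_\kappa\int\varphi^\kappa\,d\bar\pi^\kappa\leq0$.

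For (iii)$\Rightarrow$(ii), I would invoke Kellerer's multi-marginal duality on the Polish product of demand spaces:
\[
\inf_{\gamma\in\Gamma(\bar\pi)}\int\mathbf{1}_{S^c}\,d\gamma=\sup\Big\{\textstyle\sum_\kappa\int\varphi^\kappa d\bar\pi^\kappa:\ \sum_\kappa\varphi^\kappa(x^\kappa)\leq\mathbf{1}_{S^c}(x)\ \forall x\Big\}.
\]
Any dual-feasible triple satisfies $\sum_\kappa\varphi^\kappa\leq0$ on $S$, so by (iii) its value is at most $0$. Hence the supremum is $\leq0$, and it is $\geq0$ by taking $\varphi\equiv0$. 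Duality then gives the primal infimum $0$.

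To get a $\gamma$ with $\gamma(S)=1$, I need attainment. $\Gamma(\bar\pi)$ is tight and weakly compact because its marginals are fixed. The map $\gamma\mapsto\int\mathbf{1}_{S^c}d\gamma$ is weakly lower semicontinuous when $\mathbf{1}_{S^c}$ is lower semicontinuous, i.e.\ when $S$ is closed. This is the step I expect to be the main obstacle. I would first try to show $S$ is closed: $\Psi$ is continuous (Berge's maximum theorem under the strict quasi-concavity of Assumption \ref{ass:collective}) and $W_0$ is closed. If compactness fails, I would fall back on Kellerer's duality for Borel (analytic) costs to obtain the zero value, and then use an approximation by closed subsets $K_n\uparrow S$ together with inner regularity to extract a limit coupling concentrated on $S$.

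For (ii)$\Rightarrow$(i), I would lift $\gamma$ back to configurations. $W_0$ is a Borel subset of the Polish space $\mathbb{X}^4$, so $S=\Psi(W_0)$ is analytic. By the von Neumann (Jankov) measurable selection theorem there is a universally measurable map $s:S\to W_0$ with $\Psi\circ s=\mathrm{id}_S$. Set $\nu^\star\equiv\gamma\circ s^{-1}$, defined on the $\gamma$-completion and then restricted to $\mathcal{B}(\mathbb{X}^4)$. Then $\nu^\star(W_0)=1$, and
\[
\nu^\star\circ(\Psi^\kappa)^{-1}=\gamma\circ(\operatorname{proj}^\kappa\circ\Psi\circ s)^{-1}=\gamma\circ(\operatorname{proj}^\kappa)^{-1}=\bar\pi^\kappa,
\]
which closes the cycle.
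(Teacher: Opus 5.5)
Your route is the paper's: push-forward for (i)$\Rightarrow$(ii), a measurable selection for (ii)$\Rightarrow$(i), and multi-marginal Kantorovich duality for (ii)$\Leftrightarrow$(iii). Your selection step (Jankov--von Neumann on the analytic set $S$, with $\nu^\star$ defined through the $\gamma$-completion) is more careful than the paper's. The gap is the one you flag, and neither of your patches closes it.

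Duality only gives $\inf_{\gamma\in\Gamma(\bar\pi)}\gamma(S^c)=0$. Both (ii) as stated and your lifting to (i) need a coupling with $\gamma(S)=1$. Your first patch fails because a continuous image of a closed set need not be closed. $W_0$ is indeed closed (it is a diagonal), but that helps only if $\Psi$ is proper on $W_0$ or $W_0$ is compact. Here $W_0$ is a copy of $\mathbb{X}^2$, which is Polish but not assumed compact. Moreover, nothing in Assumption \ref{ass:collective} makes $\Psi$ continuous: Berge's theorem would need $u(\omega,x)$ and the Pareto weight $\lambda(\omega^m,\omega^f,\cdot)$ to be jointly continuous in $\omega$, and neither is assumed. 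Your fallback fails because weak limits of couplings $\gamma_n$ with $\gamma_n(K_n)\to 1$ are only guaranteed to charge $\overline{S}$. Inner regularity of a fixed measure gives no control over a moving sequence.

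No argument can close this step without an extra hypothesis. Take two marginals (add a Dirac third marginal if you like), with $\mu=\nu$ uniform on $[0,1]$ and $S=\{(x,y):x<y\}$. The shift couplings $\gamma_\epsilon$, under which $y\equiv x+\epsilon \pmod 1$, satisfy $\gamma_\epsilon(S^c)=\epsilon$. Hence any bounded $(\varphi,\psi)$ with $\varphi+\psi\le 0$ on $S$ has $\int\varphi\,d\mu+\int\psi\,d\nu\le\epsilon(\|\varphi\|_\infty+\|\psi\|_\infty)\to 0$, so condition (iii) holds. Yet a coupling with $\gamma(S)=1$ would give $\int(y-x)\,d\gamma>0$, contradicting equal means. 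Also, $\gamma_\epsilon$ converges to the diagonal coupling, which gives $S$ mass zero, so your limit construction lands outside $S$.

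You therefore need to assume or prove that $S$ is closed, e.g.\ that $\Psi$ is continuous and proper on $W_0$. With that, your Prokhorov plus lower-semicontinuity argument finishes (iii)$\Rightarrow$(ii). Otherwise (i) and (ii) must be weakened to approximate versions.

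The paper's proof has the same hole, only hidden. Villani's Theorem 5.10 is a two-marginal result for lower semicontinuous costs, so invoking it for $\mathbf{1}_{S^c}$ silently assumes $S$ closed. In the finite-type setting of Proposition \ref{lem:equivalences} the issue disappears, because the feasible set is a compact polytope and the infimum is attained.
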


\begin{proof}
We prove (i)$\Rightarrow$(ii), (ii)$\Rightarrow$(i), and (ii)$\Leftrightarrow$(iii).

\noindent\emph{(i)$\Rightarrow$(ii).} Suppose $\nu^\star$ satisfies (i). Define
$$
  \gamma^\star \equiv \nu^\star\circ\Psi^{-1}.
$$
Since $\Psi=(\Psi^c,\Psi^m,\Psi^f)$, the $\kappa$-marginal of $\gamma^\star$ is $\nu^\star\circ(\Psi^\kappa)^{-1}=\bar\pi^\kappa$, so $\gamma^\star\in\Gamma(\bar\pi)$. Moreover, $\gamma^\star(S)=\nu^\star(\Psi^{-1}(S))\geq\nu^\star(W_0)=1$. Hence
$$
  \inf_{\gamma\in\Gamma(\bar\pi)}\int\mathbf{1}_{S^c}(x)\,d\gamma(x)=0.
$$

\noindent\emph{(ii)$\Rightarrow$(i).} Suppose the multi-marginal problem has a feasible solution $\gamma\in\Gamma(\bar\pi)$ with $\gamma(S)=1$. Since $\mathbb{X}^4 $ is Polish, by \citet[Theorem 12.13]{kechris1995} there exists a measurable selection
  $ s:S\to W_0 $ such that $ \Psi(s(x))= x $  for $x\in S$. Define  $\nu \equiv \gamma\circ s^{-1} $, then $\nu(W_0)=\gamma(s^{-1}(W_0))\geq\gamma(S)=1$. Since $\Psi^\kappa\circ s=\operatorname{proj}^\kappa$, 
$$
  \nu\circ(\Psi^\kappa)^{-1}
  =\gamma\circ s^{-1}\circ(\Psi^\kappa)^{-1}
  =\gamma\circ(\operatorname{proj}^\kappa)^{-1}
  =\bar\pi^\kappa,
$$
for $\kappa\in\{c,m,f\}$ we have that $\nu$ is a preference-stable representation, establishing (i).

\smallskip
\noindent\emph{(ii)$\Leftrightarrow$(iii).} By Kantorovich duality for the multi-marginal transport problem with cost $\mathbf{1}_{S^c}$ (\citealp[Theorem 5.10]{villani2009}), feasibility is equivalent to the dual inequalities of (iii). These are the \citet{Block1960} inequalities for the preference-stable random utility representation. In the finite-support implementation, they reduce to the corresponding finite Block-Marschak polynomial restrictions.
\end{proof}

\subsection{Proof of Proposition \ref{lem:equivalences}}\label{app:proofOfequivalences}

\emph{\ref{equ:qualitative} $\Rightarrow$ \ref{equ:system}:} 
Suppose the marginals $\bar\pi^\kappa$ admit a rationalisation by some $\nu^\Delta$ supported on $\Theta_0$. 
Stack the marginals into the vector $\bar\pi = (\bar\pi^c, \bar\pi^m, \bar\pi^f)$ and define $\nu^\Delta $ per component $\nu^\Delta_l = \nu^\Delta(\theta_l)$ for $\theta_l \in \Theta_0$. 
By construction of $A$, the column $A_{\cdot, l}$ is the indicator vector of the revealed preference types projected from $\theta_l$ via $\text{proj}^\kappa$ for $\kappa \in \{c, m, f\}$. 
Hence $(A\nu^\Delta)_{\kappa, j} = \sum_l \mathbf{1}\{\text{proj}^\kappa(\theta_l) = \xi^\kappa_j\}\,\nu^\Delta(\theta_l) = \bar\pi^\kappa_j$ by the first 
equality of \eqref{eq:rumdiscrete}, so $A\nu^\Delta = \bar\pi$.

\emph{\ref{equ:system}$\Rightarrow$ \ref{equ:qualitative}:} Conversely, given $\nu^\Delta $ with $A\nu^\Delta = \bar\pi$, define $\nu^\Delta$ on $\Theta_0$ by $\nu^\Delta(\theta_l) = \nu^\Delta_l$. The same computation shows that the marginals of $\nu^\Delta$ under $\text{proj}^\kappa$ coincide with $\bar\pi^\kappa$, so $\nu^\Delta$ rationalises the observed marginals.

The equivalence between \ref{equ:system} and \ref{equ:quadratic} is shown in \citet{McFadden1991,McFadden2005}. 
Statement \ref{equ:quadratic} referenced therein, differs from \ref{equ:quadratic} in that it additionally requires $ \iota^\top \nu^\Delta = 1 $.
We now show that this is implied.
It is easy to see that by construction of $ A $ for any solution of the quadratic problem we have $ \gamma = \pi $ and since $ 3 = \iota^\top \pi = \iota^\top A \nu^\Delta = 3 \iota^\top \nu^\Delta $ by construction, we get $ \iota^\top \nu^\Delta = 1 $. 
Thus constraint $ \nu^\Delta \geq 0 $ in is sufficient for $ \gamma $ to be on the probability simplex.

It will be useful to write this problem with a \emph{tightened} cone constraint indexed by $ \underline{\nu} $.
Let $ L $ be a lower diagonal matrix from the Cholesky decomposition $ \Omega = L L^\top $. Then we can rewrite the quadratic form \ref{equ:quadratic} as
\begin{equation}
\min\limits_{\gamma \in \left\{ A\nu^\Delta | \nu^\Delta \geq \underline{\nu} \right\}} (\pi - \gamma)^\top L L^\top (\pi - \gamma) \text{.}
\end{equation}
Using $ \gamma = A \nu^\Delta $ and introducing a slack variable $ s \geq 0 $ such that we can write $ \nu^\Delta = \underline{\nu} + s $ we obtain
\begin{equation}
\min\limits_{\nu^\Delta = \underline{\nu} + s, s\geq 0 } (\pi - A(\underline{\nu}+s))^\top L L^\top (\pi - A(\underline{\nu}+s)) \text{.}
\end{equation}
This does not depend on $ \nu^\Delta $ but only on $ s $ and we can write it in the quadratic form
\begin{equation}
\min\limits_{s\geq 0} \left\{ \frac{1}{2}s^\top A^\top \Omega A s - s^\top A^\top \Omega (\pi - A\underline{\nu}) \right\} \text{.}
\end{equation}
Letting $ H = A^\top\Omega A $ and $ f(\pi, \underline{\nu}) = -A^\top\Omega(\pi-A\underline{\nu}) $ we get a canonical form of a non-negative least squares problem, with gradient for iteration $ \tau \geq 0 $ defined as $ \mu_{\tau} = H s_{\tau} + f(\pi, \underline{\nu}) $.
\citet{Johansson2006} show that component-wise projection $ s_{\tau+1,j} = \max(0, s_{\tau,j} - \mu_{\tau,j} d_j) $ where $ d = \text{diag}(H \iota)^{-1} $ and $ j = 1, \ldots, |\Theta_0| $ referring to the $j^{\text{th}}$ component of $ s $ will find the solution of the problem. \hfill\qed

\begin{rem}\label{rmk:conservatism}
Proposition \ref{lem:equivalences} characterises rationalisability at the discrete level. Its relationship to the structural hypothesis \ref{hyp:poollaw} is characterised as follows:
$$ W_0 \subseteq \theta^{-1}(\Theta_0^{\text{Slutsky}}) \subseteq \theta^{-1}(\Theta_0) $$
where $W_0$ is the set of configurations with $\omega^m_i = \omega^m_{i'}$ and $\omega^f_j = \omega^f_{j'}$ (true preference stability), $\Theta_0^{\text{Slutsky}}$ collects discrete configuration types whose continuous representatives satisfy the Slutsky restriction of Lemma \ref{lem:prefstable}, and $\Theta_0$ collects those satisfying the discrete CARP/GARP characterisation of Definition \ref{thm:necessary}. Both inclusions reflect that each subsequent characterisation is necessary but not sufficient for the previous one.

Consequently, existence of $\nu^\star$ on $W_0$ implies existence of a rationalising $\nu^\Delta$ on $\Theta_0$, but not the converse. The test based on $\Theta_0$ therefore has correct size under \ref{hyp:poollaw}: rejection rules out rationalisability on the most permissive set, and hence rules out \ref{hyp:poollaw}. Power is conservative and the test may fail to detect violations of \ref{hyp:poollaw} that lie in the gap between $W_0$ and $\theta^{-1}(\Theta_0)$.
\end{rem}

\section{Auxiliary Proofs}\label{sec:auxiliaryproofs}
\subsection{Permutation equivariance of assignment problem}\label{lem:matching}

Let $ \Phi_{ij}\equiv\Phi(z_i,z_j) $ and the set of permutation matrices $ M \equiv \{ \mu\in\{0,1\}^{\mathbb N\times\mathbb N}:\ \sum_j \mu_{ij}=1,\ \sum_i \mu_{ij}=1 \} $.
Define the assignment problem as
\begin{equation}
M(\Phi)\ \in\ \arg\max_{\mu\in\mathcal M}\ \sum_{i,j} \mu_{ij}\Phi_{ij}.
\end{equation}
Now fix $(\varsigma^m,\varsigma^f)\in G\times G$ and define the relabelled surplus as $ \Phi'_{ij}\equiv\Phi_{\varsigma^m(i),\,\varsigma^f(j)} $, and the assignment matrix $ \mu'_{ij}\equiv\mu_{\varsigma^m(i),\,\varsigma^f(j)} $.
$ M $ is permutation-equivariant if $ \mu\in M(\Phi) $ implies $ \mu' \in M(\Phi') $.
First, for feasibility, if $\mu\in\mathcal M$, then for every $i$,
\begin{equation}
\sum_j \mu'_{ij} =\sum_j \mu_{\varsigma^m(i),\,\varsigma^f(j)} =\sum_{j'} \mu_{\varsigma^m(i),\,j'}=1,
\end{equation}
since $ \varsigma^f \in G \subset S_{\infty} $ is a bijection. The same holds for $ \varsigma^m \in G $, and hence $\mu'\in\mathcal M$.
Second, for the objective, we note that $ (\varsigma^m, \varsigma^f) $ is a bijection on $ \mathbb{N}^2 $, with inverse $ ((\varsigma^m)^{-1}, (\varsigma^f)^{-1}) $. Hence, by reindexing the sum, we show that the relabelled assignment problem has the same objective value as the original one:
\begin{equation}\label{eq:obj_equiv}
\sum_{i,j} \mu'_{ij} \Phi'_{ij} =\sum_{i,j} \mu_{\varsigma^m(i),\,\varsigma^f(j)}\, \Phi_{\varsigma^m(i),\,\varsigma^f(j)} =\sum_{i',j'} \mu_{i'j'}\Phi_{i'j'}.
\end{equation}
Finally we must show that any other $ \nu' \in \mathcal{M} $  is inferior to $ \mu' $ in the relabelled problem. 
Using \eqref{eq:obj_equiv} for the first and last equality, we have
\begin{equation}
\sum_{i,j} \mu'_{ij} \Phi'_{ij} = \sum_{i',j'} \mu_{i'j'} \Phi_{i'j'} \geq \sum_{i',j'} \nu_{i'j'} \Phi_{i'j'} = \sum_{i,j} \nu'_{ij} \Phi'_{ij}
\end{equation}
where the middle inequality follows from $\mu \in M(\Phi)$.
We move from $ \nu' $ back to $ \nu  $ using the same inverse as defined on $ \mu $. \hfill\qed

\subsection{Examples}\label{app:examples}
\begin{exa} \label{exa:transposition}
  For our example in Figure \ref{fig:partition} let Zeus \& Hera be the couple $(2, 2)$, Athena the single $ (0, 1) $, and Apollo the single $ (1, 0) $. The transposition $ \tau^f = (1\; 2) $ generates the hypothetical couple Zeus \& Athena $ (2, 1) $ and the counterfactual single woman Hera $ (0, 2) $. Similarly, the transposition $ \tau^m = (1\; 2) $ creates the counterfactual couple Apollo \& Hera $ (1, 2) $ and the counterfactual single Zeus $ (2, 0) $. Applying them both through $ \tau' = (1\; 2) \circ \sigma \circ (1\; 2) $ leads to the counterfactual couple Apollo \& Athena $ (1, 1) $ and the counterfactual singles Zeus $(2, 0)$ and Hera $(0, 2)$.
Indeed, we have for Zeus: $\tau^f(\sigma(\tau^m(2))) = \tau^f(\sigma(2)) = \tau^f(2) = 1 $ and for Apollo: $\tau^f(\sigma(\tau^m(1))) = \tau^f(\sigma(1)) = \tau^f(0) = 0 $. The matching graph is unchanged but the original couples edge is now relabelled as Apollo \& Athena, the single man edge as Zeus, and the single woman edge as Hera.
The triple $ (i, \tau^m, \tau^f) = (2, (1\; 2), (1\; 2)) $ represents the corresponding configuration.
\end{exa}

\begin{exa}\label{exa:nondegenerate}
To see how \eqref{eq:rumdiscrete} relates to the existence of resource shares $ \rho $, we look at a specific numerical example of a stylised reduced-form dictatorship collective model, in which the household type is determined by the preferences of person with the higher bargaining power.
Refer to Figure \ref{fig:budgets}. An individual can be of type $ \xi \in \{ \xi_{st}, \xi_{s't}, \xi_{st'} \} $ representing tuples of the line-segments of the respective delegate consumption bundles $ (x_s, x_t), (x_s', x_t), (x_s, x_t') $.
Denote the three distributions $ \pi^m, \pi^f, \pi^c $ supported on this choice space.
Now, assume we knew how the latent configurations were allocated.
As stated above, for this we do not need the whole assignment matrix but only the implied matches (couplings) of latent male and female revealed preference types.
Let the matching matrix induced by the mixture distribution of primitive types $ \omega^m $ and $ \omega^f $, be denoted by $ \mu_{mf} = P(m = \xi_m, f = \xi_f) $ and take $$ \mu = \left( \begin{matrix} \mu_{st,st} & \mu_{st, s't} & \mu_{st, st'} \\ \mu_{s't,st} & \mu_{s't, s't} & \mu_{s't, st'} \\ \mu_{st',st} & \mu_{st', s't} & \mu_{st', st'}\end{matrix} \right)  = \left( \begin{matrix} 0 & 0.2 & 0.4 \\ 0.2 & 0.1 & 0 \\ 0 & 0.1 & 0 \end{matrix} \right), $$
which implies the row margins $ \pi^m = (\pi^m_{st}, \pi^m_{s't}, \pi^m_{st'}) = (0.6, 0.3, 0.1) $ and column margins $ \pi^f = (\pi^f_{st}, \pi^f_{s't}, \pi^f_{st'}) = (0.2, 0.4, 0.4) $.
Now, assume that we observe couple's choices $ \pi^c = (\pi^c_{st}, \pi^c_{s't}, \pi^c_{st'}) = (0.2, 0.7, 0.1) $. 
Using, the masses of our couplings matrix $ \mu $ we can ask the question whether there exists $ \rho $'s which induces choices consistent with this observed distribution.
Take, for example $ (\xi_{st}, \xi_{s't}) $. 
Using the extreme cases where every household consisting of individuals of the respective types had bargaining power approaching $ 0 $ or $ 1 $, this type of couple could induce $ \xi_{st} $ with probability masses $ \pi_{st} \in (0, 0.2) $ and $ \xi_{s't} $ with $ \pi_{s't} \in (0, 0.2) $.
Proceeding in the same fashion for all other types, we can construct bounds $ \pi^c_{st} \in (0.0, 0.8) $, $ \pi^c_{s't} \in (0.1, 0.6) $, and $ \pi^c_{st'} = (0, 0.5) $.\footnote{There are also joint restrictions which can be obtained by solving a linear program.}
Clearly, the observed $ \pi^c $ is outside these bounds.
Thus, there is no $ \rho $ which rationalises the observed type distribution, which provides evidence against stable preferences. 
In our test, we do not assume knowledge of the coupling matrix $ \mu $, and rationalisability becomes an existence statement over over all possible couplings consistent with the observed marginals.
\end{exa}
 \section{Simulations}\label{sec:mc}
In this section, we investigate the properties of our proposed test in a simulation setting. In particular, we are interested in
how much power it has to detect a violation of the stable preference assumption and whether or not it has a correct proportion of false positives. 
Since specifying a parametric continuous demand system requires at least five goods to impose the SNR(S-1) condition on the Slutsky matrix and distinguish the collective model
from the unitary model, we will not sample continuous demands as functions of prices and individual budget constraints, but rather draw our sample directly from the discrete choice space.\footnote{A revealed preference based setting allows us to test the restrictions of the model with only three goods \citep{Cherchye2007}, whereas \citet{Browning1998} need five goods.} 
This should be interpreted as a continuous uniform distribution of choices on different budget planes, where the relative prices are such that the partitions of the budget planes
are of equal size. 
Recall that we test this against the set of households which are consistent with the necessary conditions of the collective axioms based on aggregate consumption but not 
consistent when single data and the stable preference assumption is added. This set is denoted by $ \Theta_1 $ and we have $ \Theta_{\text{collective}} = \Theta_0 \cup \Theta_1 $. 
If we reject the null hypothesis that both the collective axiom and the stable preference assumption holds, by excluding all irrational matches $ \Theta \setminus \Theta_{\text{collective}} $,
we must conclude that the stable preference assumption does not hold.
To control the proportion of households for whom this is the case (our data generating process) we introduce the parameter $ p $ which specifies the probability that a particular choice is both collectively rational and satisfies the stable preference assumption $ p \equiv P(\theta \in \Theta_0) $.\footnote{This rationality parameter is similar as for example $ \lambda $ in \citet{Hoderlein2011} which specifies the population's deviation from Slutsky symmetry.} 
By only considering collectively rational choices in our simulations we thus have $ 1 - p = P(\theta \not\in \Theta_0) = P(\theta \in \Theta_1) $ by construction.
Simulation lets us trivially treat $ P = \mu \otimes \rho $ as a joint measure over the type space, rather than a directing measure from a de Finetti representation of configurations.

Our simulation setting is as follows. We consider $ S = 100 $ samples of size $ \underline{n} \in \left\{ 500, 1000, 2000 \right\} $ where $ \underline{n} = n_f = n_m = n_c $
such that $ n = 3 \underline{n} $ in a minimal setting with $ T = 3 $ periods which we construct by drawing $ \lfloor \underline{n}p \rfloor $ indices from the space of 
collectively rational matches $\mathfrak{X}^0$ for which the stable preference assumption holds and $ \lceil \underline{n}(1-p) \rceil $ indices from the space 
of collectively rational types $ \mathfrak{X}^{1} $ which does not satisfy the assumption.  
Based on a sample of matches, we then calculate the choice probabilities $ \widehat{\pi} $ accordingly. 
For estimation, we only use the marginal distribution of choices of each sample of household compositions and
draw $ B = 100 $ samples from the respective empirical distributions (i.e. with replacement) to calculate $
\pi^b_{\tau_n} $ and estimate the empirical distribution of the test statistic $ \mathcal{J}^{\tau_n}_{n,b} $. 
These simulations are repeated for $ p \in \left\{ 0.75, 0.85, 0.9, 0.95, 0.975, 0.99, 1.00 \right\} $. 

\begin{figure}[t]
	\caption{Power function for $ n = 1,500 $ (l.h.s) and $ n = 3,000 $ (r.h.s.)}
	{\centering
	\begin{minipage}{\textwidth}
	\includegraphics[width=0.48\textwidth]{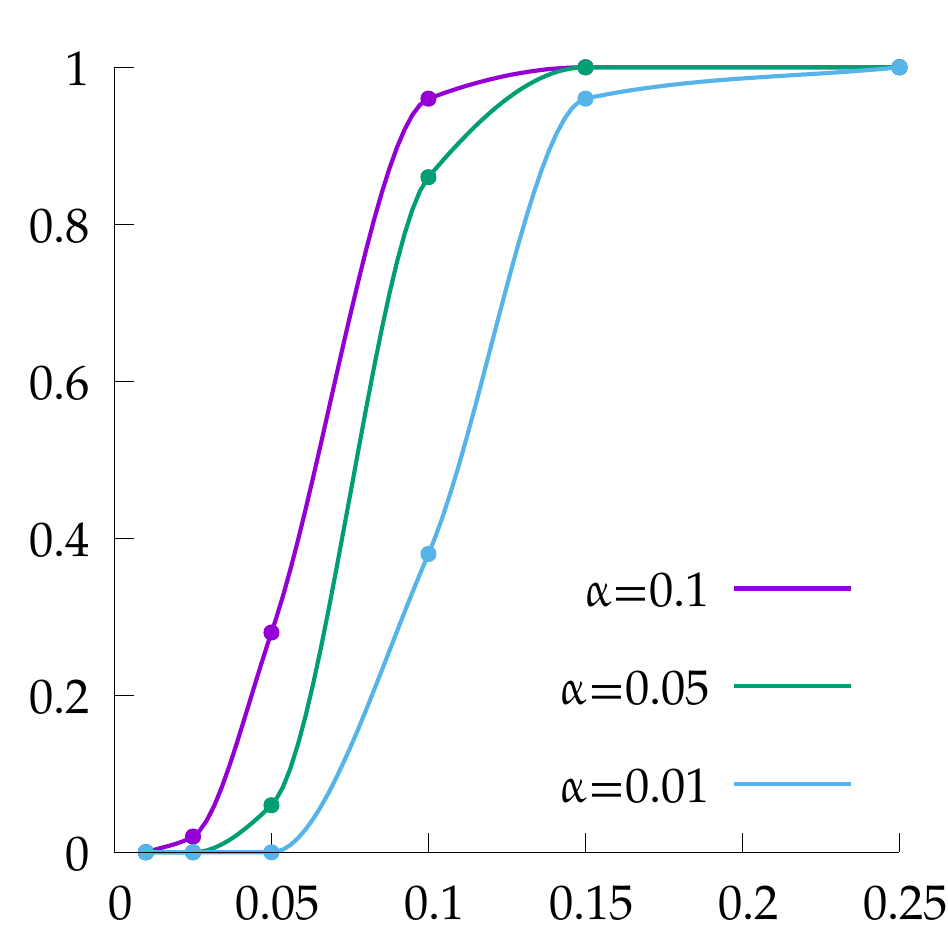} \hfill \includegraphics[width=0.48\textwidth]{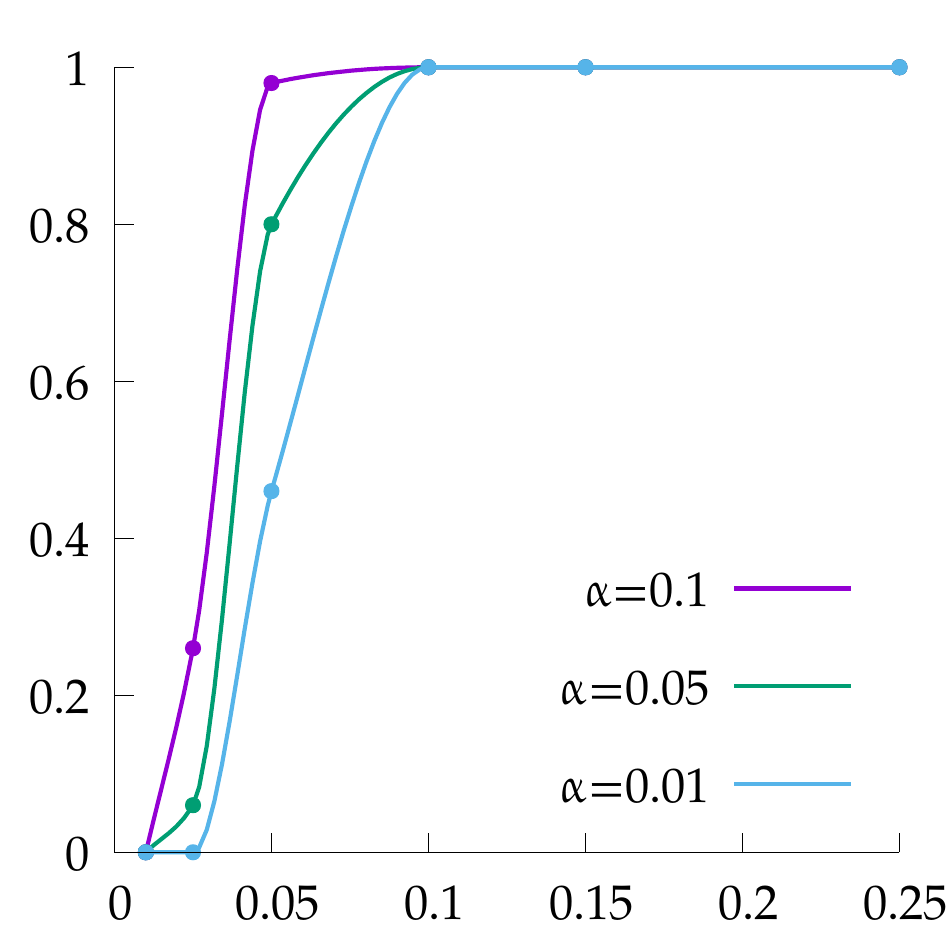}
	\end{minipage}}
	\label{fig:power}
\end{figure}

Figure \ref{fig:power} shows the power of our test against the non-stable preference alternative as a function of $ p $, with
sample-size $ \underline{n} = 500 $ for the left-hand side graph, and $ \underline{n} = 1000 $ for the right-hand side graph, respectively. 
We use monotone cubic splines to interpolate between the actual simulation results, which are marked as solid dots.
To be more precise, the respective functions refer to sample rejection frequencies using the rejection 
rule $ J \mapsto \indicator\left\{J > \widehat{F_{\mathcal{J}_{n}}^{-1}}(1-\alpha) \right\} $ for $\alpha \in \left\{ 0.01, 0.05,
0.10 \right\} $.
In addition to this, we also observe that as $ \underline{n} $ increases the power of our test improves and is able to correctly reject the hypothesis of a 
collectively rational population already at small proportions $ p $.

The intercepts of these functions should be interpreted as the proportion of false positives (type I errors) since they correspond to the case 
where everyone is rational. One might expect that for a correctly sized test the empirical rejection frequencies should tend to $ \alpha $. 
However, given our partial identification procedure we have a composite null hypothesis, \textit{i.e.} the probability of a type I error should be at most $ \alpha $
as defined in equation \eqref{eq:test}. 
To see this note that every vector of "true" choice frequencies denoted by $ \pi_0 $ lying in the interior of the cone will have projection residuals
of length zero. Bootstrapping out of $ \widehat{\pi} $ which tends to $ \pi_0 $ using the usual regularity properties could then lead to a confidence interval
which is always entirely in the interior of the cone and we would never wrongly reject the null hypothesis. This also implies that in such a case 
our bootstrap distribution is degenerate and has mass one at point zero. 

In our Monte Carlo setting and the case where $ p = 1.0 $, we randomly select types from the type-space $ \mathfrak{X}^0 $, satisfying collective rationality. 
Thus the "true" parameter vector $ \nu_0 $ is assumed to have a uniform distribution over the probability simplex and the worst-case, namely to get a $ \nu $ such that $ \pi_0 = A \nu $ is on the boundary of the cone with respect to any of its dimensions, occurs with measure zero. 

Thus, in order to evaluate whether the size of our test is correct under the test's minimax strategy, we have to construct a worst case.
For this, note that the test is constructed in a way that considers hypothetical types by taking combinations of possible household choice behaviour per price regime
over a range of price regimes. To fix notation, we will call two collectively rational matches \emph{similar} if there is at least one element
in the product space spanned by these two matches which is an element of the space of collectively rational matches that do not satisfy the stable preference hypothesis.
We will then construct worst cases by specifying a distribution over $ n_0 $ such similar matches. To make sure that our $ \pi_0 $ is on the boundary
of the cone in all dimensions, \textit{i.e.} on the cusp, we shift the cone by manually controlling the tightening parameter  $ \tau_n $ according to this distribution. 
Figure \ref{fig:size} shows simulation results for two such worst case scenarios with $ 5 $ similar matches and $ 2 $ similar matches, respectively. 

\begin{figure}[t]
	\caption{Type I error for $ n_0 = 5 $ (l.h.s) and $ n_0 = 2 $ (r.h.s.) worst-case matches}
	{\centering
	\begin{minipage}{16.06cm}
	\includegraphics[width=0.48\textwidth]{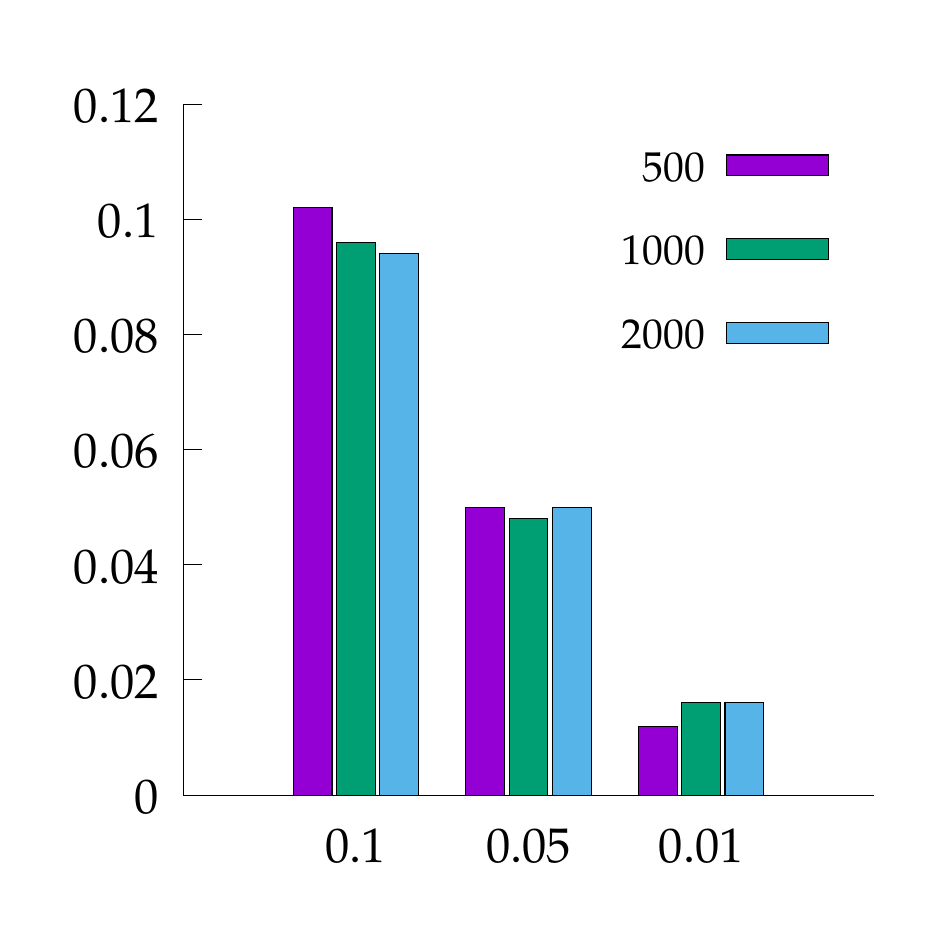} \hfill \includegraphics[width=0.48\textwidth]{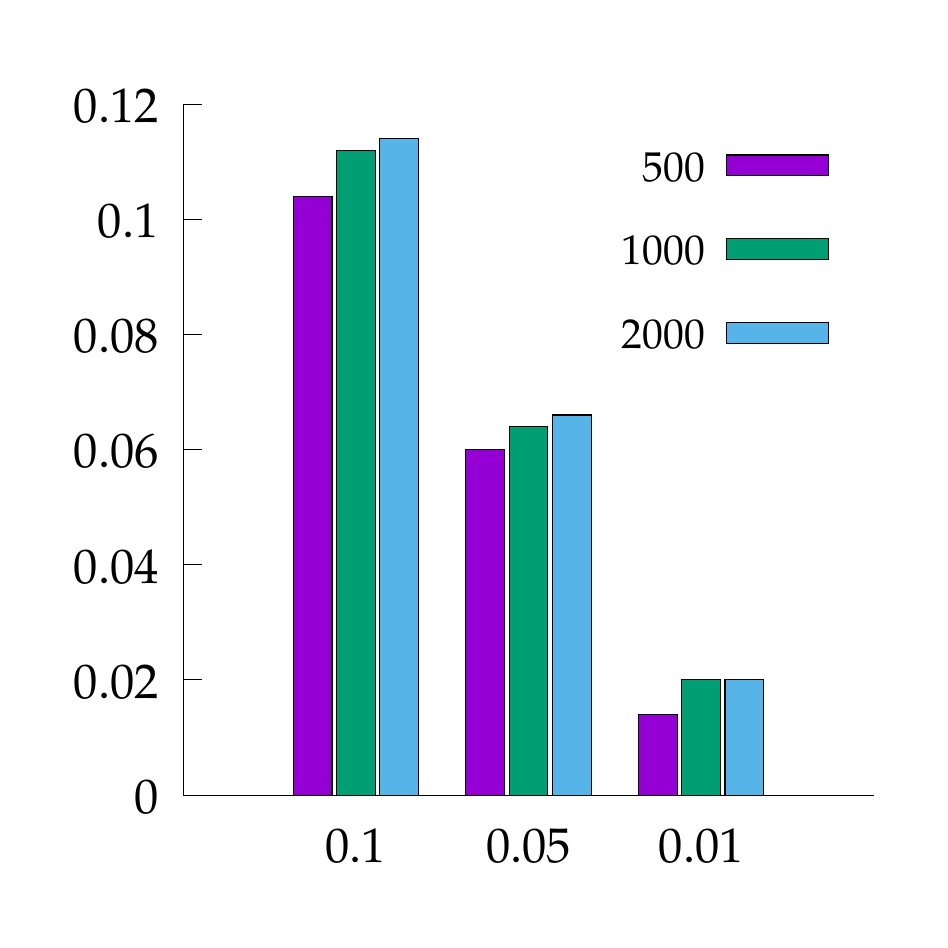}
	\end{minipage}}
	\label{fig:size}
\end{figure}

The size results do not seem to deteriorate much with the number of worst case matches included in the sample.
Since the properties of the test are based on an asymptotic argument, we should see the empirical frequency of false positives tending to the respective $ \alpha $ which define the rejection rules and are plotted 
on the $ x $-axis.  The results are what one would expect, with all sample sizes being reasonably accurate. 
Since in a well-behaved test, false-positives are by definition rather rare events, in order to minimize simulation uncertainty, we increased the 
number of Monte Carlo repetitions to $ S = 500 $.
which greatly increased computational complexity due to the high dimensionality of the testing problem. 

 \section{Further Results}\label{sec:robustness}
{\singlespacing \begin{table}[ht!]
\caption{Results Conditional on Private Expenditure Class, Private Goods, Exogenous Prices}
\centering
\singlespacing
\small
\renewcommand{\arraystretch}{0.9}
\resizebox{\textwidth}{!}{
\begin{tabular}{ccccccc}
\multicolumn{7}{c}{\textsc{Longitudinal Internet studies for the Social Sciences (LISS)}} \\
\toprule
Years  &  Private Expenditure  &  $ N_{\text{couples}}^{\text{total}} $  &  $ N_{\text{couples}}^{\text{rational}} $  &  $ N_{\text{singles}}^{\text{total}} $  &  $ N_{\text{singles}}^{\text{rational}} $  &  p-value \\
\midrule
2009   2010   2012  &  high  &  90  &  90  &  127  &  115  &  0.000 \\
2009   2010   2012  &  mid  &  100  &  99  &  84  &  69  &  0.028 \\
2009   2010   2012  &  low  &  110  &  107  &  89  &  55  &  0.386 \\
[-7pt]\\

\bottomrule
\end{tabular}}
\label{tab:private.nomip.rs}
\vspace{-\baselineskip}
\floatnote{Number of total couples, rational couples according to aggregate CARP, total singles and rational singles according to GARP, for different combinations of periods and private expenditure class (low, mid, high).}
\vspace{-\baselineskip}
\end{table}
 }

{\singlespacing \begin{table}[ht!]
\caption{Results for Public Goods with Exogenous Prices}
\centering
\singlespacing
\small
\renewcommand{\arraystretch}{0.9}
\resizebox{\textwidth}{!}{
\begin{tabular}{cccccc}
\multicolumn{6}{c}{\textsc{Longitudinal Internet studies for the Social Sciences (LISS)}} \\
\toprule
Years  &  $ N_{\text{couples}}^{\text{total}} $  &  $ N_{\text{couples}}^{\text{rational}} $  &  $ N_{\text{singles}}^{\text{total}} $  &  $ N_{\text{singles}}^{\text{rational}} $  &  p-value \\
\midrule
2009   2010   2012  &  1259  &  1217  &  390  &  320  &  0.074 \\
[-7pt]\\

\multicolumn{6}{c}{\textsc{Russian Longitudinal Monitoring Survey (RLMS)}} \\
\toprule
Years  &  $ N_{\text{couples}}^{\text{total}} $  &  $ N_{\text{couples}}^{\text{rational}} $  &  $ N_{\text{singles}}^{\text{total}} $  &  $ N_{\text{singles}}^{\text{rational}} $  &  p-value \\
\midrule
2012   2013   2014  &  312  &  300  &  291  &  276  &  0.008 \\
2011   2013   2014  &  303  &  298  &  270  &  253  &  0.012 \\
2011   2012   2014  &  300  &  297  &  270  &  253  &  0.026 \\
2011   2012   2013  &  317  &  307  &  298  &  275  &  0.024 \\
2010   2013   2014  &  246  &  238  &  207  &  192  &  0.000 \\
2010   2012   2014  &  246  &  239  &  212  &  194  &  0.022 \\
2010   2012   2013  &  258  &  252  &  234  &  212  &  0.052 \\
2010   2011   2013  &  259  &  248  &  231  &  207  &  0.060 \\
2010   2011   2012  &  288  &  281  &  259  &  237  &  0.054 \\
2005   2011   2012  &  254  &  239  &  210  &  182  &  0.012 \\
[-7pt]\\

\multicolumn{6}{c}{\textsc{Spanish Continuous Family Expenditure Survey (ECPF)}} \\
\toprule
Years  &  $ N_{\text{couples}}^{\text{total}} $  &  $ N_{\text{couples}}^{\text{rational}} $  &  $ N_{\text{singles}}^{\text{total}} $  &  $ N_{\text{singles}}^{\text{rational}} $  &  p-value \\
\midrule
1994.3   1994.1   1994.2  &  106  &  105  &  5  &  3  &  0.206 \\
1993.4   1994.1   1994.2  &  108  &  105  &  8  &  4  &  0.052 \\
1992.2   1992.3   1992.1  &  93  &  90  &  15  &  14  &  0.030 \\
1990.4   1991.1   1991.2  &  95  &  94  &  14  &  12  &  0.012 \\
1989.3   1989.1   1989.2  &  107  &  104  &  4  &  3  &  0.026 \\
1988.2   1988.3   1988.4  &  96  &  94  &  6  &  3  &  0.398 \\
1987.1   1987.2   1987.3  &  124  &  121  &  8  &  6  &  0.140 \\
1986.4   1987.1   1987.2  &  154  &  152  &  9  &  5  &  0.406 \\
1986.3   1986.4   1987.1  &  129  &  127  &  7  &  6  &  0.022 \\
1986.3   1986.4   1986.2  &  125  &  120  &  9  &  9  &  0.058 \\
[-7pt]\\

\bottomrule
\end{tabular}}
\label{tab:public.nomip.nohet}
\vspace{-\baselineskip}
\floatnote{Number of total couples, rational couples according to aggregate CARP, total singles and rational singles according to GARP, for different combinations of periods. Sampling for the ECPF is quarterly, for which we use the \texttt{year.quarter} notation.}
\vspace{-\baselineskip}
\end{table}
 }

{\singlespacing \begin{table}[ht!]
\caption{Results Conditional on Demographics for Public Goods, Exogenous Prices}
\centering
\singlespacing
\small
\renewcommand{\arraystretch}{0.9}
\resizebox{\textwidth}{!}{
\begin{tabular}{cccccccc}
\multicolumn{8}{c}{\textsc{Longitudinal Internet studies for the Social Sciences (LISS)}} \\
\toprule
Years  &  College  &  Age  &  $ N_{\text{couples}}^{\text{total}} $  &  $ N_{\text{couples}}^{\text{rational}} $  &  $ N_{\text{singles}}^{\text{total}} $  &  $ N_{\text{singles}}^{\text{rational}} $  &  p-value \\
\midrule
2009   2010   2012  &  1  &  2  &  167  &  157  &  72  &  64  &  0.070 \\
2009   2010   2012  &  1  &  1  &  373  &  363  &  83  &  69  &  0.174 \\
2009   2010   2012  &  1  &  0  &  183  &  180  &  50  &  32  &  0.900 \\
2009   2010   2012  &  0  &  2  &  239  &  232  &  105  &  92  &  0.050 \\
2009   2010   2012  &  0  &  1  &  237  &  230  &  62  &  50  &  0.086 \\
2009   2010   2012  &  0  &  0  &  60  &  55  &  18  &  13  &  0.456 \\
[-7pt]\\

\multicolumn{8}{c}{\textsc{Russian Longitudinal Monitoring Survey (RLMS)}} \\
\toprule
Years  &    &  Age  &  $ N_{\text{couples}}^{\text{total}} $  &  $ N_{\text{couples}}^{\text{rational}} $  &  $ N_{\text{singles}}^{\text{total}} $  &  $ N_{\text{singles}}^{\text{rational}} $  &  p-value \\
\midrule
2012   2013   2014  &    &  2  &  70  &  68  &  131  &  122  &  0.024 \\
2012   2013   2014  &    &  1  &  188  &  180  &  123  &  114  &  0.014 \\
2012   2013   2014  &    &  0  &  49  &  44  &  33  &  31  &  0.220 \\
2011   2013   2014  &    &  2  &  66  &  65  &  116  &  108  &  0.002 \\
2011   2013   2014  &    &  1  &  184  &  180  &  117  &  109  &  0.050 \\
2011   2013   2014  &    &  0  &  50  &  50  &  34  &  29  &  0.006 \\
2011   2012   2014  &    &  2  &  65  &  65  &  117  &  108  &  0.056 \\
2011   2012   2014  &    &  1  &  184  &  178  &  117  &  110  &  0.030 \\
2011   2012   2014  &    &  0  &  49  &  49  &  34  &  30  &  0.048 \\
2011   2012   2013  &    &  2  &  73  &  71  &  128  &  117  &  0.070 \\
2011   2012   2013  &    &  1  &  183  &  174  &  124  &  114  &  0.036 \\
2011   2012   2013  &    &  0  &  55  &  52  &  41  &  37  &  0.004 \\
2010   2013   2014  &    &  2  &  50  &  48  &  84  &  76  &  0.072 \\
2010   2013   2014  &    &  1  &  145  &  140  &  87  &  81  &  0.192 \\
2010   2013   2014  &    &  0  &  48  &  48  &  33  &  30  &  0.064 \\
2010   2012   2014  &    &  2  &  49  &  46  &  86  &  76  &  0.060 \\
2010   2012   2014  &    &  1  &  146  &  143  &  94  &  89  &  0.246 \\
2010   2012   2014  &    &  0  &  49  &  49  &  30  &  26  &  0.034 \\
2010   2012   2013  &    &  2  &  55  &  53  &  98  &  87  &  0.030 \\
2010   2012   2013  &    &  1  &  143  &  135  &  93  &  84  &  0.050 \\
2010   2012   2013  &    &  0  &  52  &  49  &  36  &  33  &  0.012 \\
2010   2011   2013  &    &  2  &  56  &  54  &  96  &  86  &  0.020 \\
2010   2011   2013  &    &  1  &  145  &  138  &  95  &  86  &  0.022 \\
2010   2011   2013  &    &  0  &  52  &  52  &  35  &  29  &  0.000 \\
2010   2011   2012  &    &  2  &  72  &  70  &  115  &  105  &  0.060 \\
2010   2011   2012  &    &  1  &  156  &  151  &  100  &  94  &  0.030 \\
2010   2011   2012  &    &  0  &  55  &  55  &  39  &  35  &  0.100 \\
2005   2011   2012  &    &  2  &  48  &  45  &  99  &  85  &  0.018 \\
2005   2011   2012  &    &  1  &  114  &  110  &  72  &  65  &  0.124 \\
2005   2011   2012  &    &  0  &  88  &  78  &  37  &  31  &  0.306 \\
[-7pt]\\

\bottomrule
\end{tabular}}
\label{tab:public.nomip.het}
\vspace{-\baselineskip}
\floatnote{Number of total couples, rational couples according to aggregate CARP, total singles and rational singles according to GARP, for different combinations of periods and demographics.}
\vspace{-\baselineskip}
\end{table}
 }

{\singlespacing \begin{table}[ht!]
\caption{Results with Endogenous Expenditure}
\centering
\singlespacing
\small
\renewcommand{\arraystretch}{0.9}
\resizebox{\textwidth}{!}{
\begin{tabular}{cccccccc}
\multicolumn{8}{c}{\textsc{Longitudinal Internet studies for the Social Sciences (LISS): Private}} \\
\toprule
Years  &  \#  &  Min  &  25\textsuperscript{th} Quantile  &  Median  &  Mean  &  75\textsuperscript{th} Quantile  &  Max
\\
\midrule
2009   2010   2012  &  9  &  0.000  &  0.005  &  0.016  &  0.122  &  0.183  &  0.532\\
[-7pt]\\

\multicolumn{8}{c}{\textsc{Longitudinal Internet studies for the Social Sciences (LISS): Public}} \\
\toprule
Years  &  \#  &  Min  &  25\textsuperscript{th} Quantile  &  Median  &  Mean  &  75\textsuperscript{th} Quantile  &  Max
\\
\midrule
2009   2010   2012  &  7  &  0.002  &  0.010  &  0.556  &  0.436  &  0.649  &  0.922\\
[-7pt]\\

\bottomrule
\end{tabular}}
\label{tab:ec.nomip.nohet}
\vspace{-\baselineskip}
\floatnote{Summary statistics of p-values for different combinations of periods with control function $ v $ evaluated at $ v_0 $. The control function is estimated with income level acting as an instrument for total consumption.}
\vspace{-\baselineskip}
\end{table}
 }

{\singlespacing \begin{table}[ht!]
\caption{Results for Mixed Integer Procedure}
\centering
\singlespacing
\small
\renewcommand{\arraystretch}{0.9}
\resizebox{\textwidth}{!}{
\begin{tabular}{cccccccc}
\multicolumn{8}{c}{\textsc{LISS (private)}} \\
\toprule
Years  &    &    &  $ N_{\text{couples}}^{\text{total}} $  &  $ N_{\text{couples}}^{\text{rational}} $  &  $ N_{\text{singles}}^{\text{total}} $  &  $ N_{\text{singles}}^{\text{rational}} $  &  p-value \\
\midrule
2009   2010   2012  &    &    &  605  &  605  &  463  &  380  &  0.010 \\
[-7pt]\\

\multicolumn{8}{c}{\textsc{LISS (public)}} \\
\toprule
Years  &    &    &  $ N_{\text{couples}}^{\text{total}} $  &  $ N_{\text{couples}}^{\text{rational}} $  &  $ N_{\text{singles}}^{\text{total}} $  &  $ N_{\text{singles}}^{\text{rational}} $  &  p-value \\
\midrule
2009   2010   2012  &    &    &  1259  &  1259  &  390  &  320  &  0.000 \\
[-7pt]\\

\multicolumn{8}{c}{\textsc{LISS (private, conditional on demographics)}} \\
\toprule
Years  &  College  &  Age  &  $ N_{\text{couples}}^{\text{total}} $  &  $ N_{\text{couples}}^{\text{rational}} $  &  $ N_{\text{singles}}^{\text{total}} $  &  $ N_{\text{singles}}^{\text{rational}} $  &  p-value \\
\midrule
2009   2010   2012  &  1  &  2  &  92  &  92  &  79  &  71  &  0.084 \\
2009   2010   2012  &  1  &  1  &  163  &  163  &  93  &  75  &  0.064 \\
2009   2010   2012  &  1  &  0  &  65  &  65  &  64  &  56  &  0.000 \\
2009   2010   2012  &  0  &  2  &  137  &  137  &  121  &  97  &  0.000 \\
2009   2010   2012  &  0  &  1  &  114  &  114  &  85  &  65  &  0.000 \\
2009   2010   2012  &  0  &  0  &  34  &  34  &  21  &  16  &  0.000 \\
[-7pt]\\

\multicolumn{8}{c}{\textsc{LISS (public, conditional on demographics)}} \\
\toprule
Years  &  College  &  Age  &  $ N_{\text{couples}}^{\text{total}} $  &  $ N_{\text{couples}}^{\text{rational}} $  &  $ N_{\text{singles}}^{\text{total}} $  &  $ N_{\text{singles}}^{\text{rational}} $  &  p-value \\
\midrule
2009   2010   2012  &  1  &  2  &  167  &  167  &  72  &  64  &  0.066 \\
2009   2010   2012  &  1  &  1  &  373  &  373  &  83  &  69  &  0.014 \\
2009   2010   2012  &  1  &  0  &  183  &  183  &  50  &  32  &  0.034 \\
2009   2010   2012  &  0  &  2  &  239  &  239  &  105  &  92  &  0.044 \\
2009   2010   2012  &  0  &  1  &  237  &  237  &  62  &  50  &  0.054 \\
2009   2010   2012  &  0  &  0  &  60  &  60  &  18  &  13  &  0.154 \\
[-7pt]\\

\bottomrule
\end{tabular}}
\label{tab:mixedinteger}
\vspace{-\baselineskip}
\floatnote{Mixed Integer \citet{Cherchye2011}}
\vspace{-\baselineskip}
\end{table}
 }

 \clearpage
\section{Descriptive Statistics}\label{sec:descriptives}
{\singlespacing 
  \begin{table}[H]
\caption{Descriptive Statistics (Private Goods)}
\resizebox{0.95\columnwidth}{!}{
{\centering
\begin{tabular}{ccccccccccc}
\multicolumn{11}{c}{\textsc{Longitudinal Internet Studies for the Social Sciences (LISS)}} \\
\toprule
& & \multicolumn{3}{c}{Food out} & \multicolumn{3}{c}{Clothing} & \multicolumn{3}{c}{Leisure} \\
Year & $ N $ & Mean & IQR & P & Mean & IQR & P & Mean & IQR & P \\ \midrule
2009 & 5594 & 43.463 & 45.0 & 108.1 & 69.825 & 60.0 & 107.4 & 20.121 & 20.0 & 98.6 \\
2010 & 5337 & 38.798 & 50.0 & 109.7 & 71.835 & 75.0 & 105.3 & 23.362 & 20.0 & 98.0 \\
2012 & 5463 & 40.611 & 50.0 & 113.4 & 74.347 & 80.0 & 106.6 & 23.743 & 25.0 & 100.4 \\[-10pt]
\\
\multicolumn{11}{c}{\textsc{Russian Longitudinal Monitoring Survey (RLMS)}} \\
\toprule
& & \multicolumn{3}{c}{Dairy} & \multicolumn{3}{c}{Bread} & \multicolumn{3}{c}{Meat} \\
Year & $ N $ & Mean & IQR & P & Mean & IQR & P & Mean & IQR & P \\ \midrule
2000 & 1506 &  81.7 & 120.4 & 121.1 & 113.2 & 102.8 & 116.5 & 322.0 & 439.7 & 128.3
\\
2005 & 1601 & 222.3 & 277.8 & 110.5 & 199.3 & 171.8 & 103.0 & 1003.5 & 1175.8 & 118.6
\\
2010 & 2839 & 475.7 & 492.7 & 116.7 & 303.3 & 270.8 & 107.6 & 1862.1 & 1926.0 & 105.3
\\
2011 & 2983 & 520.3 & 544.8 & 106.3 & 317.3 & 270.9 & 108.9 & 2165.8 & 2154.7 & 109.2
\\
2012 & 3154 & 551.0 & 550.5 & 104.4 & 330.2 & 291.7 & 112.0 & 2284.8 & 2315.0 & 108.3
\\
2013 & 3076 & 617.7 & 622.9 & 113.1 & 352.9 & 287.1 & 108.0 & 2366.2 & 2399.8 &  97.0
\\
2014 & 2516 & 695.3 & 675.8 & 114.4 & 372.9 & 323.0 & 107.5 & 2805.6 & 2847.5 & 102.1
\\[-10pt]
\\
\multicolumn{11}{c}{\textsc{Spanish Continuous Family Expenditure Survey (ECPF)}} \\
\toprule
& & \multicolumn{3}{c}{Clothing} & \multicolumn{3}{c}{Food out} & \multicolumn{3}{c}{Nondurables} \\
Year & $ N $ & Mean & IQR & P & Mean & IQR & P & Mean & IQR & P \\ \midrule
1985 & 65 & 1284.4 & 1406.4 & 165.7 & 940.9 & 899.7 & 174.3 &  35.1 &  43.8 & 150.6
\\
1986 & 95 & 1334.8 & 1545.3 & 191.6 & 927.3 & 1128.1 & 224.7 &  28.8 &  47.6 & 164.5
\\
1987 & 288 & 1743.3 & 1897.1 & 174.2 & 1054.7 & 1244.8 & 191.5 &  37.0 &  53.8 & 157.2
\\
1988 & 195 & 1537.6 & 1831.0 & 158.1 & 1036.2 & 1400.9 & 160.0 &  42.6 &  53.3 & 145.8
\\
1989 & 225 & 2253.9 & 2344.4 & 134.3 & 1446.3 & 1685.6 & 139.6 &  57.2 &  70.5 & 140.7
\\
1990 & 205 & 2289.9 & 2565.5 & 106.6 & 1636.4 & 2152.1 & 112.2 &  41.0 &  53.2 & 101.9
\\
1991 & 210 & 2255.2 & 2398.5 & 183.5 & 1852.4 & 2229.4 & 208.7 &  52.5 &  66.7 & 160.8
\\
1992 & 202 & 2652.5 & 2795.1 & 154.5 & 1852.6 & 1957.9 & 154.5 &  69.1 &  85.4 & 144.0
\\
1993 & 185 & 2823.0 & 2471.3 & 112.4 & 2386.2 & 3022.2 & 121.0 &  75.7 &  80.8 & 114.5
\\
1994 & 210 & 2102.8 & 2471.0 & 106.4 & 2322.7 & 2730.5 & 111.2 &  79.8 &  97.4 & 102.9
\\
1995 & 194 & 2186.9 & 2287.9 & 113.6 & 2068.9 & 2187.8 & 122.2 & 117.8 & 118.3 & 114.4
\\
1996 & 199 & 2397.4 & 2595.2 & 126.5 & 2761.4 & 3230.4 & 126.6 & 107.5 & 129.6 & 129.5
\\
\bottomrule
\end{tabular}
 }}
\label{tab:descriptive.onebig.priv}
\vspace{-\baselineskip}
\floatnote{Descriptive statistics of the LISS, RMLS and ECPF reporting mean, interquantile range (IQR) and price index P. LISS quantities consumed per month are inflated to 2005 prices and denoted in Euro (source: Eurostat \url{http://www.ecb.europa.eu/stats/prices/hicp/html/hicp\_coicop\_inx\_index.en.html}). RMLS quantities are per week and inflated to 2014 prices and denoted in local currency (Russian Ruble). Goods are aggregated to composite good categories as follows. Dairy: Canned/powdered milk, fresh milk, sour milk products and sour cream; Bread: White (wheat) bread and black (rye) bread; Meat: Canned meat, beef/veal, lamb/goat, pork, giblets, poultry, lard, sausage and semi-prepared meat products. ECPF consumption is per week with quarterly collection frequency. We only report descriptive statistics of the first quarter of a given year. ECPF quantities are normalized to arbitrary units using the price indices P.}
\end{table}}

{\singlespacing
\begin{table}[htbp]
\caption{Descriptive Statistics (Public Goods)}
\resizebox{0.95\columnwidth}{!}{
{\centering
\begin{tabular}{ccccccccccc}
\multicolumn{11}{c}{\textsc{Longitudinal Internet Studies for the Social Sciences (LISS)}} \\
\toprule
& & \multicolumn{3}{c}{Housing} & \multicolumn{3}{c}{Transport} & \multicolumn{3}{c}{Energy} \\
Year & $ N $ & Mean & IQR & P & Mean & IQR & P & Mean & IQR & P \\ \midrule
2009 & 5594 & 590.6 & 460.0 & 108.1 & 141.1 & 150.0 & 107.4 & 282.0 & 123.0 & 98.6 \\
2010 & 5337 & 600.1 & 449.5 & 109.7 & 135.2 & 150.0 & 105.3 & 210.4 & 125.5 & 98.0 \\
2012 & 5463 & 577.4 & 475.0 & 113.4 & 148.2 & 150.0 & 106.6 & 219.1 & 116.0 & 100.4 \\[-10pt]
\\
\multicolumn{11}{c}{\textsc{Russian Longitudinal Monitoring Survey (RMLS)}} \\
\toprule
& & \multicolumn{3}{c|}{Housing} & \multicolumn{3}{c|}{Transport} & \multicolumn{3}{c}{Energy} \\
Year & $ N $ & Mean & IQR & P & Mean & IQR & P & Mean & IQR & P \\ \midrule
2000 & 1506 & 4388.4 & 6051.9 & 116.3 & 2942.3 & 670.7 &  29.0 & 3454.2 &   0.0 &   5.7 \\
2005 & 1601 & 7961.1 & 9264.3 & 118.0 & 2996.0 & 2282.5 & 117.9 & 3808.0 & 1280.2 &  18.0 \\
2010 & 2839 & 12336.3 & 11699.2 & 102.7 & 3697.7 & 2437.6 & 240.0 & 3976.4 & 2226.7 &  43.8 \\
2011 & 2983 & 12324.6 & 11744.0 & 112.1 & 3697.0 & 2500.1 & 234.0 & 4496.1 & 2500.1 &  48.3 \\
2012 & 3154 & 12349.8 & 11444.4 & 112.1 & 3799.7 & 2699.0 & 250.5 & 4413.2 & 3922.6 &  55.3 \\
2013 & 3076 & 13018.6 & 11098.5 & 103.6 & 4109.9 & 2591.4 & 270.9 & 4958.9 & 4311.5 &  63.6 \\
2014 & 2516 & 13576.1 & 11050.0 & 105.1 & 4169.0 & 2550.0 & 275.3 & 5238.0 & 5100.0 &  63.8 \\[-10pt]
\\
\multicolumn{11}{c}{\textsc{Spanish Continuous Family Expenditure Survey (ECPF)}} \\
\toprule
& & \multicolumn{3}{c|}{Clothing} & \multicolumn{3}{c|}{Transport} & \multicolumn{3}{c}{Petrol} \\
Year & $ N $ & Mean & IQR & P & Mean & IQR & P & Mean & IQR & P \\ \midrule
1985 & 65 & 1284.4 & 1406.4 & 165.7 & 553.0 & 942.6 & 188.9 & 183.4 &  41.8 &  92.7 \\
1986 & 95 & 1334.8 & 1545.3 & 191.6 & 647.9 & 1002.6 & 277.0 & 120.8 &   0.0 & 112.4 \\
1987 & 288 & 1743.3 & 1897.1 & 174.2 & 612.5 & 1000.7 & 213.2 & 190.3 &  48.6 &  98.2 \\
1988 & 195 & 1537.6 & 1831.0 & 158.1 & 658.1 & 1050.2 & 154.2 & 244.0 & 172.8 &  86.4 \\
1989 & 225 & 2253.9 & 2344.4 & 134.3 & 681.2 & 1120.6 & 119.6 & 397.2 & 314.0 &  95.1 \\
1990 & 205 & 2289.9 & 2565.5 & 106.6 & 617.2 & 1017.1 & 115.1 & 451.4 & 254.2 & 114.6 \\
1991 & 210 & 2255.2 & 2398.5 & 183.5 & 999.5 & 1252.6 & 244.6 & 493.3 & 398.2 & 105.0 \\
1992 & 202 & 2652.5 & 2795.1 & 154.5 & 838.8 & 1307.8 & 147.2 & 459.3 & 437.0 &  90.6 \\
1993 & 185 & 2823.0 & 2471.3 & 112.4 & 1099.0 & 1631.4 & 119.9 & 481.2 & 318.0 & 122.4 \\
1994 & 210 & 2102.8 & 2471.0 & 106.4 & 1101.0 & 1559.4 & 114.4 & 752.0 & 959.9 & 113.4 \\
1995 & 194 & 2186.9 & 2287.9 & 113.6 & 1005.7 & 1470.0 & 119.4 & 497.9 & 561.0 & 125.1 \\
1996 & 199 & 2397.4 & 2595.2 & 126.5 & 1302.5 & 1954.3 & 112.1 & 667.5 & 938.7 & 106.3 \\
\bottomrule
\end{tabular}
 }}
\label{tab:descriptive.onebig.pub}
\vspace{-\baselineskip}
\floatnote{Descriptive statistics of the LISS, RMLS and ECPF reporting mean, interquantile range (IQR) and price index P. LISS quantities consumed per month are inflated to 2005 prices (CPI and HPI) and denoted in Euro (source: Eurostat \url{http://www.ecb.europa.eu/stats/prices/hicp/html/hicp\_coicop\_inx\_index.en.html}). RMLS quantities are per week and inflated to 2014 prices and denoted in local currency (Russian Ruble). Goods are aggregated to composite good categories as follows. Transport: Transportation services, running costs for cars (excluding fuel) and Energy: Fuel, Gas, Coal and Firewood. ECPF consumption is per week with quarterly collection frequency. We only report descriptive statistics of the first quarter of a given year. ECPF quantities are normalized to arbitrary units using the price indices P. We chose a combination of private and public goods due to the limited availability of the latter.}
\end{table}}

 \end{document}